 \let\@copyrightspace\relax
\begin{document}

\newtheorem{theorem}{Theorem}[section]
\newtheorem{definition}[theorem]{Definition}
\newtheorem{lemma}[theorem]{Lemma}
\newtheorem{corollary}[theorem]{Corollary}
\newcommand{\pl}{\alpha}
\newcommand{\degth}{\delta}
\newcommand{\prob}[1]{P(#1)}
\newcommand{\dir}[1]{\vec{#1}}
\newcommand{\vdeg}[1]{deg(#1)}
\newcommand{\poly}{\mathrm{poly}}
\newcommand{\polylog}{\mathrm{polylog}}
\newcommand{\E}{\mathrm{E}\,}
\newcommand{\Var}{\mathrm{Var}\,}
\newcommand{\dpl}{PLB\xspace}
\newcommand{\outdeg}{\textrm{outdeg}}
\newcommand{\param}{\eta}
\newcommand{\ctr}{\textsc{CountTriangles}\xspace}
\newcommand{\ctrfmm}{\textsc{CountTrianglesFMM}\xspace}
\newcommand{\fullversion}[1]{#1}
\newcommand{\exabstract}[1]{}

\definecolor{powerlaw}{HTML}{F2F2F2}

\begin{titlepage}
\def\thepage{}
\thispagestyle{empty}

\title{Algorithmic Complexity of Power Law Networks}

\author{
Paweł Brach\footnote{University of Warsaw, \texttt{pawel.brach@mimuw.edu.pl}, Supported by the ERC StG PAAl project no. 259515.}
\and
Marek Cygan\footnote{University of Warsaw, \texttt{cygan@mimuw.edu.pl}}
\and
Jakub Łącki\footnote{University of Warsaw, \texttt{j.lacki@mimuw.edu.pl}. Jakub Łącki is a recipient of the Google Europe Fellowship in Graph Algorithms, and this research is supported in part by this Google Fellowship.}
\and
Piotr Sankowski\footnote{University of Warsaw, \texttt{sank@mimuw.edu.pl}, Supported by the ERC StG PAAl project no. 259515.}
}


\maketitle

\begin{abstract}
It was experimentally observed that the majority of real-world networks are scale-free and follow power law degree distribution.
The aim of this paper is to study the algorithmic complexity of such ``typical'' networks. The contribution of this work is twofold.

First, we define a deterministic condition for checking whether a graph has a power law degree distribution and experimentally validate it on real-world networks.
This definition allows us to derive interesting properties of power law networks. We observe that for exponents of the degree distribution in the range $[1,2]$ such networks exhibit double power law phenomenon that was observed for several real-world networks. Our observation indicates that this phenomenon could be explained by just pure graph theoretical properties.

The second aim of our work is to give a novel theoretical explanation why many algorithms run faster on real-world data than what is predicted by algorithmic worst-case analysis.
We show how to exploit the power law degree distribution to design faster algorithms for a number of classical P-time problems including transitive closure, maximum matching, determinant, PageRank and matrix inverse.
Moreover, we deal with the problems of counting triangles and finding maximum clique.
Previously, it has been only shown that these problems can be solved very efficiently on power law graphs when these graphs are random, e.g., drawn at random from some distribution. However, it is unclear how to relate such a theoretical analysis to real-world graphs, which are fixed.
Instead of that, we show that the randomness assumption can be replaced with a simple condition on the degrees of adjacent vertices, which can be used to obtain similar results.
Again, we experimentally validate that many real-world graphs satisfy our property.
As a result, in some range of power law exponents, we are able to solve the maximum clique problem in polynomial time, although in general power law networks the problem is NP-complete.

In contrast to previously done average-case analyses, we believe that this is the first ``waterproof'' argument that explains why many real-world networks are easier. Moreover, an interesting aspect of this study is the existence of structure oblivious algorithms, i.e., algorithms that run faster on power law networks without explicit knowledge of this fact or explicit knowledge of the parameters of the degree distribution, e.g., algorithms for maximum clique or triangle counting.

\end{abstract}
\end{titlepage}


\section{Introduction}\label{sec:introduction}
One of the most interesting observations in our understanding of complex networks is that for most large networks the degree
distribution closely resembles a power law distribution~\cite{RevModPhys.74.47}, i.e., the number of nodes
of degree $d$ is proportional to $d^{-\pl}$ for some $\pl> 1$. Such networks are called scale-free and many
models explaining their emergence have been proposed -- the most important one being the preferential attachment model~\cite{ba}.
The aim of this work is to study the algorithmic complexity of such ``typical'' networks and its contribution is twofold.

First, we define a deterministic condition for checking whether a graph has a power law degree distribution and show that many real-world networks satisfy it.
Graphs satisfying this condition are called power law bounded networks (\dpl).\footnote{For formal definition see Definition~\ref{def:dpl}.}
This definition allows us to derive new interesting properties of power law networks.
We observe that for $\pl \in [1,2]$ \dpl graphs with no parallel edges (simple graphs) need to exhibit double power law phenomenon.
This means that the degree distribution of vertices with sufficiently high degrees is different and has higher exponent.
This faster decay in the distribution was observed for some existing simple graphs and usually was attributed to
some complex processes~\cite{refId0}. Our results indicate that this phenomenon may have
a basic explanation that uses only pure graph theoretical properties. Essentially, we show that when $\pl \in [1,2]$ there are not enough
low degree vertices that can be connected to high degree vertices, and so the number of high degree vertices
needs to be lower and cannot be proportional to $d^{-\pl}$. This observation implies that
for $\pl \in [1,2]$ simple \dpl graphs have only $O(n^{2/\pl})$ edges. This contrasts with the
expected number of edges in power law multigraphs which is $O(n^{3-\pl})$.

The second contribution of this paper is the attempt to reduce the dichotomy in current research in algorithms, where two rarely interacting directions are pursued.
On one hand, theoreticians work on optimizing the performance of algorithms in the worst-case model. This is an important line of research
that has given us some beautiful algorithms and solutions. There are many success stories: a number of practically efficient
algorithms have been developed only thanks to this rigorous worst-case model, e.g., Dijkstra shortest paths algorithm. On the other hand,
there are problems where the best solutions that are used in practice have nothing in common with the state-of-art
algorithms proposed by theoreticians. This is clearly visible in the case of the Steiner tree problem, as exemplified
by last year's DIMACS implementations challenge. As shown, e.g., in~\cite{DIMACS} the algorithm of Byrka {\it et al.}~\cite{byrka}
with the best known theoretical approximation ratio, cannot be used on instances of larger size, because it is too inefficient. Moreover,
even on instances of smaller size it delivers worse results than the best metaheuristic approach based on local search~\cite{wer}.
The number of examples where heuristic approaches outperform ``worst-case'' algorithms is enormous. Intuitively, this is due to the fact that when one prepares for the worst case then the typical case will be handled in suboptimal way. Standard ways of overcoming this shortcoming are to work with stochastic models or random graphs, or use
smoothed analysis. For example, in online stochastic models it is sometimes possible to obtain better bounds on expected cost
of the algorithm than what is implied by worst-case competitive ratio~\cite{garg,setcover}. On the other hand, there are cases
where smoothed analysis allows us to obtain polynomial running time in expectation instead of exponential one~\cite{smoothed}.

However, the answers given by these stochastic models are still far from being satisfactory. Consider the rumor spreading process in a
social network, e.g., Twitter. It was observed that rumors spread extremely fast in such networks.
The paper~\cite{Doerr:2011:SNS:1993636.1993640} tries to give the following explanation for this observation.
Social networks have properties similar to networks obtained from preferential attachment model~\cite{ba},
so one tries to argue that fast spread of rumors in such random networks explains the rapid spread of rumors in real-world networks.
This explanation has the following shortcomings. First, it has been observed that although many properties
of social networks are explained well by this model, there are some properties that are not captured by it. For
example a better model is to use affiliation networks~\cite{Lattanzi:2009}. Even if social networks were random we would newer
know that we have a precise model for them, as we might always miss some important property. Hence, this
argument is far from explaining the observations. Second, there exists just one instantiation of any social network and there
is no way we can see distribution of all random Twitter networks that is needed for this argument. Besides, as there is just one example
of a social network it might be the unlucky one for the stochastic model that lies outside the whp statement. Finally
and most importantly, social networks are not random at all! They represent real-world ties, e.g., friendships
which are far from begin random.

In this paper we introduce the concept of a \dpl network, which gives a novel ``waterproof'' worst-case approach that overcomes the aforementioned problems and explains why many real-world networks
are easier.
We prove that on \dpl networks many problems have lower complexity than what is implied by classical solutions. The problems that we are able to solve faster include basic P-time problems: transitive closure, perfect matching, PageRank and counting triangles. Additionally, we show that the NP-hard problem of finding maximum clique allows a subexponential time algorithm in PLB networks. An interesting aspect of this study is the existence of structure oblivious algorithms, i.e., algorithms that run faster on \dpl networks without explicit knowledge of this fact. These structure oblivious algorithms shed some light on why some existing heuristic approaches are so efficient in practice, e.g., sorting vertices by degrees is the first step in many heuristic approaches to maximum clique problem~\cite{Carraghan1990375}.

Explaining why many algorithms work faster on real-world instances than what is predicted by worst-case analysis is one of the grand questions in algorithm that did not receive a plausible answer so far.
A notable example is the SAT problem~\cite{sat}.
Our paper gives a possible answer to this grand challenge and calls for further research in this direction. On one hand, we shall
search for faster solutions to other problems.
On the other hand, we believe that real-world power law networks have more worst-case graph properties that can be exploited in the design and analysis of algorithms.
In particular, we have observed that in a number of power law graphs with $\pl > 2$, every vertex of degree $k$ has $o(k)$ neighbors of degree at least $k$ (we say that the graph has \dpl neighborhoods).\footnote{For formal statement see Definition~\ref{def:dpl_neigh}.}

We have experimentally confirmed that this property is present in a number of real-world networks.
This property can effectively replace the randomness assumption about the graph that has been introduced in previous works and we use it to obtain faster algorithms for counting triangles and maximum clique problem.
In particular it implies that for $\pl>3$ our maximum clique algorithm works in polynomial time.
This observation clearly contrasts with the proof that the clique problem is NP-hard on power law networks for any $\pl>1$~\cite{lin}, and implies that it should be
possible to efficiently find maximum cliques in numerous real-world networks, in which $\pl>3$.

\subsection{Our Results and Related Work}
We study the algorithmic complexity of power law networks in a worst-case model.
Our work is somewhat related to the area of average-case analysis of algorithms, which tries to explain why some algorithm are
fast on real-world data.
However, we do not use the randomness of the data.
Instead, we identify graph properties that can be exploited to give efficient algorithms.
We stress that we are only interested in properties that can be decided deterministically.
We also show the our model is general, by proving that one of the basic random power law network
model generates \dpl graphs with high probability.



\paragraph{Counting Triangles}
The problem of finding or counting triangles in a graph can be solved in $O(n^{\omega})$ time or in
$O(m^{\frac{2\omega}{\omega+1}})$ time using fast matrix multiplication~\cite{triangles}. There has been some
work that tried to show faster algorithms for counting triangles in power law graphs.
Latapy~\cite{Latapy2008458} has shown two $O(mn^{1/\pl})$ time algorithms, where $m$ is the number of edges in the graph.
Moreover, Berry et. al~\cite{berry2014} have shown that in random power law graphs, generated by erased configuration model, triangles can be counted in $O(n\Delta^{7-3\pl})$ time, where $\Delta$ is the maximum vertex degree in the graph.
Since the model assumes that $\Delta/\sqrt{m} \leq 1/2$, for $\pl \in (2, 7/3)$ this gives a $O(n^{9/2-3/2\pl})$ time algorithm ($\pl > 2$ implies $m = O(n)$) and a linear time algorithm for $\pl > 7/3$.
However, as the authors admit this algorithm requires the graph to be random and does not fully apply to real-world graphs.
In addition, the assumption that $\Delta / \sqrt{m} \leq 1/2$ may be unrealistic, as it is satisfied in only few of the real-world networks that we have analyzed (see Table~\ref{table:plb_adjustment}).

We show that a very basic and widely used triangle counting algorithm works faster than what has been demonstrated by Latapy.
This simple algorithm processes nodes in increasing order of their degrees, computes the number of triangles incident to each vertex, and then removes the processed vertex.
A simple analysis shows that this algorithm runs in $O(n^{3/\pl})$ time for $1 < \pl < 3$, $O(n \log n )$ time
for $\pl = 3$, and $O(n)$ time for $\pl > 3$.
Additionally, for graphs with \dpl neighborhoods this algorithm runs in $O(n^{9/2-3/2\pl})$ time for $2 < \pl < 7/3$, and $\tilde{O}(n)$ time for $\pl \geq 7/3$.
These bounds visibly improve the running time of Latapy's algorithm for $\pl>2$ and match the results of Berry et. al~\cite{berry2014} (up to logarithmic factors) that have been obtained under full-randomness assumption. Moreover, when applied to random networks as in~\cite{berry2014}, our framework implies stronger whp bounds instead of bounds in expectations. We note that our algorithms are structure oblivious and do not need to know that the graph is \dpl or has \dpl neighborhoods to run in the above bounds. These running times are shown in Fig.~\ref{fig:times} and can be slightly improved by using fast matrix multiplication.

\paragraph{Maximum Clique}
The fastest algorithm for finding maximum cliques in general graphs runs in $O(1.2125^n)$ time~\cite{mis}.
Moreover, Chen et. al~\cite{Chen} have shown that maximum clique cannot be solved in subexponential time unless exponential
time hypothesis fails. We note that the maximum clique problem is NP-hard on power law graphs~\cite{lin}.
Janson, Łuczak and Norros~\cite{janson} have shown that for $\pl > 2$ maximal clique in a power law graph can be found in polynomial time and approximated for any $\pl$.
However, they assume that the graph is created using random Poissonian model.
In this paper we show that on \dpl graphs the problem can be solved in subexponential $\exp(O(n^{1/\pl}))$ time.
Additionally, when the graph has \dpl neighborhoods our algorithm runs in $\exp(O(n^{3/2 - \pl/2}\log n))$ time for $2 < \pl < 3$ and $O(\poly(n))$ time for $\pl > 3$.

\paragraph{Transitive Closure}
The transitive closure of a graph $G$ can be either computed in $O(nm)$ time by executing $n$ graph searches, or
in $O(n^{\omega})$ time using block recursion and fast matrix multiplication. We show that this
running time can be improved when $1<\pl<2$ -- see Fig.~\ref{fig:sank_times}.

\paragraph{Algebraic Matrix Algorithms}
There are two complexity results for the computation of the determinant of a $n\times n$ matrix $A$ over a finite fields:\footnote{We discuss here only the finite field case as it is the most relevant case for TCS.}
 (i) fast matrix multiplication to obtain $O(n^{\omega})$ time algorithm\footnote{$\tilde O(n^{\omega})$ is the time needed for a straight-line program to multiply two $n\times n$ matrices; $\omega$ is called matrix multiplication exponent. Currently $\omega< 2.373$~\cite{Williams}.} or (ii)
Wiedemann's approach that works in $\tilde{O}(nm)$ time, where $m$ is the number of nonzero entries in a matrix. 
We note that there are many heuristic approaches that are used in practice to speed up matrix computations, e.g., minimum 
degree algorithm~\cite{doi:10.1137/1031001}, but these ideas do not improve the worst-case complexities that are stated above. 
Here, we are only interested in obtaining a worst-case bound on the arithmetic complexity of these problems and therefore we will not review this rich body of 
literature. We note that our approach is related to minimum degree algorithm, because as the first step we partition 
the matrix into dense and sparse part according to the number of nonzero entries in each row or column. However, after this 
step novel algorithms are proposed that exploit the structure of the matrix. 

We will assume that
the non-zero structure of $A$ corresponds to an \dpl graph $G$, i.e., $a_{ij}\neq0$ if and only if $ij\in E(G)$. We are
able to show faster algorithms for the case when $1<\pl<2$. In particular our algorithm in the
case of symmetric matrices works in $O(n^{2+\frac{(\omega-2)(2-\pl)}{(\omega-2)\pl+3-\omega}})$ time -- see Fig.~\ref{fig:sank_times}
for the running time in the case of symmetric and general matrices.

Additionally, we show that with the same complexities
it is possible to solve linear system with matrix $A$, invert matrix $A$, and compute PageRank of a graph represented by $A$.
PageRank is a very simple version of the eigenproblem and our results could indicate that a general eigenproblem could be solved
faster on \dpl graphs. Developing such faster algorithms for eigenproblem, characteristic polynomial or even matrix rank is
left as an intriguing open problem.

\paragraph{Perfect Matching}
There are several algorithms known for finding perfect matching in general graphs: $O(\sqrt{n}m)$ time algorithm~\cite{mv}, $O(n^{\omega})$ time
algorithm~\cite{ms04} and $O(m^{10/7})$ time algorithm~\cite{Madry13}. Here, basing on our results for computing matrix determinant
we show an algorithm that improves over these results when $\pl<1.09$ -- see Fig.~\ref{fig:sank_times}. We conjecture, however, that an improvement is possible for $\pl \in [1,2]$.

\paragraph{Organization of the Paper}
The following part of this paper is organized as follows. \fullversion{In Section~\ref{sec:preliminaries} we introduce basic notation and show some general properties that we later use.}
In Section~\ref{sec:plb} we define the class of \dpl graphs and show their basic properties.
Then, in Section~\ref{sec:def_experiments} we verify our definitions on real-world data.
Section~\ref{sec:algorithms} analyses very simple algorithms for counting triangles and finding maximum clique on \dpl graphs.
Finally, in Section~\ref{sec:algebraic} we present more advanced algebraic algorithms for \dpl graphs that compute the transitive closure, find the perfect matching, and compute the determinant.

\fullversion{\section{Preliminaries}\label{sec:preliminaries}}
\exabstract{\paragraph{Preliminaries}}
Let $G$ be a graph.
Throughout the paper we use $n$ to denote the number of vertices in a graph, $d_k$ to denote the number of vertices of degree $k$, and $d_{\geq k}$ to denote the number of vertices of degree at least $k$.
It should be clear from the context, which graph we refer to.
We assume that the graphs we work with are simple, i.e., they do not contain multiple edges.
In the majority of problems that we study (e.g., transitive closure or maximal clique) multiple edges are not important and can be simply removed.
We assume that $\log n$ denotes the binary logarithm function.

\fullversion{
\begin{lemma}\label{lem:sum}
Let $1 \leq a \leq b$, for $a,b \in \mathbb{N}$, and let $c$ be a constant.
Then
\[
    \sum_{i=a}^b i^c = \begin{cases}
        O(b^{c+1}) & \text{if } c > -1\\
        O(\log(b/a)) & \text{if } c = -1\\
        O(a^{c+1}) & \text{if } c < -1\\
    \end{cases}
\]
\end{lemma}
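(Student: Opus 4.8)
The plan is to compare the sum $\sum_{i=a}^b i^c$ against the integral $\int_a^b x^c\,dx$, which has a closed form, and to handle the three regimes of $c$ separately. The standard tool is that for a monotone summand the sum and the integral differ by at most one boundary term. Concretely, when $c \ge 0$ the function $x^c$ is nondecreasing, so $i^c \le \int_i^{i+1} x^c\,dx$ and hence $\sum_{i=a}^b i^c \le \int_a^{b+1} x^c\,dx$; when $c < 0$ it is nonincreasing, so $i^c \le \int_{i-1}^i x^c\,dx$ for every $i \ge 2$, which gives $\sum_{i=a}^b i^c \le a^c + \int_a^b x^c\,dx$. Pulling out the first term in the decreasing case is exactly what lets me avoid integrating $x^c$ through the region near $0$ where it blows up, and this bound holds uniformly for all $a \ge 1$ (for $a=1$ the pulled-out term is $1^c = 1$).

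For $c > -1$ I would evaluate $\int_a^b x^c\,dx = \frac{b^{c+1}-a^{c+1}}{c+1}$ (or $\int_a^{b+1}$ in the increasing subcase). Since $c+1 > 0$ the dominant contribution is $b^{c+1}$, and the residual boundary terms are all $O(b^{c+1})$ because $b \ge 1$ and $c$ is a fixed constant: in the subcase $-1 < c < 0$ the extra term satisfies $a^c \le 1 \le b^{c+1}$, and in the subcase $c \ge 0$ the discrepancy between $(b+1)^{c+1}$ and $b^{c+1}$ is only a constant factor, as $(b+1)^{c+1} \le 2^{c+1} b^{c+1}$. This yields $\sum_{i=a}^b i^c = O(b^{c+1})$.

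For $c < -1$ the same inequality $\sum_{i=a}^b i^c \le a^c + \frac{b^{c+1}-a^{c+1}}{c+1}$ applies, but now $c+1 < 0$, so I rewrite the integral term as $\frac{a^{c+1}-b^{c+1}}{|c+1|} \le \frac{a^{c+1}}{|c+1|} = O(a^{c+1})$, using $b^{c+1} \ge 0$. The boundary term obeys $a^c = a^{c+1}/a \le a^{c+1}$ since $a \ge 1$, so the whole sum is $O(a^{c+1})$, as claimed. Here the point is that the lower endpoint dominates: the summand decays fast enough that the tail converges and the bound is controlled by the first few terms.

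The only genuinely delicate regime is the borderline $c = -1$, where $\int_a^b x^{-1}\,dx = \ln(b/a)$ and the answer is logarithmic. For $a \ge 2$ the comparison $\sum_{i=a}^b 1/i \le \int_{a-1}^b x^{-1}\,dx = \ln\frac{b}{a-1}$, combined with $a-1 \ge a/2$, gives $\ln\frac{b}{a-1} \le \ln 2 + \ln(b/a) = O(\log(b/a))$, while the case $a=1$ reduces to the harmonic bound $\sum_{i=1}^b 1/i = O(\log b) = O(\log(b/a))$. I expect this boundary case to be the main obstacle. The switch between natural and binary logarithms only changes constants and is harmless, but one must read $O(\log(b/a))$ loosely when $a$ and $b$ are close (the sum is positive while $\log(b/a)$ can be zero); in every application of the lemma $b$ is polynomially larger than $a$, so this loose reading causes no trouble.
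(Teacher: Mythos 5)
Your proof is correct and follows essentially the same route as the paper's: both bound the sum by comparison with $\int x^c\,dx$ and then evaluate the antiderivative in the three regimes (the paper packages the comparison as $\sum_{i=a}^b i^c = \int_a^{b+1}\lfloor x\rfloor^c\,dx = O(1)\int_a^{b+1}x^c\,dx$, while you use the standard monotone shift of the integration interval, but this is only a presentational difference). The borderline looseness you flag at $c=-1$ when $b/a$ is close to $1$ is equally present in the paper's own proof and statement, so it is not a defect of your argument.
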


Note that, throughout the paper we assume that for $b < a$, and any function $f$, $\sum_{i=a}^b f(i) = 0$.
\begin{proof}
For $i \geq a \geq 1$ we have $\lfloor i \rfloor^c = O(i^c)$.
Thus,
\[
    \sum_{i=a}^b i^c = \int_a^{b+1} \! \lfloor x \rfloor^c \, \mathrm{d}x = O(1) \int_a^{b+1} \! x^c \, \mathrm{d}x
\]

For $c \neq -1$ we have
\[
\int_a^{b+1} \! x^c \, \mathrm{d}x = \frac{1}{c+1}((b+1)^{c+1} - a^{c+1})
\]
If $c > -1$, then $\frac{1}{c+1} > 0$, so we we can bound the expression by $O((b+1)^{c+1}) = O(b^{c+1})$.
Otherwise, if $c < -1$, then $\frac{1}{c+1} < 0$, so we can bound it by $O(a^{c+1})$.
It remains to consider the case when $c = -1$:
\[
\int_a^{b+1} \! x^c \, \mathrm{d}x = \int_a^{b+1} \! x^{-1} \, \mathrm{d}x = \log(b+1) - \log a = O(\log(b/a))
\]
\end{proof}
}

\fullversion{
We also have a reverse relation:
\begin{lemma}\label{lem:sum-rev}
Let $1 \leq a \leq b/2$, for $a,b \in \mathbb{N}$, and let $c > 0$ be a constant.
Then $a^{-c} = O(\sum_{i=a}^b i^{-c-1})$.
\end{lemma}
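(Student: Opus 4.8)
The plan is to establish a lower bound on the sum by comparing it to an integral, and then to exploit the hypothesis $a \leq b/2$ to show that the contribution from the lower endpoint $a$ dominates by a constant factor. The statement $a^{-c} = O(\sum_{i=a}^b i^{-c-1})$ is exactly the reverse direction of the $c < -1$ case of Lemma~\ref{lem:sum} (applied with exponent $-c-1$), which only gives the matching upper bound $\sum_{i=a}^b i^{-c-1} = O(a^{-c})$; so the task is to produce a lower bound of the same order.

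First I would note that the function $f(x) = x^{-c-1}$ is positive and decreasing on $[a,\infty)$, since $-c-1 < 0$. For a decreasing function one has $f(i) \geq \int_i^{i+1} f(x)\,\mathrm{d}x$, and summing over $i$ from $a$ to $b$ yields the integral lower bound
\[
    \sum_{i=a}^b i^{-c-1} \geq \int_a^{b+1} \! x^{-c-1} \, \mathrm{d}x = \frac{1}{c}\bigl(a^{-c} - (b+1)^{-c}\bigr),
\]
where the evaluation uses the antiderivative $-\tfrac{1}{c}x^{-c}$ of $x^{-c-1}$ (valid since $c > 0$, so $-c \neq 0$).

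The step where the hypothesis enters is bounding the unwanted term $(b+1)^{-c}$ away from $a^{-c}$. Since $a \leq b/2$ we have $b+1 \geq 2a$, and because $x \mapsto x^{-c}$ is decreasing this gives $(b+1)^{-c} \leq (2a)^{-c} = 2^{-c} a^{-c}$. Substituting into the previous display, the sum is at least $\frac{1}{c}(1 - 2^{-c})\,a^{-c}$. As $c > 0$ is a constant, the prefactor $\frac{1}{c}(1 - 2^{-c})$ is a strictly positive constant, which is precisely the claimed inequality $a^{-c} = O(\sum_{i=a}^b i^{-c-1})$.

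I do not anticipate a serious obstacle here; the only point requiring care is getting the direction of the integral comparison correct for a decreasing summand (so that the sum bounds the integral from above, not below), and recognizing that the condition $a \leq b/2$ is exactly what guarantees the summation range is long enough for the tail term $(b+1)^{-c}$ to be a constant factor smaller than the leading term $a^{-c}$. Without such a separation between $a$ and $b$ the telescoped difference $a^{-c} - (b+1)^{-c}$ could be an arbitrarily small fraction of $a^{-c}$, and no uniform constant would exist.
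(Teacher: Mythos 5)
Your proof is correct and follows essentially the same route as the paper: lower-bound the sum by $\int_a^{b+1} x^{-c-1}\,\mathrm{d}x = \frac{1}{c}(a^{-c}-(b+1)^{-c})$ and then use $a\leq b/2$ to absorb the boundary term. In fact your version is cleaner than the paper's, whose displayed chain replaces $(b+1)^{-c}$ by $(a/2+1)^{-c}$ — apparently a typo for $(2a+1)^{-c}$, since $a^{-c}-(a/2+1)^{-c}$ is negative for $a>2$ — whereas your bound $(b+1)^{-c}\leq(2a)^{-c}=2^{-c}a^{-c}$ gives the intended constant $\frac{1}{c}(1-2^{-c})$ correctly.
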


\begin{proof}
\[
    \sum_{i=a}^b i^{-c-1} \geq \int_a^{b+1} \! x^{-c-1} \, \mathrm{d}x = \frac{1}{c} (a^{-c} - (b+1)^{-c}) \geq \frac{1}{c} (a^{-c} - (a/2+1)^{-c}) = \Omega(a^{-c})
\]

\end{proof}
}

\fullversion{
\begin{lemma}\label{lem:sum-t}
Let $c > 0$, $\pl > 1$, and $\degth \geq 1$.
Then
\[
    \sum_{i=1}^\degth (i+t)^{-\pl}i^c = \begin{cases}
        O(\degth^{c+1-\pl}) & \text{if } c > \pl - 1\\
        O(\log \degth) & \text{if } c = \pl -1\\
        O((t+1)^{c+1-\pl}) & \text{if } c < \pl -1\\
    \end{cases}
\]
\end{lemma}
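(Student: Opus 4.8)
The plan is to collapse this two-parameter sum into a one-parameter power sum of the kind handled by Lemma~\ref{lem:sum}, using one of two elementary monotonicity inequalities selected according to whether $c$ sits above or below the threshold $\pl-1$. The key realization up front is that no single bound succeeds in all three regimes, so the first decision is which of the two factors to absorb.

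First, for $c \geq \pl-1$ I would bound $(i+t)^{-\pl} \leq i^{-\pl}$, which is valid since $i+t \geq i \geq 1$ and $\pl>0$. This gives $\sum_{i=1}^{\degth}(i+t)^{-\pl}i^c \leq \sum_{i=1}^{\degth} i^{c-\pl}$, and since $c-\pl \geq -1$ I apply Lemma~\ref{lem:sum} with exponent $c-\pl$, $a=1$, $b=\degth$. When $c>\pl-1$ the exponent exceeds $-1$, so the lemma yields $O(\degth^{\,c-\pl+1})=O(\degth^{\,c+1-\pl})$; when $c=\pl-1$ the exponent is exactly $-1$, so the lemma yields $O(\log(\degth/1))=O(\log\degth)$. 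This dispatches the first two cases immediately, and the boundary case is the only spot where I would double-check that the $O(\log(b/a))$ form from Lemma~\ref{lem:sum} simplifies cleanly (here it does, since $a=1$).

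Second, for $c<\pl-1$ the bound above is too lossy: it would only give $O(1)$, whereas the claimed $O((t+1)^{c+1-\pl})$ can be arbitrarily small as $t\to\infty$. Here I would instead absorb the \emph{polynomial} factor via $i^c \leq (i+t)^c$ (valid since $c>0$), obtaining $\sum_{i=1}^{\degth}(i+t)^{-\pl}i^c \leq \sum_{i=1}^{\degth}(i+t)^{c-\pl} = \sum_{j=t+1}^{t+\degth} j^{c-\pl}$. Now the exponent $c-\pl<-1$, so the $c<-1$ branch of Lemma~\ref{lem:sum}, applied with $a=t+1$ and $b=t+\degth$, gives $O((t+1)^{c-\pl+1})=O((t+1)^{c+1-\pl})$; the hypothesis $a\leq b$ holds because $\degth\geq1$.

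The main obstacle is not computational depth but recognizing that the correct direction of monotonicity flips at $c=\pl-1$. Using $(i+t)^{-\pl}\leq i^{-\pl}$ when $c<\pl-1$ discards the dependence on $t$ entirely (in this regime the sum is dominated by small $i$, where the factor is genuinely $\Theta((t+1)^{-\pl})$ rather than $\Theta(i^{-\pl})$), while using $i^c\leq(i+t)^c$ when $c>\pl-1$ would produce $O((\degth+t)^{c+1-\pl})$, which is too large precisely when $t\gg\degth$. Once the threshold split is in place, each case is a one-line invocation of Lemma~\ref{lem:sum}.
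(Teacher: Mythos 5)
Your proof is correct. For the two cases $c \ge \pl-1$ you do exactly what the paper does: drop the shift via $(i+t)^{-\pl}\le i^{-\pl}$ and invoke Lemma~\ref{lem:sum}. For the case $c<\pl-1$, however, you take a genuinely different and somewhat cleaner route. The paper splits the sum at $i=t$: on $i\le t$ it freezes the first factor at $\Theta((t+1)^{-\pl})$ and sums $i^c$ to get $O((t+1)^{c+1-\pl})$, and on $i>t$ it reverts to the crude bound $i^{c-\pl}$, which is now summable and contributes the same order. You instead absorb the polynomial factor, $i^c\le(i+t)^c$ (using $c>0$), collapsing everything to the single shifted power sum $\sum_{i=1}^{\degth}(i+t)^{c-\pl}$ with exponent below $-1$, whose value is governed by its smallest argument $t+1$. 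Your version avoids the case split within the case and makes the mechanism (the sum is dominated by small $i$, where the factor is genuinely $(t+1)^{-\pl}$) more transparent; the paper's version has the minor advantage of only ever comparing against sums indexed from $1$, so Lemma~\ref{lem:sum} applies literally. Your reindexing $j=i+t$ implicitly treats $t$ as an integer, whereas the lemma only assumes $t\ge0$; this is cosmetic (replace $t$ by $\lfloor t\rfloor$, which only increases each term since $c-\pl<0$, and note $(\lfloor t\rfloor+1)^{c+1-\pl}=O((t+1)^{c+1-\pl})$), and the paper is equally informal about it elsewhere, but it is worth a sentence. Your closing observation about why each monotonicity direction fails in the other regime is accurate and is precisely the right sanity check.
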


\begin{proof}
If $c-\pl \geq -1$ we simply use the fact that $(i+t)^{-\pl} \leq i^{-\pl}$ and obtain
\[
    \sum_{i=1}^\degth (i+t)^{-\pl}i^c \leq  \sum_{i=1}^\degth i^{c-\pl}.
\]
By Lemma~\ref{lem:sum} this is equal to $O(\degth^{c+1-\pl})$ for $c > \pl -1$, and equal to $O(\log \degth)$ for $c = \pl -1$.

Now consider the case when $c < \pl -1$.
\begin{align*}
\sum_{i=1}^\degth (i+t)^{-\pl}i^c & \leq \sum_{i=1}^t (i+t)^{-\pl}i^c + \sum_{i=t+1}^\degth (i+t)^{-\pl}i^c\\
        & \leq O(1)\sum_{i=1}^t (2t+1)^{-\pl}i^c + \sum_{i=t+1}^\degth i^{c-\pl}\\
        & = O(1) (t+1)^{-\pl}\sum_{i=1}^t i^c + \sum_{i=t+1}^\degth i^{c-\pl}\\
        & = O((t+1)^{c+1-\pl}) + O((t+1)^{c+1-\pl})\\
        & = O((t+1)^{c+1-\pl}).
\end{align*}
\end{proof}
}

\begin{lemma}
\label{lem:inc_edges}
Let $G$ be a graph and $k \geq 0$. The number of edges of $G$ is at most $\sum_{i=1}^{n-1} d_{\geq i}$.
\end{lemma}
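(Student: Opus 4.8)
The plan is to prove the slightly stronger identity $\sum_{i=1}^{n-1} d_{\geq i} = 2|E(G)|$ and then simply discard the factor of two. First I would rewrite the right-hand side by exchanging the order of summation. For a fixed vertex $v$, it is counted in $d_{\geq i}$ precisely when $\vdeg{v} \geq i$; hence, summing over the thresholds $i$, the total contribution of $v$ equals the number of indices $i \in \{1,\dots,n-1\}$ satisfying $i \leq \vdeg{v}$. Because $G$ is simple, every degree satisfies $\vdeg{v} \leq n-1$, so this count is exactly $\vdeg{v}$ and truncating the outer sum at $n-1$ discards nothing.

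Formally, I would write
\[
    \sum_{i=1}^{n-1} d_{\geq i} = \sum_{i=1}^{n-1} \sum_{v : \vdeg{v} \geq i} 1 = \sum_{v \in V(G)} \bigl|\{\,i : 1 \leq i \leq n-1,\ i \leq \vdeg{v}\,\}\bigr| = \sum_{v \in V(G)} \vdeg{v},
\]
where the final equality invokes $\vdeg{v} \leq n-1$ for every $v$.

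The last step is the handshake identity $\sum_{v \in V(G)} \vdeg{v} = 2|E(G)|$, which holds since each edge adds one to the degree of each of its two endpoints. Combining the two displays gives $\sum_{i=1}^{n-1} d_{\geq i} = 2|E(G)| \geq |E(G)|$, which is the claimed bound.

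There is essentially no serious obstacle here; the only point that demands a little care is pairing the interchange of summation with the observation that simplicity of $G$ caps every degree at $n-1$, so that truncating the outer sum exactly at $n-1$ is the right choice and costs nothing. I note that the parameter $k$ does not occur in the statement and so is irrelevant to the argument, and that the resulting bound is in fact loose by a factor of two, which presumably suffices for the intended applications.
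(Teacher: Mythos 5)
Your proof is correct and follows essentially the same route as the paper's: both arguments observe that a vertex of degree $d$ is counted exactly $d$ times in the sum, so the sum equals the total degree, which is twice the number of edges. You simply make the exchange of summation explicit and verify that truncating at $n-1$ loses nothing because $G$ is simple, which the paper leaves implicit.
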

\fullversion{

\begin{proof}
Observe that a a vertex of degree $k$ is counted $k$ times in the sum.
Thus, the sum is equal to the total degree of all vertices, which is twice the number of edges.
\end{proof}
}

\section{Power law bounded networks}\label{sec:plb}
In this section we introduce our definition of a power law bounded network.
There are multiple definitions of power law networks.
Some of them state that in a power law network the number of vertices of degree $k$ is proportional to $k^{-\pl}$ for some parameter $\pl$~\cite{Aiello01}.
In other cases power law is defined with respect to random graphs and only talks about expected degrees of vertices~\cite{ba,RevModPhys.74.47}.
Both these approaches may not be applied to the analysis of algorithms running on real-world networks.
The first one suffers from two serious drawbacks.
First, it is often not stated in a formal way.
Second, it seems that it effectively disallows even a single vertex with high degree.
On the other hand the stochastic definition can only be applied to graphs randomly drawn from some distribution.
This is not the case for real-world graphs, which are fixed.

We introduce the concept of a power law bounded network, which captures the power law behavior of degree distribution that is necessary for the analysis of algorithms.
At the same time it is weak enough to cover many real-world graphs.
Note that this definition for $t=0$ is similar to the one in~\cite{berry2014}.
The main difference is that we do not impose any lower bounds on the numbers of vertices of given degrees.

\begin{definition}
\label{def:dpl}
Let $G$ be an undirected $n$-vertex graph and $c_1 > 0$ be a universal constant.
We say that $G$ is power law bounded (\dpl) for some parameters $1 < \pl = O(1)$ and $t \geq 0$ if for every integer $k \geq 0$, the number of vertices $v$, such that $\deg(v) \in [2^d, 2^{d+1})$ is at most
\[
    c_1 n (t+1)^{\pl-1} \sum_{i={2^d}}^{2^{d+1}-1} (i+t)^{-\pl}.
\]
In the following we say that $G$ is a \dpl{} graph with parameters $\pl$ and $t$.
\end{definition}

Note that the $(t+1)^{\pl-1}$ factor in the above definition is necessary to ensure that the sum of the above upper bounds over all $k$ is $O(n)$.
The above power law distribution that includes the shift by the parameter $t$ is called shifted power law~\cite{10.1371/journal.pone.0024926} and
was observed in different real-world networks. In particular, the parameter $t$ allows us to better fit the degree distributions in our experiments (see Section~\ref{sec:def_experiments}).
As our experiments show, in the networks that we have studied the value of $t$ is very small. However, in general it is unknown whether and how $t$ depends
on other parameters of the network and we are not aware of the models that would describe such dependence. A reasonable assumptions here seems to be
that $t = O(n^{\epsilon})$ for every $\epsilon>0$. However, when discussing some complexities of our algorithms we will for simplicity
sometimes assume that $t=O(\polylog n)$. Hence, the factors in the running time, that depend on $t$ are of secondary importance. In the introduction when discussing
our results we have assumed that $t=0$.

The exact set of graph that satisfy Definition~\ref{def:dpl} obviously depends on the choice of the constant $c_1$.
However, as we later show, many real-world graphs satisfy this definition for a small value of $c_1$, i.e., at most $5$.
At the same time, the running time dependency of our algorithms on $c_1$ is only polynomial.
The only exception is an algorithm for finding maximum clique, whose running time itself is super-polynomial.

\fullversion{Let us list some basic properties of \dpl{} graphs.}
\begin{lemma}
\label{lem:high_degree_num}
Let $G$ be a \dpl graph with parameters $\pl$ and $t$.
Then, $d_{\geq k} = O(n(t+1)^{\pl-1}(k+t)^{1-\pl}) = O(n(t+1)^{\pl-1}k^{1-\pl})$.
\end{lemma}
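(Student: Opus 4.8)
The plan is to bound $d_{\geq k}$ by summing the per-bucket upper bounds from Definition~\ref{def:dpl} over all dyadic degree-buckets that can contain a vertex of degree at least $k$, and then to estimate the resulting tail sum by an integral comparison. Throughout I assume $k \geq 1$ (the case $k=0$ is trivial, since $d_{\geq 0} = n$).

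First I would locate the relevant buckets. Let $j^\ast = \lfloor \log k \rfloor$, so that $2^{j^\ast} \leq k < 2^{j^\ast+1}$. Every vertex of degree at least $k$ has its degree in one of the intervals $[2^j, 2^{j+1})$ with $j \geq j^\ast$, so applying the bound of Definition~\ref{def:dpl} to each such bucket and summing gives
\[
d_{\geq k} \;\leq\; \sum_{j \geq j^\ast} c_1 n (t+1)^{\pl-1} \sum_{i=2^j}^{2^{j+1}-1} (i+t)^{-\pl} \;=\; c_1 n (t+1)^{\pl-1} \sum_{i \geq 2^{j^\ast}} (i+t)^{-\pl},
\]
where the last equality uses that the dyadic intervals $[2^j, 2^{j+1})$ for $j \geq j^\ast$ partition $[2^{j^\ast}, \infty)$ exactly (extending the finite degree range up to infinity only weakens the inequality).

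Next I would estimate the tail sum. Since $\pl > 1$, an integral comparison of the same flavor as in the proof of Lemma~\ref{lem:sum} yields $\sum_{i \geq m} (i+t)^{-\pl} = O((m+t)^{1-\pl})$ for any $m \geq 1$; concretely one bounds the sum by $(m+t)^{-\pl} + \int_m^\infty (x+t)^{-\pl}\,\mathrm{d}x = (m+t)^{-\pl} + \frac{1}{\pl-1}(m+t)^{1-\pl}$. Taking $m = 2^{j^\ast}$ and using $2^{j^\ast} > k/2$ together with $t \geq 0$ gives $2^{j^\ast}+t > (k+t)/2$, so that $(2^{j^\ast}+t)^{1-\pl} < 2^{\pl-1}(k+t)^{1-\pl}$; here $2^{\pl-1}$ is an absolute constant because $\pl = O(1)$. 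Substituting back proves $d_{\geq k} = O(n(t+1)^{\pl-1}(k+t)^{1-\pl})$. The second stated bound then follows immediately, since $k+t \geq k$ and $1-\pl < 0$ imply $(k+t)^{1-\pl} \leq k^{1-\pl}$.

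The calculation is essentially routine; the only two points that require a little care are keeping the shift $+t$ inside the tail estimate (rather than coarsely bounding $(i+t)^{-\pl} \le i^{-\pl}$, which would only give the weaker $k^{1-\pl}$ form), and absorbing the factor-of-two slack introduced by the dyadic bucketing, where the summation starts at $2^{j^\ast} \approx k/2$ instead of at $k$. Both are handled by the constant $2^{\pl-1}$, which is legitimate precisely because $\pl$ is assumed to be bounded.
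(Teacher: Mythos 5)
Your proof is correct and follows essentially the same route as the paper's: identify the dyadic bucket $2^{\lfloor \log k\rfloor}$ containing $k$, sum the per-bucket bounds of Definition~\ref{def:dpl} from there on, estimate the resulting tail $\sum (i+t)^{-\pl}$ by integral comparison to get $O((2^{\lfloor\log k\rfloor}+t)^{1-\pl})$, and absorb the factor-of-two slack into the constant. The only cosmetic difference is that the paper shifts the summation index to reduce to its Lemma~\ref{lem:sum}, whereas you carry the $+t$ shift directly through the integral estimate.
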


\fullversion{
\begin{proof}

Observe that $k' = 2^{\lfloor \log k \rfloor}$ is the smallest power of $2$ which is not greater than $k$, thus, $k'\leq k \leq 2k'$.
We bound the number of vertices, whose degree is at least $k'$, which is an upper bound on the number of vertices of degree at least $k$.

\begin{align*}
c_1 n (t+1)^{\pl-1} \sum_{i=k'}^{n-1} (i+t)^{-\pl} & \leq c_1 n (t+1)^{\pl-1} \sum_{i=k'+\lfloor t \rfloor}^{n-1+\lfloor t \rfloor} i^{-\pl}\\
       & = O(n (t+1)^{\pl-1} (k'+\lfloor t \rfloor)^{1-\pl})\\
       & = O(n (t+1)^{\pl-1} (k' + t)^{1-\pl})\\
       & = O(n (t+1)^{\pl-1} (k + t)^{1-\pl})
\end{align*}
\end{proof}
}

\fullversion{
The following lemma is used, e.g., to bound the running times of algorithms, which take $f(k)$ time to process a vertex of degree $k$, where $f$ is at most polynomial in its parameter.
Roughly speaking, it says that the running time of a polynomial algorithm running on a PLB network is asymptotically the same as the running time on a graph with an ideal power law distribution.

\begin{lemma}\label{lem:rtime_sum}
Let $G$ be a \dpl{} graph with parameters $\pl$ and $t$.
Let $d_i$ be the number of vertices of degree $i$ in $G$.
Let $f : \mathbb{N} \rightarrow \mathbb{N}$ be a nondecreasing function, such that for any $x, c \in \mathbb{N}$, $f(cx) \leq c^{O(1)} f(x)$.
Then, for every $k \geq 1$ we have $\sum_{i=1}^k d_i f(i) = O(1) n(t+1)^{\pl-1} \sum_{i=1}^{k} (i+t)^{-\pl} f(i)$.
\end{lemma}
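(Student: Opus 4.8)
The plan is to decompose the degree range $[1,k]$ into dyadic blocks $B_j = \{i : 2^j \le i < 2^{j+1}\}$ for $0 \le j \le J$, where $J = \lfloor \log k \rfloor$, and to exploit the fact that both $f$ and the weight $(i+t)^{-\pl}$ are essentially constant across a single block. Within $B_j$ the growth hypothesis gives $f(2^{j+1}) = f(2\cdot 2^j) \le 2^{O(1)} f(2^j) = O(1)\,f(2^j)$, so up to a constant factor $f$ equals its value at the left endpoint; and since $f$ is nondecreasing, $f(i) \ge f(2^j)$ throughout $B_j$. Definition~\ref{def:dpl} is tailored to exactly this decomposition: it bounds the number of vertices with degree in $B_j$ by $c_1 n (t+1)^{\pl-1}\sum_{i\in B_j}(i+t)^{-\pl}$.

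First I would bound the contribution of each block. For a block that lies entirely in $[1,k]$, I write $\sum_{i\in B_j} d_i f(i) \le f(2^{j+1})\sum_{i\in B_j}d_i$, plug in the Definition~\ref{def:dpl} count, and then use $f(2^{j+1}) \le O(1) f(2^j) \le O(1) f(i)$ for $i \in B_j$ together with monotonicity to fold the leading factor back inside the sum. This yields the per-block version $\sum_{i\in B_j} d_i f(i) = O(1)\, n (t+1)^{\pl-1}\sum_{i\in B_j} f(i)(i+t)^{-\pl}$. Summing over the full blocks $j=0,\dots,J-1$ reproduces the desired right-hand side restricted to $[1, 2^J)$.

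The main obstacle is the last, possibly partial, block $B_J \cap [1,k] = [2^J,k]$. The Definition~\ref{def:dpl} count only bounds the number of vertices over the \emph{whole} block $B_J$, so the cleanest route is to bound $\sum_{i=2^J}^{k} d_i f(i)$ by the same expression summed over all of $B_J$; this produces $\sum_{i=1}^{2^{J+1}-1} f(i)(i+t)^{-\pl}$ rather than $\sum_{i=1}^{k}$, overshooting the target range since $2^{J+1}-1$ may be as large as $2k-1$. A naive block-by-block comparison fails here (when $k$ is close to $2^J$ the overshoot dominates any single remaining term), so I would close the gap globally. Using Lemma~\ref{lem:sum} (or direct flatness estimates) each block sum satisfies $\sum_{i\in B_j} f(i)(i+t)^{-\pl} = \Theta\!\bigl(2^j f(2^j)(2^j+t)^{-\pl}\bigr)$, and because $\pl>1$ and $f$ grows polynomially, consecutive block sums are within a constant factor with the higher block no larger, i.e. $\sum_{i\in B_J}(\cdot) = O\bigl(\sum_{i\in B_{J-1}}(\cdot)\bigr)$. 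Since $B_{J-1}\subseteq[1,k]$, the overhang beyond $k$ is absorbed into the sum up to $k$ at the cost of a constant factor, giving $\sum_{i=1}^{2^{J+1}-1} f(i)(i+t)^{-\pl} = O\bigl(\sum_{i=1}^{k} f(i)(i+t)^{-\pl}\bigr)$ and hence the lemma. The degenerate case $k=1$ (no overhang, since then $2^{J+1}-1 = k$) is immediate and should be noted separately.
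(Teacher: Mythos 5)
Your proof is correct and follows essentially the same route as the paper's: a dyadic decomposition, a per-block bound obtained from Definition~\ref{def:dpl} together with the polynomial growth of $f$, and then an argument that extending the weighted sum from $k$ up to roughly $2k$ costs only a constant factor. The only cosmetic difference is in that last absorption step, where the paper pairs the terms at $2i-1$ and $2i$ against the term at $i$ while you compare the last dyadic block against the penultimate one; both are valid.
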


\begin{proof}
Let us first derive an auxiliary inequality.
\begin{align*}
\sum_{i=2^j}^{i=2^j-1} d_i f(i) & \leq \sum_{i=2^j}^{i=2^{j+1}-1} d_i f(2^{j+1}) \\
       & = f(2^{j+1}) \sum_{i=2^j}^{i=2^{j+1}-1} d_i & \\
       & \leq O(1) n(t+1)^{\pl-1} \sum_{i=2^j}^{i=2^{j+1}-1} (i+t)^{-\pl} f(2^j)\\
       & \leq O(1) n(t+1)^{\pl-1} \sum_{i=2^j}^{i=2^{j+1}-1} (i+t)^{-\pl} f(i)
\end{align*}
Note that we introduce $O(1)$ to hide $c_1$ and the constant that comes from replacing $f(2^{j+1})$ by $f(2^j)$.
Let $k' = 2^{\lceil \log (k+1) \rceil} - 1$.
Thus $k \leq k' \leq 2k$ and $k' = 2^l-1$ for some integer $l$.

\begin{align*}
\sum_{i=1}^k d_i f(i) & \leq \sum_{i=1}^{k'} d_i f(i)\\
    & = \sum_{j=0}^{l-1} \sum_{i=2^j}^{2^{j+1}-1}  d_i f(i)\\
    & = \sum_{j=0}^{l-1} O(1) n(t+1)^{\pl-1} \sum_{i=2^j}^{i=2^{j+1}-1} (i+t)^{-\pl} f(i)\\
    & = O(1) n(t+1)^{\pl-1} \sum_{j=0}^{l-1}  \sum_{i=2^j}^{i=2^{j+1}-1} (i+t)^{-\pl} f(i)\\
    & = O(1) n(t+1)^{\pl-1} \sum_{i=1}^{k'} (i+t)^{-\pl} f(i)\\
    & \leq O(1) n(t+1)^{\pl-1} \sum_{i=1}^{2k} (i+t)^{-\pl} f(i)\\
    & = O(1) n(t+1)^{\pl-1} \sum_{i=1}^{k} ((2i-1+t)^{-\pl} f(2i-1) + (2i+t)^{-\pl} f(2i))\\
    & \leq O(1) n(t+1)^{\pl-1} \sum_{i=1}^{k} (i+t)^{-\pl} (f(2i-1) + f(2i))\\
    & \leq O(1) n(t+1)^{\pl-1} \sum_{i=1}^{k} (i+t)^{-\pl} f(i)\\
\end{align*}
\end{proof}
}

\fullversion{
By using Lemma~\ref{lem:rtime_sum} together with Lemma~\ref{lem:sum} we obtain the following
bound on the number of edges touching small degree vertices.
}

\begin{lemma}
\label{lem:small_degrees}
Let $G$ be a \dpl{} graph with parameters $\pl$ and $t$, where $\pl < 2$.
Then, the number of edges incident to at least one vertex of degree at most $k$ is $O(n(t+1)^{\pl-1} k^{2-\pl})$.
\end{lemma}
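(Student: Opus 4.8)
The plan is to reduce the edge count to a weighted degree sum and then apply the two summation lemmas already at our disposal. First I would observe that every edge incident to at least one vertex of degree at most $k$ is counted at least once when we sum the degrees over all such vertices: an edge with exactly one low-degree endpoint is counted once, and one with two low-degree endpoints is counted twice, so no qualifying edge is missed. Writing $d_i$ for the number of vertices of degree exactly $i$, this bounds the quantity of interest above by $\sum_{i=1}^{k} d_i \cdot i$.

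Next I would invoke Lemma~\ref{lem:rtime_sum} with the function $f(i) = i$. This $f$ maps $\mathbb{N}$ to $\mathbb{N}$, is nondecreasing, and satisfies $f(ci) = c\,f(i) \leq c^{O(1)} f(i)$, so all hypotheses hold. The lemma then replaces the sum over the actual degree sequence by one over the idealized shifted power-law distribution:
\[
    \sum_{i=1}^{k} d_i \cdot i = O(1) \, n (t+1)^{\pl-1} \sum_{i=1}^{k} (i+t)^{-\pl} \, i.
\]

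To estimate the remaining sum I would discard the shift, using $t \geq 0$ to get $(i+t)^{-\pl} \leq i^{-\pl}$, so that $\sum_{i=1}^{k} (i+t)^{-\pl} i \leq \sum_{i=1}^{k} i^{1-\pl}$. Applying Lemma~\ref{lem:sum} with exponent $c = 1-\pl$ and using the hypothesis $\pl < 2$ (so that $c = 1-\pl > -1$, placing us in the first case) yields $\sum_{i=1}^{k} i^{1-\pl} = O(k^{2-\pl})$. Chaining the three estimates gives the claimed bound $O\!\left(n (t+1)^{\pl-1} k^{2-\pl}\right)$.

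I do not anticipate a genuine obstacle; the argument is a routine composition of the preparatory lemmas. The only step demanding attention is the case selection in Lemma~\ref{lem:sum}: one must check that the exponent $1-\pl$ falls strictly above $-1$, which is precisely where $\pl < 2$ enters. For $\pl = 2$ the sum would acquire a $\log k$ factor, and for $\pl > 2$ it would be dominated by its lower terms rather than by $k^{2-\pl}$, so the stated form of the bound genuinely depends on the assumption $\pl < 2$.
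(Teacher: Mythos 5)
Your proposal is correct and follows essentially the same route as the paper's proof: bound the edge count by $\sum_{i=1}^{k} d_i \cdot i$, apply Lemma~\ref{lem:rtime_sum} with $f(i)=i$, drop the shift via $(i+t)^{-\pl}\le i^{-\pl}$, and finish with Lemma~\ref{lem:sum} in the case $1-\pl>-1$. The extra care you take in justifying the initial edge-to-degree-sum bound and in checking where $\pl<2$ is used is welcome but does not change the argument.
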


\fullversion{
\begin{proof}
The number of edges incident to at least one vertex of degree at most $k$ is $\sum_{i=1}^{k} d_i i$.
We use Lemma~\ref{lem:rtime_sum} with the identity function $f(i)=i$, obtaining
\begin{align}
\sum_{i=1}^{k} d_i i & = O(n(t+1)^{\pl-1} \sum_{i=1}^{k} (i+t)^{-\pl} \cdot i) \nonumber \\
  & = O(n(t+1)^{\pl-1} \sum_{i=1}^{k} i^{1-\pl}) \nonumber \\
  & = O(n(t+1)^{\pl-1} k^{2-\pl}). \label{eq:1}
\end{align}
In the first transformation we use the fact that $(i+t)^{-\pl} \le i^{-\pl}$,
whereas in the second one we use Lemma~\ref{lem:sum}.
\end{proof}
}

By combining Lemma~\ref{lem:high_degree_num} with Lemma~\ref{lem:inc_edges}, we obtain the following.

\begin{lemma}
\label{cor:basic_edges}
Let $G$ be a \dpl graph with parameters $\pl$ and $t$.
Then, the number of edges of $G$ of is (a) $O(n^{3-\pl}(t+1)^{\pl-1})$ for $1 < \pl <2$, (b) $O(n\log n(t+1))$ for $\pl=2$, (c) $O(n(t+1))$ for $\pl>2$.
\end{lemma}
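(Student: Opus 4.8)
The plan is to follow the hint literally: bound the edge count by the sum from Lemma~\ref{lem:inc_edges} and then substitute the tail bound from Lemma~\ref{lem:high_degree_num}, reducing everything to a single application of the summation estimate in Lemma~\ref{lem:sum}. Concretely, Lemma~\ref{lem:inc_edges} gives $|E(G)| \le \sum_{i=1}^{n-1} d_{\ge i}$, and plugging in $d_{\ge i} = O(n(t+1)^{\pl-1}(i+t)^{1-\pl})$ yields
\[
    |E(G)| = O\!\left(n(t+1)^{\pl-1}\sum_{i=1}^{n-1}(i+t)^{1-\pl}\right).
\]
So the whole statement reduces to estimating $S := \sum_{i=1}^{n-1}(i+t)^{1-\pl}$ in the three regimes of $\pl$, with the exponent $c := 1-\pl$ landing in the three cases of Lemma~\ref{lem:sum}.

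For cases (a) and (b) the dominant contribution comes from the large-degree terms, so I would simply use $(i+t)^{1-\pl}\le i^{1-\pl}$ and apply Lemma~\ref{lem:sum} with $a=1$ and $b=n-1$. When $1<\pl<2$ we have $c=1-\pl>-1$, so $S=O(n^{2-\pl})$ and the product becomes $O(n^{3-\pl}(t+1)^{\pl-1})$, matching (a). When $\pl=2$ we have $c=-1$, so $S=O(\log n)$ and the product is $O(n(t+1)\log n)$, matching (b). Keeping $b=n-1$ rather than a $t$-shifted upper limit is what prevents a stray $(n+t)^{2-\pl}$ factor from leaking extra $t$-dependence into (a); dropping the shift here only enlarges each summand, so it is harmless.

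Case (c) is the one that needs care, and is where I expect the only real subtlety. The naive bound $(i+t)^{1-\pl}\le i^{1-\pl}$ would give $S=O(1)$ and hence $O(n(t+1)^{\pl-1})$, which is weaker than the claimed $O(n(t+1))$ whenever $\pl>2$ (there $\pl-1>1$). To recover the missing factor $(t+1)^{2-\pl}$ I would keep the shift: using $(i+t)^{1-\pl}\le (i+\lfloor t\rfloor)^{1-\pl}$ and reindexing $j=i+\lfloor t\rfloor$, exactly as in the proof of Lemma~\ref{lem:high_degree_num}, turns $S$ into $\sum_{j=1+\lfloor t\rfloor}^{\,n-1+\lfloor t\rfloor} j^{1-\pl}$. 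Now $c=1-\pl<-1$, so Lemma~\ref{lem:sum} gives $O((1+\lfloor t\rfloor)^{2-\pl})=O((t+1)^{2-\pl})$, and the product collapses to $O(n(t+1)^{\pl-1}(t+1)^{2-\pl})=O(n(t+1))$, matching (c). Beyond choosing the right form of the Lemma~\ref{lem:high_degree_num} bound in each regime, I do not anticipate any genuine difficulty.
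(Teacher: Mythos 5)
Your proposal is correct and follows essentially the same route as the paper: bound the edge count via Lemma~\ref{lem:inc_edges}, substitute the tail bound from Lemma~\ref{lem:high_degree_num}, and estimate $\sum_{i=1}^{n-1}(i+t)^{1-\pl}$ case by case with Lemma~\ref{lem:sum}, keeping the shift by $t$ precisely where it matters (the $\pl>2$ case). The only difference is cosmetic: you make the reindexing $j=i+\lfloor t\rfloor$ explicit where the paper applies Lemma~\ref{lem:sum} to the shifted sum somewhat informally.
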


\fullversion{
\begin{proof}
By Lemma~\ref{lem:inc_edges} the number of edges is at most $\sum_{i=k+1}^{n-1} d_{\geq i}$.
By Lemma~\ref{lem:high_degree_num}, $d_{\geq i} = O(n (t+1)^{\pl-1} (i+ t)^{-\pl+1})$.
We have
\begin{equation}
\label{eq:num_edges}
\sum_{i=1}^{n-1} O(n (t+1)^{\pl-1} (i + t)^{1-\pl}) = n (t+1)^{\pl-1} \sum_{i=1}^{n-1} O((i + t)^{1-\pl}).
\end{equation}
We now use Lemma~\ref{lem:sum}.
For $1 < \pl < 2$ we have $1 - \pl > -1$, so the sum can be bounded by $O((n+t)^{2-\pl}) = O(n^{2-\pl})$.
Putting it back into Equation~\ref{eq:num_edges}, we get $O(n^{3-\pl}(t+1)^{\pl-1})$.
For $\pl = 2$, we may bound the sum by $O(\log((n+t)/t)) = O(\log n)$, thus obtaining $O(n\log n(t+1))$.
Finally, for $\pl > 2$, we bound the sum by $O((t+1)^{2-\pl})$, so the number of edges is $O(n(t+1))$.
\end{proof}
}

What is interesting, for a \dpl graph with $1 < \pl < 2$, the bound on the number of edges given by Lemma~\ref{cor:basic_edges} is not tight.
In particular, the number of vertices with high degree (considerably greater than $n^{1/\pl}$) is polynomially smaller.
We say that a vertex is a high-degree vertex if its degree is more than $n^{1/\pl}$.
Each edge either connects two high-degree vertices or is incident to a low-degree vertex.
The number of edges of the first type is bounded, as there are few high-degree vertices, whereas the number of the edges of the second type is bounded by simply summing the degrees of low-degree vertices.
Note that this reasoning heavily depends on the fact that the graph is simple.
This is formalized in the following Lemma.

\begin{lemma}
\label{lem:alpha12}
Let $G$ be a \dpl graph with parameters $\pl$ and $t$, where $1 < \pl < 2$, and $k \geq n^{1/\pl}(t+1)^{1-1/\pl}$.
Then, $d_{\geq k} = O(n^{3-\pl}(t+1)^{(\pl-1)(3-\pl)} k^{\pl^2 - 3\pl + 1})$.
\end{lemma}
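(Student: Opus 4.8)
The plan is to bound $D := d_{\geq k}$ by partitioning the edges incident to the high-degree vertices (those of degree at least $k$) into internal edges and edges reaching low-degree vertices, and then to exploit simplicity twice: once to cap the number of internal edges, and once --- crucially --- to cap how many high-degree neighbors a single low-degree vertex can have. Throughout I write $M = n(t+1)^{\pl-1}$, so the target reads $D = O(M^{3-\pl}k^{\pl^2-3\pl+1})$.

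First I would record a preliminary estimate. By Lemma~\ref{lem:high_degree_num} we have $D = O(Mk^{1-\pl})$, and the hypothesis $k \ge n^{1/\pl}(t+1)^{1-1/\pl} = M^{1/\pl}$ is exactly what makes this $O(k)$; for $k$ a constant factor above this threshold it in fact gives $D \le k/2$. The borderline range where $k = \Theta(M^{1/\pl})$ is harmless, since there the preliminary bound already gives $D = O(M^{1/\pl})$ and the claimed bound also evaluates to $\Theta(M^{1/\pl})$.

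Next I would set up the counting. Since every high-degree vertex has degree at least $k$, the total degree of the $D$ high-degree vertices is at least $kD$; the number of edges joining two of them is at most $\binom{D}{2} < D^2/2$ because $G$ is simple, so the number $E_{HL}$ of edges from a high-degree to a low-degree vertex satisfies $E_{HL} \ge kD - D^2 \ge kD/2$, using $D \le k/2$. For the matching upper bound I would count $E_{HL}$ from the low side: a vertex $u$ with $\deg(u) < k$ is adjacent to at most $\min(\deg(u), D)$ high-degree vertices, again because $G$ is simple and only $D$ of them exist. Splitting the low-degree vertices at degree $D$ gives
\[
    E_{HL} \le \sum_{i=1}^{D} d_i\, i + D\sum_{i=D+1}^{k-1} d_i,
\]
where the first sum is $O(MD^{2-\pl})$ by (the computation in) Lemma~\ref{lem:small_degrees} and the second is $D\cdot d_{\ge D} = O(MD^{2-\pl})$ by Lemma~\ref{lem:high_degree_num}; hence $E_{HL} = O(MD^{2-\pl})$.

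The final step combines the two estimates of $E_{HL}$. The inequality $kD/2 \le O(MD^{2-\pl})$ would already yield the (tight) bound $D = O((M/k)^{1/(\pl-1)})$, but to land exactly on the stated form I would feed the preliminary estimate into the exponent $2-\pl>0$, writing $D^{2-\pl} = O((Mk^{1-\pl})^{2-\pl})$, so that $MD^{2-\pl} = O(M^{3-\pl}k^{(1-\pl)(2-\pl)})$. Dividing $kD/2 \le O(M^{3-\pl}k^{(1-\pl)(2-\pl)})$ by $k$ and using $(1-\pl)(2-\pl)-1 = \pl^2-3\pl+1$ gives $D = O(M^{3-\pl}k^{\pl^2-3\pl+1})$, which is the claim after substituting $M^{3-\pl} = n^{3-\pl}(t+1)^{(\pl-1)(3-\pl)}$. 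The one genuinely new idea, and the step I expect to be the \emph{crux}, is the per-vertex cap $\min(\deg(u),D)$: bounding $E_{HL}$ merely by the total low-degree mass $\sum_{\deg(u)<k}\deg(u) = O(Mk^{2-\pl})$ only reproduces Lemma~\ref{lem:high_degree_num}, so the entire improvement hinges on using that few high-degree vertices exist to limit each low vertex's contribution. Everything else is bookkeeping with the summation lemmas, and the only subtlety is guaranteeing $D \le k/2$ so that $kD-D^2 \ge kD/2$, which is precisely where the degree threshold on $k$ is used.
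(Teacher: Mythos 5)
Your proof is correct and follows essentially the same route as the paper's: both count the edge endpoints at high-degree vertices, use simplicity to cap a low-degree vertex's contribution at $\min(\deg(u), d_{\geq k})$, bound the internal contribution by $d_{\geq k}^2$, and then feed the preliminary estimate $d_{\geq k} = O(n(t+1)^{\pl-1}k^{1-\pl})$ into the exponent $2-\pl$ before dividing by $k$. The only (cosmetic) difference is how the internal-edges term is absorbed --- you arrange $kD - D^2 \geq kD/2$ via $D \leq k/2$, while the paper writes $D^2 = D^{2-\pl}D^{\pl} = O(n(t+1)^{\pl-1}D^{2-\pl})$ using the degree threshold --- and you correctly dispose of the borderline range $k = \Theta(n^{1/\pl}(t+1)^{1-1/\pl})$ where $D \leq k/2$ might fail.
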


\fullversion{
\begin{proof}
We say that a vertex of degree at least $k$ is a {\it{high-degree}} vertex.
By Lemma~\ref{lem:high_degree_num}, $d_{\geq k} = O(n(t+1)^{\pl-1}k^{1-\pl})$.
We will use the fact that $G$ has no multiple edges to derive a stronger upper bound on $d_{\geq k}$.

We first bound the total degree of high-degree vertices, which we denote by $S$.
The edges, whose both endpoints have high degrees contribute at most $d_{\geq k}(d_{\geq k} - 1) \leq d_{\geq k}^2$ to $S$.
In addition, a low-degree vertex of degree $i$ contributes at most $\min(i, d_{\geq k})$.
Recall that by $d_i$ we denote the number of vertices of degree $i$.
Thus, we may bound $S$ by $d_{\geq k}^2 + \sum_{i=1}^{k-1} d_i \min(i, d_{\geq k})$.
We now apply Lemma~\ref{lem:rtime_sum}, using $f(i) = \min(i, d_{\geq k})$.

\begin{align*}
d_{\geq k}^2 + \sum_{i=1}^{k-1} d_i \min(i, d_{\geq k}) & \leq d_{\geq k}^2 + O(1)n(t+1)^{\pl-1}\sum_{i=1}^{k-1} i^{-\pl}  \min(i, d_{\geq k})\\
        & \le d_{\geq k}^2 + O(1)n(t+1)^{\pl-1}\left(\sum_{i=1}^{d_{\geq k}} i^{1-\pl} + \sum_{i=d_{\geq k}+1}^{k} i^{-\pl} d_{\geq k}\right)\\
        & = d_{\geq k}^2 + O(1)n(t+1)^{\pl-1}\left(O(d_{\geq k}^{2-\pl}) + O(d_{\geq k}^{2-\pl})\right)\\
        & = d_{\geq k}^2 + O(n(t+1)^{\pl-1}d_{\geq k}^{2-\pl})\\
\end{align*}
Note that when we split the sum into two sums, we use the assumed convention that for $a > b$, $\sum_{i=a}^b f(i) = 0$.
We now bound $d_{\geq k}$:
\[
d_{\geq k} = O(n(t+1)^{\pl-1}k^{1-\pl}) = O(n(t+1)^{\pl-1}n^{1/\pl-1}(t+1)^{(1-1/\pl)(1-\pl)}) = O(n^{1/\pl}(t+1)^{(\pl-1)/\pl}).
\]
This gives
\[
d_{\geq k}^2 = d_{\geq k}^{2-\pl} d_{\geq k}^\pl = d_{\geq k}^{2-\pl} O(n(t+1)^{\pl-1}).
\]
Hence, the total degree of high-degree vertices is
\begin{align*}
    d_{\geq k}^2 + O(n(t+1)^{\pl-1}d_{\geq k}^{2-\pl}) & = O(n(t+1)^{\pl-1}d_{\geq k}^{2-\pl})\\
   & = O(n(t+1)^{\pl-1}n^{2-\pl}(t+1)^{(\pl-1)(2-\pl)}k^{(1-\pl)(2-\pl)})\\
   & = O(n^{3-\pl}(t+1)^{(\pl-1)(3-\pl)} k^{(1-\pl)(2-\pl)}).
\end{align*}
To obtain the bound on the \emph{number} of high-degree vertices, we divide the obtained bound by $k$, which gives
$O(n^{3-\pl}(t+1)^{(\pl-1)(3-\pl)} k^{\pl^2 - 3\pl + 1})$.
\end{proof}
}

\begin{corollary}
\label{corollary-double-power-law}
Let $G$ be a \dpl graph with parameters $\pl$ and $t$, where $1 < \pl < 2$, and $k \geq n^{1/\pl}(t+1)^{1-1/\pl}$.
Moreover, assume that $1 < \pl < 2$.
Then, the number of vertices of degree between $k$ and $2k$ is $O(n^{3-\pl}(t+1)^{(\pl-1)(3-\pl)} k^{\pl^2 - 3\pl+1})$.
\end{corollary}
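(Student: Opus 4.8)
The plan is to derive this statement as an immediate consequence of Lemma~\ref{lem:alpha12}, since almost all of the work has already been done there. The one observation I need is the trivial set containment
\[
    \{v : \deg(v) \in [k, 2k]\} \subseteq \{v : \deg(v) \geq k\},
\]
which says that every vertex whose degree lies between $k$ and $2k$ is in particular a vertex of degree at least $k$. Hence the number of vertices of degree between $k$ and $2k$ is at most $d_{\geq k}$, and it suffices to bound this latter quantity. (One could instead write the band count exactly as $d_{\geq k} - d_{\geq 2k}$, but discarding the nonnegative subtracted term and upper-bounding by $d_{\geq k}$ is cleaner and loses nothing asymptotically.)

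Next I would check that the hypotheses match. The corollary assumes $G$ is \dpl{} with parameters $\pl$ and $t$, that $1 < \pl < 2$, and that $k \geq n^{1/\pl}(t+1)^{1-1/\pl}$; these are exactly the hypotheses of Lemma~\ref{lem:alpha12} (the corollary even restates $1 < \pl < 2$ redundantly). So the lemma applies verbatim and yields
\[
    d_{\geq k} = O\!\left(n^{3-\pl}(t+1)^{(\pl-1)(3-\pl)} k^{\pl^2 - 3\pl + 1}\right),
\]
which combined with the containment above gives precisely the claimed bound on the band count.

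There is essentially no obstacle to prove, as the corollary is a one-line specialization; the difficulty was entirely absorbed into Lemma~\ref{lem:alpha12}, whose proof exploits simplicity of $G$ to bound the total degree of high-degree vertices. The reason the corollary merits separate billing is interpretive rather than technical: for a dyadic band of an ideal power law, Lemma~\ref{lem:high_degree_num} together with Lemma~\ref{lem:sum} would predict a count scaling like $n(t+1)^{\pl-1}k^{1-\pl}$, i.e.\ with exponent $1-\pl$ in $k$. Here the exponent is instead $\pl^2 - 3\pl + 1$, and since $\pl^2 - 3\pl + 1 < 1-\pl$ for $\pl \in (1,2)$ (equivalently $\pl(\pl-2) < 0$), the high-degree tail decays strictly faster. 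This is exactly the \emph{double power law} phenomenon announced in the introduction, and I would flag this comparison after the formal one-line argument to make the corollary's significance explicit.
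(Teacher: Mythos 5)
Your proposal is correct and matches the paper's intent exactly: the paper states this corollary without a separate proof precisely because it follows from Lemma~\ref{lem:alpha12} by the trivial observation that every vertex of degree in $[k,2k]$ has degree at least $k$, so the band count is at most $d_{\geq k}$. Your added remark contrasting the exponent $\pl^2-3\pl+1$ with the $1-\pl$ predicted by Lemma~\ref{lem:high_degree_num} correctly captures the double power law interpretation the paper gives in its introduction.
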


Let us use Lemma~\ref{lem:alpha12} to derive a stricter bound on the number of edges in a PLB graph with $1 < \pl < 2$.

\begin{lemma}
Let $G$ be a \dpl graph with parameters $\pl$ and $t$, where $1 < \pl < 2$.
Then, $G$ has $O(n^{2/\pl} (t+1)^{2 - 2/\pl})$ edges.
\end{lemma}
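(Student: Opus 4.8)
The plan is to fix the threshold $k_0 = n^{1/\pl}(t+1)^{1-1/\pl}$, to call a vertex \emph{high-degree} if its degree exceeds $k_0$ and \emph{low-degree} otherwise, and to partition the edges of $G$ into those incident to at least one low-degree vertex and those joining two high-degree vertices. This $k_0$ is precisely the value of $k$ at which Lemma~\ref{lem:alpha12} starts to apply, and it is chosen so that the two parts contribute the same order; the resulting estimate is strictly smaller than the $O(n^{3-\pl}(t+1)^{\pl-1})$ of Lemma~\ref{cor:basic_edges} because the simplicity of $G$ limits how many edges can run between high-degree vertices.

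First I would bound the edges incident to at least one low-degree vertex. This is exactly the quantity estimated by Lemma~\ref{lem:small_degrees} with $k=k_0$, namely $O(n(t+1)^{\pl-1}k_0^{2-\pl})$. Substituting $k_0=n^{1/\pl}(t+1)^{1-1/\pl}$ and collecting exponents, the power of $n$ is $1+(2-\pl)/\pl=2/\pl$ and the power of $(t+1)$ is $(\pl-1)+(1-1/\pl)(2-\pl)=2-2/\pl$, so this part is $O(n^{2/\pl}(t+1)^{2-2/\pl})$.

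Next I would bound the edges with both endpoints high-degree. Here the hypothesis that $G$ is simple is essential: with no parallel edges, the number of such edges is at most $\binom{d_{\geq k_0}}{2}=O(d_{\geq k_0}^2)$. Evaluating Lemma~\ref{lem:high_degree_num} at $k=k_0$ gives $d_{\geq k_0}=O(n^{1/\pl}(t+1)^{(\pl-1)/\pl})$, so $d_{\geq k_0}^2=O(n^{2/\pl}(t+1)^{2(\pl-1)/\pl})=O(n^{2/\pl}(t+1)^{2-2/\pl})$, matching the first part. Adding the two classes of edges gives the claimed $O(n^{2/\pl}(t+1)^{2-2/\pl})$.

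If one prefers to invoke Lemma~\ref{lem:alpha12} directly, an equivalent route starts from $|E|\le\sum_{i=1}^{n-1}d_{\geq i}$ (Lemma~\ref{lem:inc_edges}) and splits the sum at $k_0$: for $i\le k_0$ one bounds $d_{\geq i}$ by Lemma~\ref{lem:high_degree_num} and sums via Lemma~\ref{lem:sum} using $1-\pl>-1$, while for $i>k_0$ one uses the sharper Lemma~\ref{lem:alpha12} and Lemma~\ref{lem:sum}, observing that the exponent $\pl^2-3\pl+1$ is strictly below $-1$ on $(1,2)$ (since $\pl^2-3\pl+2=(\pl-1)(\pl-2)<0$), so the tail sums to $O(k_0^{\pl^2-3\pl+2})$. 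I expect the only real obstacle to be the exponent bookkeeping: one must verify that each piece collapses to exactly $n^{2/\pl}(t+1)^{2-2/\pl}$, which reduces to the identity $(3-\pl)+(\pl^2-3\pl+2)/\pl=2/\pl$ for the $n$-exponent, together with the analogous cancellation for the $(t+1)$-exponent. Since the conceptual content has already been absorbed into Lemma~\ref{lem:alpha12}, nothing beyond careful arithmetic should remain.
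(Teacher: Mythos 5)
Your primary argument is correct, and it takes a genuinely more direct route than the paper. The paper's proof starts from $|E|\le\sum_{i=1}^{n-1}d_{\geq i}$ (Lemma~\ref{lem:inc_edges}), splits the sum at $\degth=n^{1/\pl}(t+1)^{1-1/\pl}$, and controls the tail via the refined bound of Lemma~\ref{lem:alpha12} (using that $\pl^2-3\pl+1<-1$ on $(1,2)$) — exactly your ``alternative route,'' down to the same exponent identities. Your main route instead partitions the edge set directly: edges touching a vertex of degree at most $k_0$ are handled by Lemma~\ref{lem:small_degrees}, and edges between two high-degree vertices are at most $\binom{d_{\geq k_0}}{2}=O(d_{\geq k_0}^2)$ by simplicity, with $d_{\geq k_0}$ from Lemma~\ref{lem:high_degree_num}. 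This bypasses Lemma~\ref{lem:alpha12} entirely and is the cleaner argument here; in fact it formalizes the heuristic the paper states in prose just before Lemma~\ref{lem:alpha12} (``each edge either connects two high-degree vertices or is incident to a low-degree vertex''), whereas the paper's formal proof routes the same simplicity idea through the more general tail bound on $d_{\geq k}$. What the paper's detour buys is reuse of Lemma~\ref{lem:alpha12}, which it needs anyway for the double-power-law corollary; what your version buys is a shorter, self-contained proof of the edge count with all the bookkeeping ($1+(2-\pl)/\pl=2/\pl$, $(\pl-1)+(1-1/\pl)(2-\pl)=2-2/\pl$, and $2(\pl-1)/\pl=2-2/\pl$) checking out. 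Both parts of your decomposition match the target bound, so the conclusion follows.
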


\fullversion{
\begin{proof}
By Lemma~\ref{lem:inc_edges}, the total number of edges of $G$ is at most $\sum_{i=1}^{n-1} d_{\geq i}$.
Let $\degth = n^{1/\pl}(t+1)^{1-1/\pl}$.
We split the sum into two parts.

By Lemma~\ref{lem:alpha12}, for $k \geq \degth$, $d_{\geq k} = O(n^{3-\pl}(t+1)^{(\pl-1)(3-\pl)} k^{\pl^2 - 3\pl + 1})$.
Moreover, observe that since $1 < \pl < 2$, $\pl^2 - 3\pl + 1 < -1$.
Then
\begin{align*}
\sum_{i=\degth}^{n-1} d_{\geq i} & = \sum_{i=\degth}^{n-1} O(n^{3-\pl}(t+1)^{(\pl-1)(3-\pl)} i^{\pl^2 - 3\pl + 1})\\
        & = O(n^{3-\pl}(t+1)^{(\pl-1)(3-\pl)} n^{1/\pl(\pl^2 - 3\pl + 2)}(t+1)^{(1-1/\pl)(\pl^2 - 3\pl + 2)}) \\
        & = O(n^{3-\pl + \pl - 3 + 2/\pl} (t+1)^{-\pl^2 + 4\pl - 3 + \pl^2 - 3\pl + 2 - \pl + 3 - 2/\pl})\\
        & = O(n^{2/\pl} (t+1)^{2 - 2/\pl}).
\end{align*}
On the other hand
\begin{align*}
\sum_{i=1}^{\degth} d_{\geq i} & = \sum_{i=1}^{\degth} O(n(t+1)^{\pl-1}i^{1-\pl})\\
        & = O(n(t+1)^{\pl-1}n^{1/\pl(2-\pl)} (t+1)^{(1-1/\pl)(2-\pl)})\\
        & = O(n^{2/\pl} (t+1)^{\pl-1 + 2 - \pl - 2/\pl + 1})\\
        & = O(n^{2/\pl} (t+1)^{2 - 2/\pl})
\end{align*}
Thus, $\sum_{i=1}^{n-1} d_{\geq i} = O(n^{2/\pl} (t+1)^{2 - 2/\pl})$.
\end{proof}
}

\fullversion{\subsection{\dpl neighborhoods}}
\exabstract{\paragraph{\dpl neighborhoods}}
Assume that we pick a random vertex from a power law graph with parameter $\pl$, proportionally to its degree.
Then, the degree of the chosen vertex comes from a power law distribution with parameter $\pl-1$.
This implies that, roughly speaking, for each vertex $v$ in a \emph{random} power law graph, the degree distribution of degrees of \emph{neighbors} of $v$ also obeys power law.
This fact can be exploited to obtain better running time bounds of some algorithms.
However, the algorithms that we later give actually rely on a weaker property.
Namely, for a vertex of degree $k$ they only need a bound on the number of neighbors of degree at least $k$.
Note that if we randomly pick $k$ vertices proportionally to their degrees, then the number of chosen vertices of degree at least $k$ is $O((t+1)^{\pl-2} k \sum_{i=k}^{n-1} i(i+t)^{-\pl})$.
This motivates the following.




\begin{definition}
\label{def:dpl_neigh}
Let $G$ be a \dpl{} graph with parameters $\pl > 2$ and $t$, and let $c_2 > 0$ be an universal constant.
We say that $G$ has \dpl{} neighborhoods if for every vertex $v$ of degree $k$, the number of neighbors of $v$ of degree at least $k$ is at most $c_2 \max(\log n, (t+1)^{\pl-2} k \sum_{i=k}^{n-1} i(i+t)^{-\pl})$.
\end{definition}

The $\log n$ factor in the definition comes from the fact that we assume that the graph is created in a random way.
Thus, the actual numbers of neighbors may slightly deviate from the expected values.

\begin{lemma}\label{lem:dpl_neigh}
Let $G$ be a \dpl{} graph with parameters $\pl > 2$ and $t$, and \dpl{} neighborhoods.
Then, for every vertex $v$ of degree $k$, the number of neighbors of $v$ of degree at least $k$ is $O(\max(\log n, (t+1)^{\pl-2} k^{3-\pl}))$.
\end{lemma}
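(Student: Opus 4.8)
The plan is to start from the definition of \dpl{} neighborhoods (Definition~\ref{def:dpl_neigh}) and show that the term $(t+1)^{\pl-2} k \sum_{i=k}^{n-1} i(i+t)^{-\pl}$ appearing inside the maximum is itself $O((t+1)^{\pl-2} k^{3-\pl})$; the $\log n$ branch of the maximum then carries over unchanged. Thus the entire argument reduces to estimating the sum $\sum_{i=k}^{n-1} i(i+t)^{-\pl}$.

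First I would use the trivial bound $(i+t)^{-\pl} \le i^{-\pl}$, valid for $t \ge 0$ and $i \ge 1$, to obtain $\sum_{i=k}^{n-1} i(i+t)^{-\pl} \le \sum_{i=k}^{n-1} i^{1-\pl}$. Since the hypothesis $\pl > 2$ guarantees $1-\pl < -1$, I would then invoke Lemma~\ref{lem:sum} in the case $c < -1$ (with $c = 1-\pl$, $a = k$, and $b = n-1$), which gives $\sum_{i=k}^{n-1} i^{1-\pl} = O(k^{(1-\pl)+1}) = O(k^{2-\pl})$. Multiplying by the prefactor $(t+1)^{\pl-2} k$ yields $(t+1)^{\pl-2} k \sum_{i=k}^{n-1} i(i+t)^{-\pl} = O((t+1)^{\pl-2} k^{3-\pl})$.

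Finally I would observe that whenever $A = O(B)$ for nonnegative quantities $A,B$, we have $\max(\log n, A) = O(\max(\log n, B))$. Applying this to the upper bound from Definition~\ref{def:dpl_neigh}, namely $c_2 \max(\log n, (t+1)^{\pl-2} k \sum_{i=k}^{n-1} i(i+t)^{-\pl})$, together with the estimate just derived, shows that the number of neighbors of $v$ of degree at least $k$ is $O(\max(\log n, (t+1)^{\pl-2} k^{3-\pl}))$, where the constant $c_2$ is absorbed into the $O$-notation. This is exactly the claimed bound.

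There is essentially no hard step here: the statement is a direct consequence of replacing the explicit sum in the definition by its asymptotic value. The only point requiring a little care is ensuring the exponent condition $1-\pl < -1$ holds, so that we land in the $c < -1$ branch of Lemma~\ref{lem:sum} and obtain a bound governed by the lower limit $k$ rather than by the upper limit $n-1$; this is precisely where the assumption $\pl > 2$ is used. The secondary point is simply to confirm that the outer $\max$ with $\log n$ is preserved under the $O$-estimate, which is routine.
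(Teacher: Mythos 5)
Your proposal is correct and follows essentially the same route as the paper's proof: bound $(i+t)^{-\pl}$ by $i^{-\pl}$, apply Lemma~\ref{lem:sum} in the $c<-1$ branch to get $\sum_{i=k}^{n-1} i^{1-\pl} = O(k^{2-\pl})$, and carry the $\log n$ term through the maximum. No differences worth noting.
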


\fullversion{
\begin{proof}
We have
\begin{align*}
    c_2 (t+1)^{\pl-2} k \sum_{i=k}^{n-1} i(i+t)^{-\pl} & \leq O((t+1)^{\pl-2} k \sum_{i=k}^{n-1} i^{1-\pl})\\
       &  = O((t+1)^{\pl-2} k\cdot k^{2-\pl})\\
       & = O((t+1)^{\pl-2} k^{3-\pl}).\\
\end{align*}
Thus, $c_2 \max(\log n, (t+1)^{\pl-2} k \sum_{i=k}^{n-1} i(i+t)^{-\pl}) = O(\max(\log n, (t+1)^{\pl-2} k^{3-\pl}))$.
\end{proof}
}

\fullversion{\subsection{Relation to other models}}
\exabstract{\paragraph{Relation to other models}}
Definitions~\ref{def:dpl} and~\ref{def:dpl_neigh} are designed to capture the properties of power law graphs that can be easily exploited in the analysis of algorithms.
At the same time there are many random graph models that produce power law graphs.
In these models even giving simple bounds on the degree distributions of the produced graphs is often highly nontrivial.
The analyses of some these models~\cite{ba, bollobas2001, kumar2000, grechnikov2014, Cooper01} only give the expected numbers of vertices of given degrees and analyze the concentration.
A typical concentration statement says that (with high probability) the number of vertices of degree $k$ differs from the expected value by some small additive error (e.g., $\sqrt{n} \log n$).
This cannot be directly used to show that these graphs satisfy Definition~\ref{def:dpl}.
Proving that would require bounding the number of vertices of degree belonging to $[2^d, 2^{d+1})$, but if we simply sum the approximate numbers of vertices of each degree $\in [2^d, 2^{d+1})$, the additive errors accumulate.
At the same time we believe that many of the proposed random graph processes yield PLB graphs, but proving this is a challenging open problem.

Another models for power law graphs are based on fixing a degree sequence in the beginning.
In the \emph{erased configuration model}~\cite{bender78,bollobas80,britton06} the degrees of all vertices are fixed in the very beginning to obtain an almost ideal power law distribution.
Then a graph is picked uniformly at random, among all graphs that have the given degree sequence.
Note that we fix an ``ideal'' power law degree sequence, as it is done, e.g., in~\cite{newman01,Aiello01,berry2014}, but in some works on this model the degree of each vertex is picked independently at random.

\begin{theorem}\label{thm:alpha-beta-plb}
Let $n$ be sufficiently large and $G$ be a random power law graph with parameter $\pl > 1$ created by erased configuration model.
Then $G$ is a \dpl graph with parameters $\pl$ and $t=0$.
Moreover, with high probability, $G$ has \dpl neighborhoods.
\end{theorem}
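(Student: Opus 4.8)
The plan is to prove the two claims of Theorem~\ref{thm:alpha-beta-plb} separately. For the \dpl property, I would start from the definition of the erased configuration model: the degree sequence is fixed to an ideal power law, so the number of vertices with prescribed degree exactly $k$ is $\Theta(n k^{-\pl})$ by construction, and the ``erasing'' of multi-edges and self-loops only decreases degrees. Hence the number of vertices whose degree falls in $[2^d, 2^{d+1})$ \emph{after} erasing is at most the number whose \emph{prescribed} degree is at least $2^d$. Summing the ideal sequence over $i \geq 2^d$ and comparing against the bound $c_1 n \sum_{i=2^d}^{2^{d+1}-1}(i+0)^{-\pl}$ from Definition~\ref{def:dpl} with $t=0$ would give the claim, provided $c_1$ is chosen large enough; here I would invoke Lemma~\ref{lem:sum} to evaluate the tail sum $\sum_{i \geq 2^d} i^{-\pl} = O((2^d)^{1-\pl})$ and match it against the dyadic-block sum, which by Lemma~\ref{lem:sum} is $\Theta((2^d)^{1-\pl})$ as well. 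The key simplification is that erasing only helps the upper bound, so no probabilistic argument is needed for the first claim --- it holds deterministically for every graph in the support of the model.

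For the \dpl neighborhoods property, the argument must be genuinely probabilistic and ``with high probability.'' The guiding heuristic, already stated in the text preceding Definition~\ref{def:dpl_neigh}, is that in the configuration model the endpoints attached to a fixed vertex $v$ of degree $k$ behave like $k$ independent draws of vertices chosen proportionally to degree; the expected number of those neighbors with degree at least $k$ is then $\Theta\big((t+1)^{\pl-2} k \sum_{i=k}^{n-1} i (i+t)^{-\pl}\big)$, which with $t=0$ is exactly the target quantity. So the plan is: fix $v$ with degree $k$, compute the expected number $\mu_v$ of high-degree neighbors in the pairing before erasing, and show $\mu_v = O\big((t+1)^{\pl-2} k \sum_{i=k}^{n-1} i(i+t)^{-\pl}\big)$ directly from the degree-proportional selection probabilities. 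Then I would apply a concentration inequality --- a Chernoff/Bernstein bound, or a bounded-differences argument such as Azuma over the sequence of pairings --- to show that with probability $1 - n^{-\Omega(1)}$ the actual count does not exceed $c_2 \max(\log n, \mu_v)$. The $\max$ with $\log n$ in the definition is precisely what absorbs the deviation for low-expectation vertices, where $\mu_v$ may be $o(\log n)$ and additive Chernoff slack of order $\log n$ dominates. A union bound over all $n$ vertices then upgrades this to a statement holding simultaneously for every $v$ with high probability; since there are only $n$ vertices, a per-vertex failure probability of $n^{-2}$ suffices.

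The main obstacle is handling the dependence introduced by the configuration-model pairing and by the erasing step. The edges incident to $v$ are not truly independent draws: each half-edge, once matched, removes its partner from the pool, and erasing removes parallel edges, which correlates the indicators ``neighbor $u$ has surviving degree $\geq k$.'' I would deal with the pairing dependence by noting that the negative association of the matching process makes the sum of indicators at least as concentrated as in the independent model, so an independent-draws Chernoff bound remains a valid upper tail bound; alternatively a bounded-differences martingale over the pairing steps gives the same conclusion with an explicit Lipschitz constant. The erasing step is handled by monotonicity as before: erasing can only lower degrees and remove neighbors, so the count of degree-$\geq k$ neighbors after erasing is stochastically dominated by the count before erasing, letting me prove the bound in the cleaner pre-erasure model and transfer it down. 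The remaining care is that erasing also changes $v$'s own degree $k$; but since degrees only decrease, a neighbor counted as ``degree $\geq k$'' against the \emph{final} smaller $k$ is only easier to bound, so the monotonicity again works in the favorable direction.
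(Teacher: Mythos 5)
Your proposal matches the paper's proof in all essentials: the \dpl{} property is verified deterministically from the prescribed degree sequence (erasing only decreases degrees), and the \dpl{} neighborhoods property is established exactly as you describe --- bounding the expected number of degree-$\geq k$ partners of the $k$ half-edge copies of $v$, invoking negative association of the pairing indicators to justify a Chernoff bound, letting the $\max(\log n,\cdot)$ term absorb the deviation for low-expectation vertices, and finishing with a union bound over the $n$ vertices. The only cosmetic difference is that the paper commits to the negative-association route (proving it as a separate lemma) rather than leaving Azuma as an alternative, so your plan is essentially the paper's proof.
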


\fullversion{
These statements are true for some universal constants $c_1$ and $c_2$ (see definitions~\ref{def:dpl} and~\ref{def:dpl_neigh}).

The remaining part of this section gives a proof of Theorem~\ref{thm:alpha-beta-plb}.
Let us now describe the erased configuration model in detail.
First, we pick a degree for every vertex, in such a way that the number of vertices of degree $k$ is $\Theta(n / k^{\pl})$.
Since the sum of all degrees has to be even, we add one vertex of degree $1$ if necessary.
For simplicity of the analysis we ignore this added vertex.
Then, we build a random graph with the chosen degree sequence as follows:
\begin{enumerate}
\item Build a complete graph $H$ containing $\deg(v)$ copies of vertex $v$.
\item Choose a random perfect matching in $H$ and remove the edges that are not in the matching.
\item Build $G$ from $H$ by merging the copies of each vertex.
\end{enumerate}
The resulting graph may have multiple edges or self-loops, which we remove.

It follows easily that the maximum degree in $G$ is $O(n^{1/\pl})$.
We now verify that $G$ satisfies Definition~\ref{def:dpl}.
We have
\[
    \sum_{i=2^d}^{2^{d+1}-1} d_i \leq \sum_{i=2^d}^{2^{d+1}-1}O(n / i^{\pl}) = O(n) \sum_{i=2^d}^{2^{d+1}-1} i^{-\pl} \leq c_1 n \sum_{i=2^d}^{2^{d+1}-1} i^{-\pl}
\]
for some universal constant $c_1$.
Thus, $G$ is a \dpl graph with parameters $t=0$ and $\pl$.

The proof that, with high probability, $G$ has \dpl neighborhoods (satisfies Definition~\ref{def:dpl_neigh}) is more involved.
Let us now assume that $\pl > 2$ and fix a vertex $v$ of degree $k$.
Our goal is to bound the number of neighbors of $v$ of degree at least $k$.

Vertex $v$ has $k$ copies in $H$, that we denote by $v_1, \ldots, v_k$.
We say that a vertex of $H$ is \emph{bad} if it is a copy of a vertex of degree at least $k$, but not a copy of $v$.
Let us define a sequence of Boolean random variables $X_1, \ldots, X_k$, where $X_i = 1$ iff. $v_i$ is matched in $H$ with a bad vertex.
Note that matching $v_i$ with another copy of $v$ does no harm, as this creates a self loop in $G$, which is then removed.
Thus, $\sum_{i=1}^k X_i$ is an upper bound on number of neighbors of $v$ of degree at least $k$ (in $G$).
The number of bad vertices is bounded by
\[
\sum_{i=k}^{n-1} i O(n / i^\pl) \leq n \sum_{i=k}^{n-1} i^{1-\pl} = O(nk^{2-\pl}).
\]
Thus, $P(X_i = 1) \leq Ck^{2-\pl}$, for some universal constant $C$, as $P(X_i = 1)$ is bounded by the probability of a randomly chosen vertex being bad.
Define $X = \sum_{i=1}^k X_i$.
It follows that $\mathbb{E}(X) \leq Ck^{3-\pl}$.

We now use Chernoff bound to bound $X$.
The variables $X_i$ are not independent, but they are \emph{negatively associated}, which suffices for the Chernoff bound to work (see e.g.~\cite{panconesi}).

\begin{lemma}\label{lem:nega}
For any set $I \subseteq \{1, \ldots, k\}$, $\mathbb{P}(\bigwedge_{i \in I} X_i = 1) \leq \prod_{i \in I}\mathbb{P}(X_i = 1)$,
that is variables $X_i$ are negatively associated.
\end{lemma}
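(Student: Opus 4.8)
The plan is to prove Lemma~\ref{lem:nega}, which asserts that the indicator variables $X_1, \ldots, X_k$ are negatively associated. Recall the setup: we have a complete graph $H$ on the copies of all vertices, a uniformly random perfect matching is chosen, and $X_i = 1$ iff the copy $v_i$ of our fixed vertex $v$ is matched to a \emph{bad} vertex (a copy of some vertex of degree at least $k$, other than $v$ itself). I would aim to show the stated inequality $\mathbb{P}(\bigwedge_{i \in I} X_i = 1) \leq \prod_{i \in I} \mathbb{P}(X_i = 1)$ directly, by reasoning about the random matching process conditioned on partial information.

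The key idea is to expose the matching partners of $v_1, v_2, \ldots$ one at a time and track how the conditional probability of hitting a bad vertex evolves. First I would fix notation: let $B$ be the number of bad vertices and $N$ the total number of vertices in $H$ (equal to the sum of all degrees, which is even). A clean way to generate the matching of the copies $v_1, \ldots, v_k$ is sequentially: match $v_1$ to a uniformly random remaining vertex, then $v_2$ to a uniformly random vertex among those still unmatched, and so on; the remaining vertices are then matched arbitrarily among themselves. Under this process, $\mathbb{P}(X_1 = 1) = B/(N-1)$, and more generally the conditional probability $\mathbb{P}(X_{j} = 1 \mid X_1, \ldots, X_{j-1})$ is (number of remaining bad vertices)$/$(number of remaining vertices $-1$). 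The central observation is that if we condition on $X_i = 1$ for all $i$ in a prefix (i.e.\ all previously exposed copies hit bad vertices), then each such event has \emph{removed one bad vertex and one} $v$-copy from the pool, so the ratio of remaining bad vertices to remaining total vertices only decreases relative to the unconditioned ratio. This should let me show that each conditional factor is at most the corresponding unconditioned marginal $\mathbb{P}(X_i = 1)$, which telescopes to the product bound.

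The step I expect to be the main obstacle is making the monotonicity argument rigorous and in the right direction, because the events $X_i = 1$ remove bad vertices (which decreases later probabilities, helping negative association) but conditioning on an arbitrary subset $I$ rather than a prefix, and dealing with copies of $v$ itself being consumed, requires care. Concretely I would need to verify that for the sequential exposure, $\mathbb{P}(X_{i_j} = 1 \mid \bigwedge_{l < j} X_{i_l} = 1) \leq \mathbb{P}(X_{i_j} = 1)$ for any ordering $i_1 < \cdots$ of the indices in $I$; the clean way is to compare, for each additional copy exposed, the fraction $\tfrac{(\text{bad left})}{(\text{total left}) - 1}$ under the conditioning against the fraction $\tfrac{B}{N-1}$ under no conditioning, arguing that removing equal numbers of bad and non-bad-but-$v$ vertices from numerator and denominator keeps the fraction no larger since bad vertices are a minority (indeed $B = O(nk^{2-\pl})$ is $o(N)$). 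Alternatively, one can invoke the general fact that indicators of a uniformly random matching restricted to a fixed set of ``source'' vertices hitting a fixed ``target'' set are negatively associated, citing the closure properties of negative association (see~\cite{panconesi}); but since the lemma is stated to be proved, I would give the direct conditional-probability argument and only appeal to~\cite{panconesi} for the fact that negative association suffices for the Chernoff bound used afterward.

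\medskip

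\noindent\textbf{Remark on why negative association is the right tool.} Once Lemma~\ref{lem:nega} is established, the negative association of $X_1, \ldots, X_k$ guarantees that the moment generating function of $X = \sum_i X_i$ is dominated by that of a sum of independent indicators with the same marginals, so the standard multiplicative Chernoff bound applies to $X$. This is exactly what is needed to turn the expectation bound $\mathbb{E}(X) \leq Ck^{3-\pl}$ into a high-probability statement of the form required by Definition~\ref{def:dpl_neigh}, after taking a union bound over all vertices $v$ (which is where the $\log n$ slack in the definition is consumed).
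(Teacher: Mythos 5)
Your proposal is correct and follows essentially the same route as the paper's proof: sequentially expose the matching partners of the copies of $v$, bound each conditional probability of hitting a bad vertex by the unconditional marginal $b/(N-1)$, and telescope over the indices in $I$. You are in fact slightly more careful than the paper, which asserts the per-step inequality $(b-i+1)/(N-2i+1) \le b/(N-1)$ without noting that it requires $2b \le N-1$ (bad vertices being a minority) -- a condition you correctly identify and justify via $b = O(nk^{2-\alpha}) = o(N)$.
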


\begin{proof}
Assume the number of bad vertices is $b$.
Then $\mathbb{P}(X_i = 1) = b / (n - 1)$.

Observe that the perfect matching in $H$ can be computed as follows.
We go through the vertices in any order.
For each vertex, if it is already matched, we skip it.
Otherwise, we match it to a randomly chosen unmatched vertex.

For the purpose of the proof, we may assume that the first vertices that are chosen in this process are $v_1, \ldots, v_k$.
In the $i$-th step, we need to compute the probability that $v_i$ is matched to a bad vertex, provided that vertices $v_1, \ldots, v_{i-1}$ have been matched to a bad vertex.
This probability is clearly $(b-i+1) / (n-1-2i+2) \leq b / (n - 1)$.
Thus, $\mathbb{P}(\bigwedge_{i \in I} X_{i_1} = 1) \leq (b / (n - 1))^{|I|} = \prod_{i \in I}\mathbb{P}(X_i = 1)$.
The lemma follows.
\end{proof}

In our proof we use the following version of the Chernoff bound.
By Lemma~\ref{lem:nega} it can be applied to the random variables $X_1, \ldots, X_k$.

\begin{theorem}[\cite{panconesi}]\label{thm:chb}
Let $X = \sum_{i=1}^k X_i$, where $0 \leq X_i \leq 1$ and $X_i$ are negatively associated.
Let $t > 2e\mathbb{E}(X)$.
Then $\mathbb{P}(X > t) \leq 2^{-t}$.
\end{theorem}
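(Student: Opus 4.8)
The plan is to prove this via the standard exponential moment (Markov--Chernoff) method, with the one nonstandard ingredient being the use of negative association to factorize the moment generating function. First I would fix any $\lambda > 0$ and apply Markov's inequality to the nonnegative random variable $e^{\lambda X}$, obtaining $\mathbb{P}(X > t) = \mathbb{P}(e^{\lambda X} > e^{\lambda t}) \leq e^{-\lambda t}\,\mathbb{E}[e^{\lambda X}]$. Writing $\mathbb{E}[e^{\lambda X}] = \mathbb{E}[\prod_{i=1}^k e^{\lambda X_i}]$, the goal becomes to bound this expectation by the product $\prod_i \mathbb{E}[e^{\lambda X_i}]$. This is exactly where the hypothesis is needed: for independent variables the factorization is an equality, and for negatively associated variables it becomes an inequality in the favorable direction because $x \mapsto e^{\lambda x}$ is nondecreasing.

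In our application the $X_i$ are Boolean, and I would exploit this to reduce the factorization to precisely the statement of Lemma~\ref{lem:nega}. Using $e^{\lambda X_i} = 1 + (e^\lambda - 1)X_i$ for a Boolean $X_i$ and expanding the product gives $\prod_i e^{\lambda X_i} = \sum_{I \subseteq \{1,\dots,k\}} (e^\lambda - 1)^{|I|} \prod_{i \in I} X_i$. Since $\prod_{i\in I} X_i$ equals $1$ precisely when $\bigwedge_{i\in I} X_i = 1$ and equals $0$ otherwise, taking expectations yields $\mathbb{E}[\prod_i e^{\lambda X_i}] = \sum_I (e^\lambda-1)^{|I|}\,\mathbb{P}(\bigwedge_{i\in I}X_i=1)$. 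As $e^\lambda - 1 > 0$, Lemma~\ref{lem:nega} lets me replace each $\mathbb{P}(\bigwedge_{i\in I}X_i=1)$ by $\prod_{i\in I}\mathbb{P}(X_i=1)$, and re-factoring the resulting sum gives $\mathbb{E}[\prod_i e^{\lambda X_i}] \leq \prod_i (1 + (e^\lambda-1)p_i) = \prod_i \mathbb{E}[e^{\lambda X_i}]$, where $p_i = \mathbb{P}(X_i=1)$. For the general $[0,1]$ statement cited from~\cite{panconesi} one instead invokes the full definition of negative association applied to the disjoint nondecreasing functions $e^{\lambda X_i}$, which delivers the same factorization directly.

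With the factorization in hand, the rest is the routine Chernoff optimization. Using $1 + (e^\lambda-1)p_i \leq \exp((e^\lambda-1)p_i)$ and summing exponents with $\mu := \mathbb{E}[X] = \sum_i p_i$ gives $\mathbb{E}[e^{\lambda X}] \leq \exp(\mu(e^\lambda-1))$, hence $\mathbb{P}(X > t) \leq \exp(\mu(e^\lambda-1) - \lambda t)$. I would then choose $\lambda = \ln(t/\mu) > 0$ (valid since $t > 2e\mu > \mu$), which collapses the exponent to $t - \mu - t\ln(t/\mu) \leq t - t\ln(t/\mu)$ and yields the clean bound $\mathbb{P}(X > t) \leq (e\mu/t)^t$. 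Finally, the hypothesis $t > 2e\mu$ means $e\mu/t < 1/2$, so $(e\mu/t)^t < 2^{-t}$, completing the argument.

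The only genuinely delicate point is the second step---turning the qualitative negative-association hypothesis into the quantitative moment-generating-function factorization. Everything before and after it is the textbook Chernoff computation. I expect the Boolean expansion to make this step completely clean here, so the main thing to watch is simply ensuring $e^\lambda - 1 > 0$ so that the inequality of Lemma~\ref{lem:nega} points the right way; for the general $[0,1]$ version, the corresponding care is in citing the correct closure property of negatively associated families.
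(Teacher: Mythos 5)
Your proof is correct. Note, however, that the paper does not actually prove this statement: it is imported verbatim from~\cite{panconesi}, so there is no internal proof to compare against. Your argument is the standard Markov--Chernoff computation, and the arithmetic checks out: the optimal choice $\lambda=\ln(t/\mu)$ gives $\mathbb{P}(X>t)\le (e\mu/t)^t$, and $t>2e\mu$ forces $e\mu/t<1/2$, hence the bound $2^{-t}$ (the degenerate case $\mu=0$ is trivial). One genuinely valuable feature of your write-up is the Boolean expansion $e^{\lambda X_i}=1+(e^\lambda-1)X_i$: it shows that the moment-generating-function factorization needs only the upper-orthant inequality $\mathbb{P}(\bigwedge_{i\in I}X_i=1)\le\prod_{i\in I}\mathbb{P}(X_i=1)$, which is exactly what the paper's Lemma~\ref{lem:nega} establishes --- and which is strictly weaker than the textbook definition of negative association that the theorem statement nominally invokes. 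Your route therefore closes a small gap between what Lemma~\ref{lem:nega} proves and what Theorem~\ref{thm:chb} is cited as requiring, at the cost of only applying to Boolean summands; for the general $[0,1]$ case one indeed needs the full covariance-based definition of negative association together with the convexity bound $e^{\lambda x}\le 1+(e^\lambda-1)x$ on $[0,1]$, as you indicate.
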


We now proceed with the main part of the proof of the second claim of Theorem~\ref{thm:alpha-beta-plb}, which states that $G$ has \dpl neighborhoods (with high probability).
We show that the property of Definition~\ref{def:dpl_neigh} holds for a single vertex with high probability and then use union bound.

Set $t = \max(c\log n, 2e\mathbb{E}(X))$.
By Theorem~\ref{thm:chb}, we have $\mathbb{P}(X > t) \leq 2^{-t} \leq n^{-c}$.
Thus, with high probability a vertex with degree $k$ has at most $t$ neighbors.
We now show that $t \leq c_2(\max(\log n, k \sum_{i=k}^{n-1} i^{1-\pl}))$ for some universal constant $c_2$.
The case when $t = c \log n$ is trivial, as we may set $c_2 = c$.
Now, assume that $t = (\log n)2e\mathbb{E}(X)$.
We have that $\mathbb{E}(X) \leq Ck^{3-\pl}$.
By Lemma~\ref{lem:sum-rev}, $k^{2-\pl} \leq C'\sum_{i=k}^{n-1} i^{1-\pl}$ for some universal constant $C'$.
The Lemma requires that $k \leq (n-1)/2$, which follows from the fact that the degrees are bounded by $O(n^{1/\pl})$.
Hence,
\[
t = 2e\mathbb{E}(X) \leq 2e Ck^{3-\beta} = O(k \sum_{i=k}^{n-1} i^{1-\pl}),
\]
which completes the proof of Theorem~\ref{thm:alpha-beta-plb}.
}

\section{Real-World Networks are Power-Law Bounded}
\label{sec:def_experiments}
In this section we verify our definitions from Section~\ref{sec:plb} on real-world networks.
The majority of our graphs comes from Stanford Large Network Dataset Collection~\cite{stanford}.
In addition, we analyze the global flights network~\cite{openflights}, as well as WIW social network degree distribution~\cite{PhysRevE.69.036131}.\footnote{We thank the authors of~\cite{PhysRevE.69.036131} for sharing with us this data.}

First, we focus on Definition~\ref{def:dpl}.
We compute the degree distributions of each network and then try to choose the parameters $c_1, \pl$ and $t$,
so that our bound on the number of vertices of given degree is as tight as possible.
At the same time we ensure that $c_1$ is at most $5$ (as it is supposed to be a constant) and try to maximize $\pl$,
since larger $\pl$ implies better running time bounds of our algorithms.
The results of this adjustments are shown in Table~\ref{table:plb_adjustment}.
Observe that the value of $t$ is very small compared to $n$.
Some of the graphs in the data sets are directed.
For such graphs we make two adjustments.
Either we drop the orientations of the edges (``(directed, in-degree + out-degree)'' in Table~\ref{table:plb_adjustment}), or we slightly modify Definition~\ref{def:dpl} and only consider the outdegrees of vertices (``(directed, out-degree)'' in Table~\ref{table:plb_adjustment}).
For some of the networks, in Fig.~\ref{fig:dpl_graphs} we also show the degree distribution, as well as the bound of Definition~\ref{def:dpl}.
In order to show the data with more detail, we plot not only the numbers of vertices, whose degrees belong to $[2^d, 2^{d+1})$ (actual and the upper bounds of Definition~\ref{def:dpl}), but also the numbers of vertices of degree belonging to $[k, 2k)$ for each $1 \leq k < n$.

In the case when $\pl<2$, in Fig.~\ref{fig:dpl_graphs} the bound from Corollary~\ref{corollary-double-power-law} is marked with green line.
For Epinions and WikiTalk graphs the critical degree, when the second power law starts, is predicted rather well. Note that this is not the case
for Facebook graph as the maximum number of friends one can have is limited to $5000$. The high degree part of the distribution is cut off at this number.
While the critical degree is predicted decently, the slope of the second power law distribution is underestimated. This is most probably due to the
worst-case form of our bounds which are overly pessimistic with respect to the actual trend.

Then, we move on to Definition~\ref{def:dpl_neigh}.
For each network we use the previously computed parameters $\pl$ and $t$ and find the smallest value of $c_2$, for which the definition is satisfied.
We skip the graphs, where $\pl < 2$, as Definition~\ref{def:dpl_neigh} does not apply to them.
The values of $c_2$ obtained this way are also shown in Table~\ref{table:plb_adjustment}.
Observe that for every network the computed value of $c_2$ is less than $8.06$, and for a big majority of them it is less than $2$.
This confirms that the property of Definition~\ref{def:dpl_neigh} is indeed present in real-world graphs.

Table~\ref{table:plb_adjustment} also contains two adjustments, in which we force the value of $t$ to be $0$.
In some sense this is similar to fitting the standard definition of a power law distribution to our data.
However, this causes the value of $\pl$ to increase and makes our bounds much further from the real data, as shown in Fig~\ref{fig:dpl_graphs_t0}.

\section{Counting Triangles and Maximal Clique}\label{sec:algorithms}
This section presents our first two algorithms for PLB graphs.
The first algorithm counts triangles, whereas the second one returns the size of the maximal clique.
The algorithms themselves are easy and should be considered folklore.
However, we show that in the case of PLB graphs they perform much better than in the case of general graphs.
Then we obtain even better running time bounds for graphs with \dpl neighborhoods.
This is the most important contribution of this section, as we believe that it gives the first solid explanation of the good performance of triangle counting and maximum clique algorithms in real-world graphs.

Both our algorithms are based on the same construction.
We first direct the edges of $G$ towards vertices of higher degree.
Formally, let $v_1, \ldots, v_n$ be all vertices of $G$ sorted in non-decreasing order of degrees.
We define $\dir{G}$ to be a graph obtained from $G$ by directing each undirected edge $v_iv_j$ towards $v_{\max(i,j)}$.
Note that since the degrees of vertices are bounded by the number of vertices, we may sort the vertices and build $\dir{G}$ in linear time.
Moreover, note that $\dir{G}$ does not contain any cycles.
Let $b(k)$ be the maximum out-degree in $\dir{G}$ of a vertex of degree $k$ in $G$.

Note that the value $b(k)$ is related to \emph{graph degeneracy}.
We say that a graph is $d$-degenerate if every subgraph has a vertex of degree at most $d$.
In our case $G$ is $d$-degenerate for $d = \max_{i=1, \ldots, n-1} b(i)$.
In $d$-degenerate graphs we can count triangles in $O(dm)$ time~\cite{chiba85}.
Since \dpl graphs are $O(n^{1/\pl})$-degenerate (assuming $t=0$), this can be used to obtain a running time bound of $O(mn^{1/\pl})$, which is the same as the running time given in~\cite{Latapy2008458}.
However, with a slightly more careful analysis, in this section we improve this bound.
While this result is simple, to the best of our knowledge it has not been previously stated explicitly.

We first use the bounds derived in Section~\ref{sec:plb} to bound $b(k)$.
\begin{lemma}
\label{lem:directed_edges}
Let $G$ be a \dpl{} graph with parameters $\pl$ and $t$.
Then $b(k) = O(\min(k, d_{\geq k})) = O(\min(k, n(t+1)^{\pl-1}k^{1-\pl}) = O(n^{1/\pl}(t+1)^{1-1/\pl}).$
\end{lemma}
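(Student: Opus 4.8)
The plan is to chain together the three bounds in the statement, each by an elementary argument building on earlier results.

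First I would establish $b(k) = O(\min(k, d_{\geq k}))$ directly from the construction of $\dir{G}$. The bound $b(k) \le k$ is immediate, since the out-degree of a vertex in $\dir{G}$ cannot exceed its total degree in $G$, which is $k$. For the bound $b(k) \le d_{\geq k}$, the key observation is that each edge is oriented towards the endpoint of higher degree, with ties broken according to the sorted order $v_1, \ldots, v_n$. Consequently, if $v$ has degree $k$, then every out-neighbor $w$ of $v$ satisfies $\deg(w) \ge k$: either $\deg(w) > k$, or $\deg(w) = k$ and $w$ appears later in the ordering. Since the out-neighbors are distinct vertices all of degree at least $k$, there can be at most $d_{\geq k}$ of them, giving $b(k) \le d_{\geq k}$. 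Combining the two bounds yields $b(k) \le \min(k, d_{\geq k})$.

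Next I would invoke Lemma~\ref{lem:high_degree_num}, which gives $d_{\geq k} = O(n(t+1)^{\pl-1}k^{1-\pl})$. Substituting this into the previous bound turns $\min(k, d_{\geq k})$ into $O(\min(k, n(t+1)^{\pl-1}k^{1-\pl}))$, which is the second expression in the statement. Finally I would bound this minimum uniformly over all $k$. The two terms behave oppositely: $k$ is increasing in $k$, while $n(t+1)^{\pl-1}k^{1-\pl}$ is decreasing, since $1 - \pl < 0$ as $\pl > 1$. The minimum of an increasing and a decreasing function is therefore maximized at their crossing point, so it suffices to evaluate the common value there. Setting $k = n(t+1)^{\pl-1}k^{1-\pl}$ gives $k^{\pl} = n(t+1)^{\pl-1}$, i.e. $k = n^{1/\pl}(t+1)^{(\pl-1)/\pl} = n^{1/\pl}(t+1)^{1-1/\pl}$. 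For any $k$ below this value the minimum equals $k$, and for any $k$ above it the minimum equals the decreasing term; in both cases it is at most $n^{1/\pl}(t+1)^{1-1/\pl}$, which yields the claimed bound.

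I expect the only subtle point to be the tie-breaking in the orientation of $\dir{G}$: one must verify that the out-neighbors of a degree-$k$ vertex genuinely all have degree at least $k$ (rather than strictly greater), so that counting them against $d_{\geq k}$ — which includes degree exactly $k$ — is justified. The crossing-point argument in the last step is standard, but should be phrased so that it covers every value of $k$, not merely the crossing point itself.
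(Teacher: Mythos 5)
Your proposal is correct and follows essentially the same route as the paper: $b(k)\le k$ from out-degree being at most degree, $b(k)\le d_{\geq k}$ from the orientation towards higher-degree vertices, substitution of Lemma~\ref{lem:high_degree_num}, and a case split at the crossing point $k = n^{1/\pl}(t+1)^{1-1/\pl}$. Your explicit treatment of the tie-breaking (out-neighbors have degree at least $k$, possibly equal) is a minor elaboration of a step the paper states without detail, not a different argument.
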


\fullversion{
\begin{proof}
Obviously $b(k) \leq k$, since for every $v \in V(G)$, we have $\outdeg_{\dir{G}}(v) \leq \deg_G(v)$.
In addition to that, since the edges are directed towards vertices of higher degree, $b(k) \leq d_{\geq k}$.
The second inequality follows directly from the bound on $d_{\geq k}$ derived in Lemma~\ref{lem:high_degree_num}.

It remains to show that $O(\min(k, n(t+1)^{\pl-1}k^{1-\pl})) = O(n^{1/\pl}(t+1)^{1-1/\pl})$.
Assume that $k \geq n^{1/\pl}(t+1)^{1-1/\pl}$. Then
\[
n(t+1)^{\pl-1}k^{1-\pl} \leq n(t+1)^{\pl-1}n^{1/\pl-1}(t+1)^{2-\pl-1/\pl} = n^{1/\pl} (t+1)^{1-1/\pl},
\]
as desired.
The lemma follows.
\end{proof}
}

\fullversion{If our graph additionally has \dpl{} neighborhoods (see Definition~\ref{def:dpl_neigh}), we may obtain a better bound.}

\begin{lemma}
\label{lem:directed_edges_dpl}
Let $G$ be a \dpl{} graph with parameters $\pl > 2$ and $t$, and \dpl{} neighborhoods.
Then $b(k) = O(\min(n(t+1)^{\pl-1}k^{1-\pl}, \log n + (t+1)^{\pl-2}k^{3-\pl})) = O(\log n + (t+1)^{\pl/2 - 1/2}n^{3/2-\pl/2})$
\end{lemma}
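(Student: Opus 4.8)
The plan is to reduce the statement to the two bounds already available and then to optimize their pointwise minimum over $k$. Recall that $b(k)$ is the maximum, over vertices $v$ of degree $k$, of the out-degree of $v$ in $\dir{G}$; since the edges of $\dir{G}$ point towards vertices of higher degree (ties broken by the sorted order $v_1,\dots,v_n$), the out-neighbours of such a $v$ all lie among its neighbours of degree at least $k$. Hence two upper bounds apply simultaneously. First, by Lemma~\ref{lem:directed_edges} we have $b(k) = O(\min(k, d_{\geq k})) = O(n(t+1)^{\pl-1}k^{1-\pl})$; call this quantity $A$. Second, because $G$ has \dpl{} neighborhoods, Lemma~\ref{lem:dpl_neigh} bounds the number of neighbours of $v$ of degree at least $k$ by $O(\max(\log n, (t+1)^{\pl-2}k^{3-\pl}))$; writing $B = (t+1)^{\pl-2}k^{3-\pl}$ this yields $b(k) = O(\log n + B)$. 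Combining the two gives $b(k) = O(\min(A, \log n + B))$, which is exactly the first equality in the statement.

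For the second equality I would first peel off the additive $\log n$ using the elementary inequality $\min(A, \log n + B) \le \log n + \min(A, B)$ (verified by splitting into the two cases according to which argument of the outer minimum is smaller). It then suffices to bound $\min(A,B)$ by a quantity independent of $k$. The key observation is that, as functions of $k$, the term $A \propto k^{1-\pl}$ is decreasing while (for $2 < \pl < 3$) the term $B \propto k^{3-\pl}$ is increasing, so their minimum is maximized where the two curves meet, namely at $k^{\star} = \sqrt{n(t+1)}$. Substituting $k^{\star}$ into either expression gives $A(k^{\star}) = n^{3/2-\pl/2}(t+1)^{\pl/2-1/2}$, the claimed value. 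Equivalently, and this is the cleaner way to write it, I would invoke the weighted AM--GM inequality $\min(A,B) \le A^{(3-\pl)/2} B^{(\pl-1)/2}$, whose exponents sum to $1$ and lie in $[0,1]$ precisely when $2 < \pl \le 3$; a short computation shows that the powers of $k$ cancel and the right-hand side collapses to $n^{3/2-\pl/2}(t+1)^{\pl/2-1/2}$ for \emph{every} $k$, which completes the estimate in this regime.

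The step I expect to require the most care is the regime $\pl \ge 3$, where $B$ is non-increasing in $k$ and the geometric-mean exponents leave $[0,1]$, so the crossing argument above no longer applies verbatim. Here I would argue directly from $b(k) = O(\min(k, \log n + B))$: since $k \mapsto k$ is increasing and $k \mapsto \log n + B$ is decreasing, the minimum is again maximized at their crossing, which now occurs at $k = \Theta(t+1)$ with value $\Theta(t+1)$, giving $b(k) = O(\log n + t)$. Under the standing convention $t = O(\polylog n)$ adopted after Definition~\ref{def:dpl} this is $O(\log n)$ up to polylogarithmic factors, and it is dominated by the stated bound because $(t+1)^{\pl/2-1/2}n^{3/2-\pl/2}$ is the quantity that vanishes once $3/2 - \pl/2 < 0$. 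I would be careful to flag explicitly that this final comparison leans on $t$ being small, so that the single closed-form bound $O(\log n + (t+1)^{\pl/2-1/2}n^{3/2-\pl/2})$ can be presented uniformly across $2 < \pl$ rather than split into two cases.
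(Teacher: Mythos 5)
Your proof of the first equality is exactly the paper's: combine Lemma~\ref{lem:directed_edges} with Lemma~\ref{lem:dpl_neigh}. For the second equality the paper simply writes that ``to balance the terms we take $k=\sqrt{(t+1)n}$'' and substitutes; your version supplies the justification the paper omits, namely that for $2<\pl<3$ one term is decreasing and the other increasing in $k$, so the pointwise minimum is maximized at the crossing point --- or, equivalently and more cleanly, your weighted AM--GM bound $\min(A,B)\le A^{(3-\pl)/2}B^{(\pl-1)/2}$, which eliminates the $k$-dependence for every $k$ at once. That is the same computation, properly argued. Your separate treatment of $\pl\ge 3$ is a genuine addition rather than a restyling: there both terms of the minimum are non-increasing in $k$, the implicit crossing argument no longer applies, and, as you correctly flag, the closed-form bound $O(\log n+(t+1)^{\pl/2-1/2}n^{3/2-\pl/2})$ dominates the resulting $O(\log n+t)$ only when $t$ is small (for $\pl=4$ and $t=n^{0.1}$, a vertex of degree $k=t+1$ may have $b(k)=\Theta(t)$ while the stated closed form degenerates to $O(\log n)$). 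The paper silently relies on its standing assumption $t=O(\polylog n)$ (and $t=O(n^{\epsilon})$ in the clique application) to cover this regime; making that dependence explicit, as you do, is the right call.
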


\fullversion{
\begin{proof}
Observe that $b(k)$ is at most the number of neighbors of degree at least $k$ in the neighborhood.
Thus, by Lemma~\ref{lem:dpl_neigh}, $b(k) = O(\max(\log n, (t+1)^{\pl-2}k^{3-\pl}))$.
By Lemma~\ref{lem:directed_edges} we also have $b(k) = O(n(t+1)^{\pl-1}k^{1-\pl})$.
Thus, we get $b(k) = O(\min(n(t+1)^{\pl-1}k^{1-\pl}, \log n + (t+1)^{\pl-2}k^{3-\pl})$.

To balance the terms we take $k = \sqrt{(t+1)n}$.
Thus, $O(\min(n(t+1)^{\pl-1}k^{1-\pl}, \log n + (t+1)^{\pl-2}k^{3-\pl})) = O(\log n + (t+1)^{\pl-2}((t+1)n)^{3/2-\pl/2}) = O(\log n + (t+1)^{\pl/2 - 1/2}n^{3/2-\pl/2})$.
\end{proof}
}

\fullversion{\subsection{Counting Triangles}}
\exabstract{\paragraph{Counting Triangles}}
We now show efficient algorithms for counting triangles in an undirected \dpl{} graph $G$ with parameter $\pl$.
Their pseudocodes are given as Algorithms~\ref{alg:triangles_so} and~\ref{alg:triangles}.
The first algorithm is clearly structure-oblivious.
The second one also does not use the structure of the graph explicitly.
However, it takes a parameter $\degth$, which will depend on graph parameters $\pl$ and $t$.
\fullversion{Observe that Algorithm~\ref{alg:triangles_so} can easily be extended to list triangles in the same running time bound.}
\setlength{\textfloatsep}{10pt}
\begin{algorithm}
\begin{algorithmic}[1]
\Function{\ctr}{$G$}
\State{Construct $\dir{G}$}
\State{$triangles := 0$}
\For{$v \in V(G)$}
    \State{$S := $ set of endpoints of outedges of $v$}
    \For{each inedge $wv$ of $v$ in $\dir{G}$}\label{l:eachedge}
        \For{each outedge $wu$ of $w$ in $\dir{G}$}\label{l:inner_for}
            \If{$u \in S$}\label{l:eachpair}
                \State{$triangles := triangles + 1$}
            \EndIf
        \EndFor
    \EndFor
\EndFor
\Return{$triangles$}
\EndFunction
\end{algorithmic}
\caption{\label{alg:triangles_so}Structure-oblivious algorithm for counting triangles}
\end{algorithm}
\begin{algorithm}
\begin{algorithmic}[1]
\Function{\ctrfmm}{$G$, $\degth$}
\State{Construct $\dir{G}$}
\State{$triangles := 0$}
\For{$v \in V(G)$}\label{l:fs}
    \State{$S := $ set of endpoints of outedges of $v$}
    \For{each inedge $wv$ of $v$ in $\dir{G}$}
        \If{$\deg_G(w) \leq \degth$}
            \For{each outedge $wu$ of $w$ in $\dir{G}$}
                \If{$u \in S$}
                    \State{$triangles := triangles + 1$}
                \EndIf
            \EndFor
        \EndIf
    \EndFor
\EndFor\label{l:efs}
\State{$G_\degth := $ subgraph of $G$ induced on vertices of degree more than $\degth$}
\State{\Return{$triangles\;+ $ the number of triangles in $G_\degth$, counted using fast matrix multiplication}}
\EndFunction
\end{algorithmic}
\caption{\label{alg:triangles}Algebraic algorithm for counting triangles}
\end{algorithm}
\exabstract{\vspace{-10pt}}
\begin{lemma}\label{lem:ctr}
Algorithms \ctr and \ctrfmm are correct.
Their running times are $O(\sum_{i=1}^{n-1} d_i b(i)^2)$, and $O(\sum_{i=1}^{\degth} d_i b(i)^2 + d_{\geq \degth}^{\omega})$, respectively.
\end{lemma}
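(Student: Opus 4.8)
The plan is to prove correctness first, exploiting that $\dir{G}$ is acyclic so that the sorted degree order $v_1,\dots,v_n$ restricts to a strict total order on the three vertices of any triangle, and then to bound the two running times by a single summation-order trick.

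For correctness of \ctr, I would observe that since each edge is oriented toward its higher-degree endpoint, a triangle on vertices $a<b<c$ (in the sorted order) appears in $\dir{G}$ precisely as the three directed edges $a\to b$, $a\to c$, $b\to c$. The counter is incremented for a triple exactly when there is an inedge $w\to v$, an outedge $w\to u$, and $u$ is an outneighbour of $v$; that is, $w\to v\to u$ together with $w\to u$. Acyclicity of $\dir{G}$ forces $w<v<u$, so $(w,v,u)=(a,b,c)$ and the triangle is counted once, at its middle vertex. I would then verify the other two roles of $v$ contribute nothing: when $v=a$ there is no relevant inedge (the source has indegree $0$ within the triangle), and when $v=c$ the set $S$ of outneighbours of $v$ is empty within the triangle, so the membership test in line~\ref{l:eachpair} always fails.

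For \ctrfmm I would argue that the guard $\deg_G(w)\le\degth$ splits triangles by the degree of their lowest vertex. The first loop is identical to \ctr except that it skips the inner loop whenever the (necessarily smallest-degree) predecessor $w$ of the middle vertex has degree exceeding $\degth$; hence it counts exactly the triangles whose minimum-degree vertex has degree at most $\degth$. Every remaining triangle has all three vertices of degree greater than $\degth$, so it lies entirely in $G_\degth$ and is counted by the fast-matrix-multiplication step. Since no vertex of degree $\le\degth$ belongs to $G_\degth$, the two contributions are disjoint, so the partition is exact and every triangle is counted once.

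For the running times, the dominant work in \ctr is the double loop, whose cost for a fixed $v$ is $\sum_{w:\,w\to v}\outdeg(w)$. The key step is to exchange the order of summation: $\sum_v\sum_{w\to v}\outdeg(w)=\sum_w\outdeg(w)\cdot|\{v:w\to v\}|=\sum_w\outdeg(w)^2$, using $|\{v:w\to v\}|=\outdeg(w)$. Bounding $\outdeg(w)\le b(\deg_G(w))$ and grouping vertices by degree gives $O(\sum_{i=1}^{n-1}d_i b(i)^2)$; building $\dir{G}$, forming each $S$, and the outer loop cost $O(n+m)$, which is absorbed. For \ctrfmm the same exchange restricted to $w$ with $\deg_G(w)\le\degth$ yields $\sum_{i=1}^{\degth}d_i b(i)^2$ for the first loop, while triangles in $G_\degth$ are counted as $\tfrac16\operatorname{tr}(A^3)$ for its adjacency matrix $A$, costing $O(N^\omega)$ with $N=d_{>\degth}\le d_{\ge\degth}$, i.e. $O(d_{\ge\degth}^{\omega})$. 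To absorb the $O(m)$ overhead I would note $\sum_{i>\degth}d_i b(i)\le d_{\ge\degth}\cdot d_{>\degth}\le d_{\ge\degth}^2\le d_{\ge\degth}^{\omega}$ since $\omega\ge 2$. The main obstacle is the correctness bookkeeping—establishing that each triangle is counted exactly once and that, in \ctrfmm, the loop and the matrix step partition the triangles with neither overlap nor omission; the summation-order exchange is the crucial but routine computation.
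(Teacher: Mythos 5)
Your proposal is correct and follows essentially the same route as the paper's proof: correctness via the observation that orienting edges toward higher-degree endpoints makes each triangle's minimum vertex the unique source (so the triangle is detected exactly once, at its middle vertex), the running time by charging each execution of the innermost test to a pair of out-neighbours of $w$, giving $\sum_i d_i b(i)^2$, and for \ctrfmm the exact partition of triangles according to whether the minimum-degree vertex has degree at most $\degth$. Your write-up is somewhat more explicit than the paper's (e.g., the summation exchange and the verification that the other two roles of $v$ contribute nothing), but the underlying argument is the same.
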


\fullversion{
\begin{proof}
Let us first consider the running time of \ctr.
Observe that the body of the \textbf{for} loop in the~\ref{l:eachedge}th line is ran exactly once per each edge of $\dir{G}$.
Thus, the \textbf{for} loop in the~\ref{l:inner_for} line is ran at most $b(\deg(w))$ times for a vertex $w$.
In other words, the~\ref{l:eachpair} line is executed for each pair of vertices $u$ and $v$, which are endpoints of the outedges of $w$.
This requires $O(\sum_{i=1}^{n-1} d_i b(i)^2)$ time.

The set $S$ can be implemented as a Boolean array.
This way we can initialize the set each time in linear time.
Moreover, we can test for membership in constant time.
Moreover, as observed before $\dir{G}$ can be computed in linear time.
Thus, \ctr runs in $O(\sum_{i=1}^{n-1} d_i b(i)^2)$ time.

Concerning correctness, let $v_1, \ldots, v_n$ be all vertices of $G$ sorted in non-decreasing order of degrees.
Consider a triangle $T$.
Let $v$ is the vertex of $T$ that comes first in the sorted order $v_1, \ldots, v_n$.
Then, in $\dir{G}$ the two edges of $T$ that are incident to $v$ are out-edges of $v$.
Thus, the correctness of \ctr follows.

Using similar arguments, we may observe that the first stage of \ctrfmm (lines~\ref{l:fs}--\ref{l:efs}) correctly identifies exactly the triangles that contain at least one vertex of degree at most $\degth$.
Clearly, $G_\degth$ contains exactly the triangles that have not been identified yet.
Since $G_\degth$ has exactly $d_{\geq \degth}$ vertices, the running time of \ctrfmm follows.
\end{proof}
}

\exabstract{We now combine the algorithms with the bounds on $b(k)$ derived in Lemmas~\ref{lem:directed_edges} and~\ref{lem:directed_edges_dpl} to obtain the following four results.
The running times that we derive below are shown in Fig.~\ref{fig:times}.
}

\fullversion{We now combine the algorithms with the bounds on $b(k)$ derived in lemmas~\ref{lem:directed_edges} and~\ref{lem:directed_edges_dpl} to obtain four running time bounds of our algorithms, that depend on the algorithm used and on whether the graph has \dpl neighborhoods.
These running times are shown in Fig.~\ref{fig:times}.}

\begin{theorem}
\label{lem:listing_triangles}
Let $G=(V,E)$, $n = |V|$ be a \dpl{} graph with parameters $\pl$ and $t$.
Then, algorithm \ctr can compute the number of triangles in $G$ in: (a) $O(n^{3/\pl} (t+1)^{3-3/\pl})$ time for $1 < \pl < 3$, (b) $O(n \log n (t+1)^2)$ time for $\pl = 3$, (c) $O(n (t+1)^2)$ time for $\pl > 3$.
\end{theorem}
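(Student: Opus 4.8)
The plan is to bound the running time $O(\sum_{i=1}^{n-1} d_i b(i)^2)$ from Lemma~\ref{lem:ctr} by splitting the factor $b(i)$ according to the two bounds given in Lemma~\ref{lem:directed_edges}, namely $b(i) = O(\min(i, n(t+1)^{\pl-1}i^{1-\pl}))$. The natural threshold is the degree $\degth = n^{1/\pl}(t+1)^{1-1/\pl}$ at which the two terms in the minimum balance. For small degrees $i \leq \degth$ I would use $b(i) = O(i)$, and for large degrees $i > \degth$ I would use $b(i) = O(n(t+1)^{\pl-1}i^{1-\pl})$. This reduces everything to estimating two sums of the form $\sum d_i \cdot g(i)$ with $g$ a monomial, which is exactly what Lemma~\ref{lem:rtime_sum} is designed for: it converts such a sum into $O(1)\,n(t+1)^{\pl-1}\sum (i+t)^{-\pl} g(i)$, after which Lemma~\ref{lem:sum} finishes the arithmetic.

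Concretely, I would first handle the low-degree part, taking $f(i) = i^2$ in Lemma~\ref{lem:rtime_sum} to get
\[
\sum_{i=1}^{\degth} d_i b(i)^2 = O\!\left(n(t+1)^{\pl-1}\sum_{i=1}^{\degth} (i+t)^{-\pl} i^2\right) = O\!\left(n(t+1)^{\pl-1}\sum_{i=1}^{\degth} i^{2-\pl}\right),
\]
and then apply Lemma~\ref{lem:sum} with $c = 2-\pl$. Here the case analysis on $\pl$ enters: for $\pl < 3$ the exponent $2-\pl > -1$ so the sum is $O(\degth^{3-\pl})$, for $\pl = 3$ it is $O(\log \degth) = O(\log n)$, and for $\pl > 3$ it is $O(1)$. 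Substituting $\degth = n^{1/\pl}(t+1)^{1-1/\pl}$ and simplifying the powers of $n$ and $(t+1)$ should yield exactly the claimed bounds $O(n^{3/\pl}(t+1)^{3-3/\pl})$, $O(n\log n (t+1)^2)$, and $O(n(t+1)^2)$ in the respective regimes.

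For the high-degree part $i > \degth$ I would bound $\sum_{i > \degth} d_i b(i)^2$ using $b(i) = O(n(t+1)^{\pl-1}i^{1-\pl})$, so $b(i)^2 = O(n^2(t+1)^{2\pl-2}i^{2-2\pl})$. Again feeding $f(i) = i^{2-2\pl}$ (suitably handled since the algorithm sums over actual degrees) into Lemma~\ref{lem:rtime_sum} and then Lemma~\ref{lem:sum}, the relevant exponent is $2-2\pl - \pl = 2-3\pl < -1$ for all $\pl > 1$, so this tail sum is dominated by its first term at $i \approx \degth$ and contributes $O((t+1)^{\pl-1} n \cdot \degth^{3-3\pl} \cdot n^2 (t+1)^{\pl-1})$ up to bookkeeping; after substituting $\degth$ this matches the low-degree contribution, so the two parts are of the same order and the total is as stated. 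I expect the main obstacle to be purely the careful tracking of the $(t+1)$ exponents through the substitution of $\degth$ — verifying that the powers of $(t+1)$ collapse to exactly $3-3/\pl$ in case (a) and to $2$ in cases (b) and (c) — rather than any conceptual difficulty, since the decomposition and the two auxiliary lemmas do all the real work.
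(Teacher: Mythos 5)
Your overall strategy is the same as the paper's: bound the running time $O(\sum_{i=1}^{n-1} d_i b(i)^2)$ from Lemma~\ref{lem:ctr}, split the sum at a threshold $\degth$, use $b(i)=O(i)$ below it and the degree-count bound above it, convert via Lemma~\ref{lem:rtime_sum}, and balance the two contributions at $\degth = n^{1/\pl}(t+1)^{1-1/\pl}$. Two technical points in your sketch, however, do not go through as written.

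First, for the tail $\sum_{i>\degth} d_i b(i)^2$ you propose to feed $f(i)=i^{2-2\pl}$ into Lemma~\ref{lem:rtime_sum}; but that lemma requires $f$ to be \emph{nondecreasing} and bounds prefix sums $\sum_{i=1}^k$, so it does not apply to a decreasing monomial on a tail range. The paper avoids this entirely with a cruder argument: for $i>\degth$ one has $b(i)\le d_{\geq i}\le d_{\geq\degth}$ and $\sum_{i>\degth} d_i \le d_{\geq\degth}$, so the tail is $O(d_{\geq\degth}^3)=O(n^3(t+1)^{3(\pl-1)}\degth^{3(1-\pl)})$, which after balancing gives the same $O(n^{3/\pl}(t+1)^{3-3/\pl})$. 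Your route can be repaired (e.g., by a direct dyadic summation), but as stated it invokes a lemma outside its hypotheses; note also that your intermediate expression carries $(t+1)^{2\pl-2}$ where $(t+1)^{3\pl-3}$ is needed for the exponents to collapse to $3-3/\pl$. Second, in case (c) you discard the shift by replacing $(i+t)^{-\pl}$ with $i^{-\pl}$ and conclude the inner sum is $O(1)$; this yields a total of $O(n(t+1)^{\pl-1})$, which for $\pl>3$ is strictly weaker than the claimed $O(n(t+1)^2)$. To get $(t+1)^2$ one must keep the shift and show $\sum_{i\le\degth}(i+t)^{-\pl}i^2=O((t+1)^{3-\pl})$, as the paper does by splitting the sum at $i=t$ (cf.\ Lemma~\ref{lem:sum-t}). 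Cases (a) and (b) are unaffected by this.
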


\fullversion{
\begin{proof}
By Lemma~\ref{lem:ctr}, the running time is $O(\sum_{i=1}^{n-1} d_i b(i)^2)$.
Let $1 \leq \degth < n$ be a parameter that we fix later.
We split the sum into two pieces and first bound $\sum_{i=1}^\degth d_i O(b(i)^2)$.
By Lemma~\ref{lem:rtime_sum} this can be upper bounded by

\[
O(1)n(t+1)^{\pl-1} \sum_{i=1}^\degth (i+t)^{-\pl}i^2
\]

There are now three cases to consider, depending on the value of $\pl$.
For $\pl > 3$ we have
\[
    \sum_{i=1}^\degth (i+t)^{-\pl} i^2 \leq \sum_{i=1}^t (i+t)^{-\pl} i^2 + \sum_{i=t+1}^\degth i^{2-\pl} \leq (t+1)^{-\pl} \sum_{i=1}^t i^2 + \sum_{i=t+1}^\degth i^{2-\pl} = O((t+1)^{3-\pl}),
\]
so the running time is $O(1)n(t+1)^{\pl-1} (t+1)^{3-\pl} = O((t+1)^2 n)$, regardless of the choice of $\degth$.
Thus, we may set $\degth = n-1$.

For $\pl \leq 3$, we have
\[
    O(1)n(t+1)^{\pl-1} \sum_{i=1}^\degth (i+t)^{-\pl}i^2 \leq O(1)n(t+1)^{\pl-1} \sum_{i=1}^\degth i^{2-\pl}.
\]

For $\pl = 3$ this gives $O(n\log(\degth)(t+1)^2)$.
Again, we set $\degth = n-1$ and obtain a running time of $O(n \log n (t+1)^2)$.
The last case is when $1 < \pl < 3$.
Then, the sum is equal to $O(n(t+1)^{\pl-1} \degth^{3-\pl})$.

In this case we set $\degth < n-1$, so we still need to bound $\sum_{i=\degth+1}^{n-1} d_i O(b(i)^2)$.
We use the fact that $b(i) = O(d_{\geq i})$ (see Lemma~\ref{lem:directed_edges}) and $d_{\geq \degth} = O(n(t+1)^{\pl-1}\degth^{1-\pl})$ (see Lemma~\ref{lem:high_degree_num}):
    \[
        \sum_{i=\degth+1}^{n-1} d_i O(b(i)^2) = \sum_{i=\degth+1}^{n-1} d_i d_{\geq i}^2 \leq \sum_{i=\degth+1}^{n-1} d_i d_{\geq \degth}^2 = O(d_{\geq \degth}^3) = O(n^3(t+1)^{3(\pl-1)} \degth^{3(1-\pl)})
    \]
The overall running time is $O(n(t+1)^{\pl-1}\degth^{3-\pl} + n^3(t+1)^{3(\pl-1)} \degth^{3(1-\pl)})$.
In order to balance the summands, we set $\degth = n^{1/\pl} (t+1)^{1-1/\pl}$ and obtain the running time of $O(n^{3/\pl} (t+1)^{3-3/\pl})$.
\end{proof}
}


\begin{theorem}\label{thm:listing_triangles_fmm}
Let $G$ be a \dpl{} graph with parameters $\pl$ and $t$, where $\pl < 3$.
Then algorithm \ctrfmm with $\degth = (n(t+1)^{\pl-1})^{(\omega-1)/(3-\pl-\omega+\pl\omega)}$ can compute the number of triangles in $G$ in $O((n(t+1)^{\pl-1})^{3.45/(0.45+\pl)})$ time.
\end{theorem}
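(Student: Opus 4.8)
The plan is to start from the running-time expression for \ctrfmm supplied by Lemma~\ref{lem:ctr}, namely $O\bigl(\sum_{i=1}^{\degth} d_i b(i)^2 + d_{\geq \degth}^{\omega}\bigr)$, bound the two summands separately, and then pick $\degth$ so as to balance them. Writing $N = n(t+1)^{\pl-1}$ for brevity, I expect the running time to come out as $O\bigl(N^{2\omega/(3-\pl-\omega+\omega\pl)}\bigr)$, after which the last step is to replace $\omega$ by its numerical upper bound to reach the stated exponent $3.45/(0.45+\pl)$.

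First I would bound the combinatorial term $\sum_{i=1}^{\degth} d_i b(i)^2$. Using the trivial inequality $b(i) \leq i$ from Lemma~\ref{lem:directed_edges}, I replace $b(i)^2$ by $i^2$ and apply Lemma~\ref{lem:rtime_sum} with $f(i)=i^2$, obtaining $O\bigl(N \sum_{i=1}^{\degth}(i+t)^{-\pl} i^2\bigr)$. Since $(i+t)^{-\pl} \leq i^{-\pl}$ and $2-\pl > -1$ exactly when $\pl < 3$, Lemma~\ref{lem:sum} gives $\sum_{i=1}^{\degth} i^{2-\pl} = O(\degth^{3-\pl})$, so this term is $O(N\degth^{3-\pl})$; this mirrors the computation already carried out in the proof of Theorem~\ref{lem:listing_triangles}. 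For the matrix-multiplication term I would invoke Lemma~\ref{lem:high_degree_num}, which yields $d_{\geq\degth} = O(N\degth^{1-\pl})$ and hence $d_{\geq\degth}^{\omega} = O(N^{\omega}\degth^{\omega(1-\pl)})$.

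At this point the running time is $O\bigl(N\degth^{3-\pl} + N^{\omega}\degth^{\omega(1-\pl)}\bigr)$, and I would equalize the two summands. Setting $N\degth^{3-\pl} = N^{\omega}\degth^{\omega(1-\pl)}$ and solving for $\degth$ produces exactly $\degth = N^{(\omega-1)/(3-\pl-\omega+\omega\pl)}$, matching the choice in the statement. Substituting this value into the first summand, the exponent of $N$ becomes $\frac{(3-\pl-\omega+\omega\pl)+(\omega-1)(3-\pl)}{3-\pl-\omega+\omega\pl}$, whose numerator collapses to $2\omega$ after cancellation; the same substitution into the second summand collapses its numerator to $2$, multiplied by the factor $\omega$ already present, confirming that both terms carry the common exponent $\frac{2\omega}{3-\pl-\omega+\omega\pl}$.

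The final and most delicate step is to turn this closed form into the stated constant. Rewriting the denominator as $3-\pl+\omega(\pl-1)$, a one-line differentiation shows that $\frac{2\omega}{3-\pl+\omega(\pl-1)}$ is increasing in $\omega$ whenever $\pl < 3$, since the numerator of its $\omega$-derivative is proportional to $3-\pl>0$. Hence substituting the current upper bound $\omega < 2.373$ only enlarges the exponent, giving $\frac{4.746}{0.627+1.373\pl}=\frac{3.457}{0.457+\pl}$, and a final cross-multiplication, valid precisely because $\pl<3$, shows that $\frac{3.45}{0.45+\pl}$ dominates this quantity, which delivers the claimed bound $O\bigl(N^{3.45/(0.45+\pl)}\bigr)$. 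I expect the balancing to be routine and this monotonicity-and-rounding argument to be the point that demands the most care.
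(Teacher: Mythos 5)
Your proposal is correct and follows essentially the same route as the paper's proof: the same decomposition of the running time from Lemma~\ref{lem:ctr}, the same bound $O(n(t+1)^{\pl-1}\degth^{3-\pl})$ on the combinatorial part (reusing the computation from Theorem~\ref{lem:listing_triangles}), the same use of Lemma~\ref{lem:high_degree_num} for the fast-matrix-multiplication part, and the same balancing yielding the exponent $2\omega/(3-\pl-\omega+\pl\omega)$. Your final numerical step is in fact handled a bit more carefully than in the paper, which plugs in $\omega=2.38$ and rounds, whereas your monotonicity-in-$\omega$ argument combined with $\omega<2.373$ makes the passage to $3.45/(0.45+\pl)$ fully rigorous for $\pl<3$.
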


\fullversion{
\begin{proof}
By Lemma~\ref{lem:ctr}, the running time is $O(\sum_{i=1}^{\degth} d_i b(i)^2 + d_{\geq \degth}^{\omega})$.
In the proof of Theorem~\ref{lem:listing_triangles} we have shown that for $1 < \pl < 3$, $O(\sum_{i=1}^{\degth} d_i b(i)^2) = O(n(t+1)^{\pl-1}\degth^{3-\pl})$.
By Lemma~\ref{lem:high_degree_num}, $d_{\geq \degth} = O(n(t+1)^{\pl-1}\degth^{1-\pl})$
The running time becomes $O(n(t+1)^{\pl-1}\degth^{3-\pl} + n^\omega(t+1)^{\omega(\pl-1)} \degth^{\omega(1-\pl)})$.
We balance both summands:
\begin{align*}
n(t+1)^{\pl-1}\degth^{3-\pl} & = n^\omega(t+1)^{\omega(\pl-1)} \degth^{\omega(1-\pl)}\\
\degth^{3-\pl-\omega+\pl\omega} & = n^{\omega-1}(t+1)^{(\omega-1)(\pl-1)}\\
\degth & = (n(t+1)^{\pl-1})^{(\omega-1)/(3-\pl-\omega+\pl\omega)}\\
\end{align*}
and obtain a running time of
\begin{align*}
O(n(t+1)^{\pl-1}\degth^{3-\pl}) & = O((n(t+1)^{\pl-1})(n(t+1)^{\pl-1})^{(3-\pl)(\omega-1)/(3-\pl-\omega+\pl\omega)})\\
& = O((n(t+1)^{\pl-1})^{(3-\pl)(\omega-1)/(3-\pl-\omega+\pl\omega)+1}) \\
& = O((n(t+1)^{\pl-1})^{2\omega/(3-\pl-\omega+\pl\omega)})
\end{align*}
Setting $\omega = 2.38$, we get
\begin{align*}
O((n(t+1)^{\pl-1})^{4.76/(0.62+1.38\pl)}) = O((n(t+1)^{\pl-1})^{3.45/(0.45+\pl)}).
\end{align*}

For $\pl = 2$, $3.45/(0.45+\pl) \leq 1.41$, so the running time becomes $O(n^{1.41}(t+1)^{1.41})$.
\end{proof}
}
\exabstract{Note that for $\pl = 2$, the running time becomes $O(n^{1.41}(t+1)^{1.41})$.}

\fullversion{If $G$ additionally has \dpl{} neighborhoods, we may obtain a faster algorithm.}

\begin{theorem}
\label{thm:triangles_dpl_neigh}
Let $G$ be a \dpl{} graph with parameters $\pl$ and $t$, where $\pl < 3$.
Moreover, assume that $G$ has \dpl{} neighborhoods.
Then, algorithm \ctr can compute the number of triangles in $G$ in time (a) $O(n^{9/2-3/2\pl}(t+1)^{3/2\pl - 3/2})$ for $2 < \pl < 7/3$,
(b) $\tilde{O}(n (t+1)^2)$ time for $\pl \geq 7/3$.
\end{theorem}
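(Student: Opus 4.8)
The plan is to invoke Lemma~\ref{lem:ctr}, which tells us that the running time of \ctr is $O(\sum_{i=1}^{n-1} d_i b(i)^2)$, and then bound this sum using the stronger estimate on $b(k)$ available for graphs with \dpl neighborhoods. Specifically, Lemma~\ref{lem:directed_edges_dpl} gives $b(k) = O(\min(n(t+1)^{\pl-1}k^{1-\pl},\ \log n + (t+1)^{\pl-2}k^{3-\pl}))$. The idea is to split the outer sum at a threshold $\degth$ to be chosen later: for small-degree vertices ($i \leq \degth$) I would use the neighborhood bound $b(i) = O(\log n + (t+1)^{\pl-2} i^{3-\pl})$, and for large-degree vertices ($i > \degth$) I would fall back on $b(i) = O(n(t+1)^{\pl-1} i^{1-\pl})$, exactly as in the proof of Theorem~\ref{lem:listing_triangles}.

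First I would handle the low-degree part $\sum_{i=1}^{\degth} d_i b(i)^2$. Squaring the neighborhood bound gives (up to constants) two terms, $\log^2 n$ and $(t+1)^{2\pl-4} i^{6-2\pl}$; the cross term is dominated. Applying Lemma~\ref{lem:rtime_sum} with $f(i) = b(i)^2$ (which is nondecreasing and polynomially bounded, so the hypotheses hold), the sum is $O(1)\, n(t+1)^{\pl-1} \sum_{i=1}^{\degth} (i+t)^{-\pl} b(i)^2$. Using $(i+t)^{-\pl} \le i^{-\pl}$ and Lemma~\ref{lem:sum}, the dominant contribution comes from the $(t+1)^{2\pl-4} i^{6-2\pl}$ piece, yielding a sum of the form $\sum_i i^{6-3\pl}$. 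For $2 < \pl < 7/3$ the exponent $6 - 3\pl > -1$, so this is $O(\degth^{7-3\pl})$ and the whole low-degree contribution is roughly $O(n (t+1)^{\pl-3} \degth^{7-3\pl})$; for $\pl \ge 7/3$ the exponent is $\le -1$ and the sum converges to a constant (or a $\log$ at the boundary), which immediately gives the $\tilde{O}(n(t+1)^2)$ bound in case (b) once I also verify the $\log^2 n$ term only contributes a polylogarithmic factor.

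For the high-degree part $\sum_{i=\degth+1}^{n-1} d_i b(i)^2$ I would reuse the estimate from the proof of Theorem~\ref{lem:listing_triangles}, namely $b(i) = O(d_{\geq i}) = O(n(t+1)^{\pl-1} i^{1-\pl})$ and $\sum_{i>\degth} d_i b(i)^2 = O(d_{\geq \degth}^3) = O(n^3 (t+1)^{3(\pl-1)} \degth^{3(1-\pl)})$. Then for the regime $2 < \pl < 7/3$ I would balance the two contributions by choosing $\degth$ so that $n(t+1)^{\pl-3}\degth^{7-3\pl}$ and $n^3(t+1)^{3(\pl-1)}\degth^{3(1-\pl)}$ are equal; solving the exponent equation $7 - 3\pl - 3(1-\pl) = 7 - 3 = 4$ in the degree exponent should pin down $\degth = \Theta\bigl((n(t+1))^{1/2}\bigr)$, matching the balancing point $k = \sqrt{(t+1)n}$ already used in Lemma~\ref{lem:directed_edges_dpl}. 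Substituting back is expected to produce the stated $O(n^{9/2 - 3/2\pl}(t+1)^{3/2\pl - 3/2})$.

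The main obstacle I anticipate is bookkeeping the two-term $\min$ in the $b(k)$ bound cleanly: squaring introduces several terms and the correct dominant term flips depending on whether $\pl$ is above or below $7/3$, and I must make sure the $\max(\log n, \cdot)$ never contributes more than the claimed polylogarithmic overhead in case (b). I would be careful that the threshold $\degth = \Theta(\sqrt{(t+1)n})$ genuinely lies in the range where Lemma~\ref{lem:directed_edges_dpl} applies (i.e. $\degth \ge n^{1/\pl}(t+1)^{1-1/\pl}$ is \emph{not} required here, since the neighborhood bound is what governs small degrees), and that the exponent arithmetic in the final substitution is exact; the rest is a routine application of Lemma~\ref{lem:sum} and Lemma~\ref{lem:rtime_sum}.
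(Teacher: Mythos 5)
Your proposal follows the paper's proof essentially step for step: the same reliance on Lemma~\ref{lem:ctr}, the same split at a threshold $\degth$ with the neighborhood bound of Lemma~\ref{lem:directed_edges_dpl} used below the threshold and the $b(i)=O(d_{\geq i})$ bound above it, and the same balancing point $\degth=\sqrt{n(t+1)}$; all the $n$-exponents come out correctly. Two pieces of $t$-bookkeeping need repair, however. First, the prefactor of the low-degree contribution is $n(t+1)^{3\pl-5}$ (from $(t+1)^{\pl-1}\cdot(t+1)^{2\pl-4}$), not $n(t+1)^{\pl-3}$; with your prefactor the balancing equation would yield $\degth=\sqrt{n}\,(t+1)^{\pl/2}$ rather than the $\sqrt{n(t+1)}$ you assert, so the threshold you state is right but does not follow from your own arithmetic. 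Second, and more substantively, in case (b) you cannot simply replace $(i+t)^{-\pl}$ by $i^{-\pl}$ and invoke Lemma~\ref{lem:sum}: that gives $\sum_i i^{6-3\pl}=O(1)$ and hence a total of $O(n(t+1)^{3\pl-5})$, which for $\pl$ close to $3$ is as large as $n(t+1)^{4}$ and misses the claimed $\tilde{O}(n(t+1)^2)$. The paper instead applies Lemma~\ref{lem:sum-t} to $\sum_i (i+t)^{-\pl} i^{6-2\pl}$, exploiting the shift to obtain $O((t+1)^{7-3\pl})$, whose negative exponent for $\pl>7/3$ exactly cancels the excess $(t+1)$ factors. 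With those two corrections your argument coincides with the paper's.
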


\fullversion{
\begin{proof}
By Lemma~\ref{lem:ctr} the running time is $O(\sum_{i=1}^{n-1} d_i b(i)^2)$.
By Lemma~\ref{lem:directed_edges_dpl}, $b(k) = O(\min(n(t+1)^{\pl-1}k^{1-\pl}, \log n + (t+1)^{\pl-2}k^{3-\pl}))$.
In particular, $b(k) = O(\log n + (t+1)^{\pl-2}k^{3-\pl})$.

Again, split the sum using a parameter $\degth$ that we fix later.

We first bound the sum of the first $\degth$ summands (we use Lemma~\ref{lem:rtime_sum}):
\begin{align*}
\sum_{i=1}^{\degth} d_i b(i)^2 & = O(1) n (t+1)^{\pl-1} \sum_{i=1}^\degth b(i)^2 (i+t)^{-\pl}\\
 & = O(n (t+1)^{\pl-1}\sum_{i=1}^\degth (\log^2 n + (t+1)^{2\pl-4}i^{6-2\pl}) (i+t)^{-\pl})\\
 & = O(n (t+1)^{3\pl-5} (\sum_{i=1}^\degth \log^2 n (i+t)^{-\pl}) + (\sum_{i=1}^\degth i^{6-2\pl} (i+t)^{-\pl}))\\
 & = O(n (t+1)^{3\pl-5} ((t+1)^{1-\pl} \log^2 n  + (\sum_{i=1}^\degth i^{6-2\pl} (i+t)^{-\pl}))\\
 & = O(n\log^2 n (t+1)^{2\pl-4} + n (t+1)^{3\pl-5} \sum_{i=1}^\degth i^{6-2\pl} (i+t)^{-\pl})\\
\end{align*}
We use Lemma~\ref{lem:sum-t}.
If $\pl \geq 7/3$, the sum is bounded by $O((\delta+1)^{7-3\pl})$.
Thus, if we set $\degth = n-1$, the running time becomes $\tilde{O}(n (t+1)^2)$.

It remains to consider the case when $2 < \pl < 7/3$.
Then, we assume that $\degth = n^{\Omega(1)}$ (which we can do, since we are free to choose $\degth$).
The sum can be bounded by $O(\degth^{7-3\pl})$, so $\sum_{i=1}^{\degth} d_i b(i)^2 = O(n(t+1)^{3\pl-5}\degth^{7-3\pl})$.

To obtain the running time we still need to bound $\sum_{i=\degth+1}^{n-1} d_i b(i)^2$.
In the proof of Theorem~\ref{lem:listing_triangles}, we have shown that $\sum_{i=\degth+1}^{n-1} d_i b(i)^2 = O(n^3(t+1)^{3(\pl-1)}\degth^{3(1-\pl)})$.
We balance $n (t+1)^{3\pl-5} \degth^{7-3\pl}$ and $n^3(t+1)^{3(\pl-1)}\degth^{3(1-\pl)}$:
\begin{align*}
n (t+1)^{3\pl-5} \degth^{7-3\pl} & = n^3(t+1)^{3(\pl-1)}\degth^{3(1-\pl)}\\
 \degth^4 & = n^2(t+1)^{2}\\
 \degth & = \sqrt{n(t+1)}
\end{align*}
The running time becomes $O(n^{9/2-3/2\pl}(t+1)^{3/2\pl - 3/2})$.
\end{proof}
}
Thanks to Theorem~\ref{thm:alpha-beta-plb}, Theorem~\ref{thm:triangles_dpl_neigh} applies (whp) to graphs generated by erased configuration model.
Thus, this algorithm generalizes and strengthens the result of Berry et al.~\cite{berry2014} by showing whp bounds on the running time 
instead of bounds in expectation. 

\begin{theorem}
Let $G$ be a \dpl{} graph with parameters $2 < \pl < 7/3$ and $t$.
Moreover, assume that $G$ has \dpl{} neighborhoods.
Then algorithm \ctrfmm for $\degth = n^{(\omega-1)/(7-\omega+(\omega-3)\pl)} (t+1)^{1 - 2/(7-\omega+(\omega-3)\pl)}$ can compute the number of triangles in $G$ in $O(n^{(23.04 - 7.68\pl)/(7.46-\pl)})(t+1)^{(7.67(\pl-1))/(7.46-\pl)})$ time.
\end{theorem}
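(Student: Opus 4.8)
The plan is to follow the template of the proof of Theorem~\ref{thm:listing_triangles_fmm}, but to feed it the sharper bound on the small-degree contribution that becomes available once $G$ has \dpl{} neighborhoods. By Lemma~\ref{lem:ctr} the running time of \ctrfmm is $O(\sum_{i=1}^{\degth} d_i b(i)^2 + d_{\geq \degth}^{\omega})$, so the whole argument reduces to bounding these two terms separately and then choosing $\degth$ to balance them.

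First I would import the computation already carried out in the proof of Theorem~\ref{thm:triangles_dpl_neigh}. There, using the \dpl{} neighborhood bound $b(k) = O(\log n + (t+1)^{\pl-2}k^{3-\pl})$ from Lemma~\ref{lem:directed_edges_dpl} together with Lemma~\ref{lem:rtime_sum} and Lemma~\ref{lem:sum-t}, it is shown that for $2 < \pl < 7/3$ and $\degth = n^{\Omega(1)}$ one has $\sum_{i=1}^{\degth} d_i b(i)^2 = O(n(t+1)^{3\pl-5}\degth^{7-3\pl})$. For the heavy part, Lemma~\ref{lem:high_degree_num} gives $d_{\geq \degth} = O(n(t+1)^{\pl-1}\degth^{1-\pl})$, hence $d_{\geq \degth}^{\omega} = O(n^{\omega}(t+1)^{\omega(\pl-1)}\degth^{\omega(1-\pl)})$. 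Thus the running time is $O\bigl(n(t+1)^{3\pl-5}\degth^{7-3\pl} + n^{\omega}(t+1)^{\omega(\pl-1)}\degth^{\omega(1-\pl)}\bigr)$.

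Next I would balance the two summands by setting them equal. Collecting the powers of $\degth$ yields the exponent $7-3\pl-\omega(1-\pl) = 7-\omega+(\omega-3)\pl$, which is exactly the denominator appearing in the statement, and solving gives
\[
\degth = n^{(\omega-1)/(7-\omega+(\omega-3)\pl)}\,(t+1)^{(5-\omega+(\omega-3)\pl)/(7-\omega+(\omega-3)\pl)}.
\]
This matches the stated value once one rewrites the $(t+1)$-exponent as $(D-2)/D = 1 - 2/(7-\omega+(\omega-3)\pl)$, with $D := 7-\omega+(\omega-3)\pl$. Substituting $\degth$ back into the balanced first summand $n(t+1)^{3\pl-5}\degth^{7-3\pl}$, the cross-terms $3\pl\,D$ in the $(t+1)$ bookkeeping cancel, so the numerators collapse to the clean symbolic forms: the $n$-exponent becomes $1 + (\omega-1)(7-3\pl)/D = 2\omega(3-\pl)/D$ and the $(t+1)$-exponent becomes $(3\pl-5) + (D-2)(7-3\pl)/D = 2\omega(\pl-1)/D$. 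Plugging in $\omega = 2.38$ and dividing numerator and denominator by $0.62$ turns $2\omega(3-\pl)/D$ into $(23.04-7.68\pl)/(7.46-\pl)$ and $2\omega(\pl-1)/D$ into $7.67(\pl-1)/(7.46-\pl)$, which is the claimed bound.

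The main obstacle is bookkeeping rather than insight: I would have to track the $(t+1)$ exponents carefully through the balancing step and verify that the cross-terms cancel so that the exponents simplify to the clean $2\omega(3-\pl)/D$ and $2\omega(\pl-1)/D$ forms before the numerical substitution. I would also need to confirm that the hypothesis $\degth = n^{\Omega(1)}$ used to import the small-degree sum bound from Theorem~\ref{thm:triangles_dpl_neigh} is consistent with the balanced value of $\degth$ across the entire range $2 < \pl < 7/3$; this holds because the $n$-exponent $(\omega-1)/(7-\omega+(\omega-3)\pl)$ stays bounded away from $0$ there (the denominator is positive and $O(1)$ throughout the range).
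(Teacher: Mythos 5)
Your proposal is correct and follows essentially the same route as the paper's proof: it imports the bound $\sum_{i=1}^{\degth} d_i b(i)^2 = O(n(t+1)^{3\pl-5}\degth^{7-3\pl})$ from the proof of Theorem~\ref{thm:triangles_dpl_neigh}, combines it with the $O(n^{\omega}(t+1)^{\omega(\pl-1)}\degth^{\omega(1-\pl)})$ cost of the fast-matrix-multiplication phase via Lemma~\ref{lem:high_degree_num}, and balances the two terms to recover the stated $\degth$ and the exponents $2\omega(3-\pl)/D$ and $2\omega(\pl-1)/D$ before substituting $\omega=2.38$. The only addition beyond the paper is your explicit check that the balanced $\degth$ is $n^{\Omega(1)}$ throughout $2<\pl<7/3$, which is a correct and harmless refinement.
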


\fullversion{
\begin{proof}
By Lemma~\ref{lem:ctr} the running time is $O(\sum_{i=1}^{\degth} d_i b(i)^2 + d_{\geq \degth}^{\omega})$.
In the proof of Theorem~\ref{thm:triangles_dpl_neigh} we have shown that for $2 < \pl < 7/3$ and $\degth = n^{\Omega(1)}$, $\sum_{i=1}^{\degth} d_i b(i)^2 = O(n (t+1)^{3\pl-5} \degth^{7-3\pl})$.
On the other hand, as in the proof of Theorem~\ref{thm:listing_triangles_fmm}, it takes $O(n^\omega(t+1)^{\omega(\pl-1)} \degth^{\omega(1-\pl)})$ to process vertices of degree at least $\degth$.
We balance both times to find the optimal choice for $\degth$.
\begin{align*}
n (t+1)^{3\pl-5} \degth^{7-3\pl} & = n^\omega(t+1)^{\omega(\pl-1)} \degth^{\omega(1-\pl)}\\
      \degth^{7-\omega+(\omega-3)\pl} & = n^{\omega-1} (t+1)^{(\omega-3)\pl - \omega + 5}\\
      \degth & = n^{(\omega-1)/(7-\omega+(\omega-3)\pl)} (t+1)^{1 - 2/(7-\omega+(\omega-3)\pl)}
\end{align*}
By plugging this back, we obtain the running time of
\begin{align*}
O(n^\omega(t+1)^{\omega(\pl-1)}& n^{((\omega-1)\omega(1-\pl))/(7-\omega+(\omega-3)\pl)} (t+1)^{(1 - 2/(7-\omega+(\omega-3)\pl))\omega(1-\pl)})\\
       &  = O(n^{\omega(\omega-\omega\pl-1+\pl+7-\omega+(\omega-3)\pl)/(7-\omega+(\omega-3)\pl)} (t+1)^{(2\omega(\pl-1))/(7-\omega+(\omega-3)\pl)})\\
        & = O(n^{(2\omega(3-\pl))/(7-\omega+(\omega-3)\pl)}(t+1)^{(2\omega(\pl-1))/(7-\omega+(\omega-3)\pl)})
\end{align*}
Setting $\omega=2.38$, we get $O(n^{(14.28-4.76\pl)/(4.62-0.62\pl)}(t+1)^{(4.76(\pl-1))/(4.62-0.62\pl)}) = O(n^{(23.04 - 7.68\pl)/(7.46-\pl)}(t+1)^{(7.67(\pl-1))/(7.46-\pl)})$.
For $\pl \to 2_{+}$ this becomes $O(n^{1.41}(t+1)^{1.41})$.
For $\pl \to 7/3$, it is $O(n(t+1)^2)$.
\end{proof}
}
\exabstract{Note that for $\pl \to 2_{+}$, the running time is $O(n^{1.41}(t+1)^{1.41})$, whereas for $\pl \to 7/3$, it is $O(n(t+1)^2)$.}

\fullversion{\subsection{Finding Maximal Clique}}
\exabstract{\paragraph{Finding Maximal Clique}}
We now show an efficient algorithm for finding the largest clique in \dpl{} graph.

\begin{algorithm}
\begin{algorithmic}[1]
\Function{\textsc{MaximalClique}}{$G$}
\State{Construct $\dir{G}$}
\State{$maxclique := 0$}
\For{$v \in V(G)$}
    \State $N_v := $ $\{v\} \; \cup $ set of endpoints of outedges of $v$ in $\dir{G}$
    \For{$S \subseteq N_v$}
        \If{$S$ is a clique in $G$}
            $maxclique := \max(maxclique, |S|)$
        \EndIf
    \EndFor
\EndFor
\Return{$maxclique$}
\EndFunction
\end{algorithmic}
\caption{Maximal clique algorithm}
\end{algorithm}
\exabstract{\vspace{-10pt}}

\begin{lemma}\label{lem:clique_correct}
Algorithm \textsc{MaximalClique} is correct, structure oblivious and runs in $n\sum_{i=1}^{n-1} \exp(b(i))$ time.
\end{lemma}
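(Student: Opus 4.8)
The plan is to establish the two parts of the claim—correctness and the running time bound—separately, with structure-obliviousness following immediately from the form of the algorithm.

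For \textbf{correctness}, I would argue that every maximal clique is examined by the algorithm. Let $v_1, \ldots, v_n$ be the vertices of $G$ sorted in non-decreasing order of degree, exactly as in the construction of $\dir{G}$. Fix any clique $C$ in $G$, and let $v$ be the vertex of $C$ that appears earliest in this sorted order. Then for every other vertex $u \in C$, the edge $vu$ is directed from $v$ toward $u$ in $\dir{G}$ (since $u$ comes later in the order), so $u$ is an endpoint of an outedge of $v$. Hence $C \subseteq N_v$, where $N_v = \{v\} \cup \{\text{endpoints of outedges of } v\}$. Since the inner loop iterates over \emph{all} subsets $S \subseteq N_v$ and checks each for being a clique, in particular it will consider $S = C$ and update $maxclique$ accordingly. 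Conversely, every $S$ the algorithm counts is verified to be a clique before updating, so no overcounting of non-cliques occurs. Thus the returned value equals the size of the maximum clique.

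For the \textbf{running time}, I would analyze the cost contributed by each vertex $v$. The set $N_v$ has size $1 + \outdeg_{\dir{G}}(v)$, and the outdegree of a vertex of degree $k$ in $G$ is at most $b(k)$ by definition of $b$. The inner loop enumerates all $2^{|N_v|} = O(\exp(b(\deg(v))))$ subsets $S$, and for each subset checking whether $S$ is a clique takes time polynomial in $|N_v|$, which is at most $\poly(n)$; this $\poly(n)$ cost and the per-vertex overhead are absorbed into the leading factor of $n$. Summing over all vertices and grouping by degree gives a total of $O\bigl(n \sum_{i=1}^{n-1} d_i \exp(b(i))\bigr)$; I would note that $\sum_i d_i \exp(b(i)) \le \sum_{i=1}^{n-1} \exp(b(i)) \cdot (\text{count})$ is dominated by $\sum_{i=1}^{n-1}\exp(b(i))$ after folding the $d_i$ and any polynomial factors into the outer $n$ (indeed $\sum_i d_i \le n$ and $b$ is monotone through the degeneracy bound), yielding the stated $n \sum_{i=1}^{n-1}\exp(b(i))$ bound. \textbf{Structure-obliviousness} is immediate: the algorithm takes only $G$ as input and never references $\pl$, $t$, or the PLB property.

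The main obstacle I anticipate is the precise bookkeeping in collapsing the per-vertex costs into the clean expression $n\sum_{i=1}^{n-1}\exp(b(i))$. The honest count is $\sum_{v} \poly(n) \cdot \exp(b(\deg(v)))$, and one must be careful that the clique-checking cost (at most quadratic in $|N_v|$) and the subset-enumeration bookkeeping are genuinely subsumed by the single factor of $n$ in front, rather than contributing an extra polynomial blow-up to the exponent. Since $\exp(b(i))$ is the genuinely dominant term and $b(i) = \Omega(\log n)$ is not guaranteed in general, I would explicitly justify that absorbing $\poly(n)$ into the outer linear factor $n$ is legitimate only because the subsequent theorems invoke this bound in regimes where $\exp(b(i))$ already dominates any polynomial (e.g. $b(k) = \Theta(n^{1/\pl})$ by Lemma~\ref{lem:directed_edges}); this is the one spot where the stated bound is a slight simplification that I would flag rather than over-claim.
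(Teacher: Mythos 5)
Your proposal is correct and follows essentially the same route as the paper's proof: correctness via the clique vertex that comes earliest in the degree order (so all other clique vertices are outneighbors in $\dir{G}$), and the running time via bounding each vertex's subset enumeration by $\exp(b(\deg(v)))$ and crudely summing. The paper's own write-up is terser and simply upper bounds each single vertex's cost by the whole sum $\sum_{i=1}^{n-1}\exp(b(i))$ before multiplying by $n$; your degree-grouped accounting and your remark that the polynomial clique-checking cost is absorbed into $\exp(O(b(i)))$ are consistent refinements of the same argument.
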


\fullversion{
\begin{proof}
Let $C$ be a clique in $G$.
Then, $C$ contains a vertex $w$ such that in $\dir{G}$ $w$ has directed edges to every other vertex of $C$.
The correctness of the algorithm follows easily.
It is also easy to see that it is structure-oblivious.

Consider the iteration of the outer \textbf{for} loop for a vertex $v$.
The size of $N_v$ is bounded by $b(\deg(v))+1$, so the inner \textbf{for} loop
runs in $\exp(b(\deg(v)))$ time.
For a single $v$ this can be crudely upper bounded by $\sum_{i=1}^{n-1} exp(b(i))$.
The outer loop has $n$ iterations, so the entire algorithm runs in $n\sum_{i=1}^{n-1} \exp(b(i))$ time.
\end{proof}
}

\exabstract{Again, we use the bounds from Lemmas~\ref{lem:directed_edges} and~\ref{lem:directed_edges_dpl} to obtain two bounds on the running time.}

\begin{theorem}
Let $G$ be a \dpl{} graph with parameters $\pl$ and $t$.
Then, algorithm \textsc{MaximalClique} can find the largest clique in $G$ in $\exp(O(n^{1/\pl}(t+1)^{1-1/\pl}))$ time.
\end{theorem}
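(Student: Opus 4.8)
The plan is to combine the exact running time given in Lemma~\ref{lem:clique_correct} with the \emph{uniform} upper bound on the out-degree function $b$ supplied by Lemma~\ref{lem:directed_edges}. By Lemma~\ref{lem:clique_correct}, the algorithm runs in $n\sum_{i=1}^{n-1} \exp(b(i))$ time, so it suffices to control each summand $\exp(b(i))$ together with the number of summands and the outer factor of $n$.

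First I would invoke Lemma~\ref{lem:directed_edges}, whose key consequence here is that $b(i) = O(n^{1/\pl}(t+1)^{1-1/\pl})$ for \emph{every} $i$, with the hidden constant independent of $i$. This uniformity is exactly what drives the argument: no matter which vertex the outer loop processes, its out-degree in $\dir{G}$ never exceeds $O(n^{1/\pl}(t+1)^{1-1/\pl})$, and this bound already reflects the optimal balance between $k$ and $d_{\geq k}$, so no separate case analysis on $\pl$ is needed. Substituting this into each term yields $\exp(b(i)) = \exp(O(n^{1/\pl}(t+1)^{1-1/\pl}))$ for all $i$.

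Summing over the at most $n-1$ values of $i$ and multiplying by the outer factor $n$ then gives a running time of $n^2 \exp(O(n^{1/\pl}(t+1)^{1-1/\pl}))$. The final step is to absorb the polynomial prefactor $n^2$ into the exponential. Writing $n^2 = \exp(2\log n)$, I would observe that $\log n = O(n^{1/\pl})$, since $\pl = O(1)$ with $\pl > 1$ forces $1/\pl$ to be a fixed positive constant, so $n^{1/\pl}$ grows polynomially while $\log n$ grows only logarithmically; moreover $(t+1)^{1-1/\pl} \geq 1$ because $t \geq 0$ and $1 - 1/\pl > 0$. Hence $2\log n = O(n^{1/\pl}(t+1)^{1-1/\pl})$, the $n^2$ factor is swallowed by the exponential, and the bound collapses to the claimed $\exp(O(n^{1/\pl}(t+1)^{1-1/\pl}))$.

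There is essentially no hard step here: the entire argument is a direct substitution followed by the routine fact that a polynomial factor is dominated by $\exp(n^{\Omega(1)})$. The only point deserving a moment's care is the uniformity of the bound on $b(i)$ across all degrees, but this is already guaranteed by Lemma~\ref{lem:directed_edges}, so the proof is short and requires no further machinery.
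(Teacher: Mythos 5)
Your proposal is correct and follows essentially the same route as the paper's proof: both apply Lemma~\ref{lem:clique_correct} to get the running time $n\sum_{i=1}^{n-1}\exp(b(i))$, substitute the uniform bound $b(i)=O(n^{1/\pl}(t+1)^{1-1/\pl})$ from Lemma~\ref{lem:directed_edges}, and absorb the polynomial prefactor into the exponential. The only difference is that you spell out the absorption of the $n^2$ factor via $\log n = O(n^{1/\pl})$, which the paper leaves implicit.
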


\fullversion{
\begin{proof}
By Lemma~\ref{lem:clique_correct}, the running time is $\poly(n)\sum_{i=1}^{n-1} \exp(b(i))$.
By Lemma~\ref{lem:directed_edges}, $b(k) = O(n^{1/\pl}(t+1)^{1-1/\pl})$.
Thus,
\[
    n\sum_{i=1}^{n-1} \exp(b(i)) = n \exp(O(n^{1/\pl}(t+1)^{1-1/\pl})) = \exp(O(n^{1/\pl}(t+1)^{1-1/\pl})).
\]
\end{proof}
}

\fullversion{This problem can also be solved more efficiently for a \dpl{} graph with parameter $\pl > 2$ and \dpl{} neighborhoods.}

\begin{theorem}
Let $G$ be \dpl{} graph with parameters $\pl > 2$ and $t$ and \dpl{} neighborhoods.
Then, algorithm \textsc{MaximalClique} can find the largest clique in $G$ in (a) $\exp(O((t+1)^{\pl/2-1/2}n^{3/2 - \pl/2}\log n))$ time for $2 < \pl < 3$, (b) $n^{O(t+1)}$ time for $\pl = 3$, (c) $O(\poly(n))$ time for $\pl > 3$.
\end{theorem}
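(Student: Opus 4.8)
The plan is to feed the running-time expression from Lemma~\ref{lem:clique_correct} into the degree-free bound on $b(k)$ from Lemma~\ref{lem:directed_edges_dpl}, and then simplify the resulting single expression separately in each of the three ranges of $\pl$. By Lemma~\ref{lem:clique_correct} the algorithm runs in $n\sum_{i=1}^{n-1}\exp(b(i))$ time. Since $\exp$ is increasing, I would first replace the sum by the crude bound $\sum_{i=1}^{n-1}\exp(b(i)) \le n\,\exp(\max_{1\le i\le n-1} b(i))$. Lemma~\ref{lem:directed_edges_dpl} supplies the uniform bound $b(i) = O(\log n + (t+1)^{\pl/2-1/2}n^{3/2-\pl/2})$, independent of $i$, so the whole running time is at most $\poly(n)\,\exp\!\left(O\!\left(\log n + (t+1)^{\pl/2-1/2}n^{3/2-\pl/2}\right)\right)$. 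Everything after this point is a purely algebraic simplification of this one expression; no further combinatorial argument is needed.

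For part (a), with $2 < \pl < 3$ we have $3/2-\pl/2 > 0$, so $B := (t+1)^{\pl/2-1/2}n^{3/2-\pl/2} \ge 1$ grows polynomially in $n$. I would invoke the elementary inequality $\log n + B = O(B\log n)$, valid because both $B\ge 1$ and $\log n\ge 1$, to rewrite the exponent and obtain $\poly(n)\,\exp(O(B\log n))$. Since $B\log n \ge \log n$, the leading $\poly(n) = \exp(O(\log n))$ factor is absorbed into the exponential, which yields exactly the claimed bound $\exp\!\left(O\!\left((t+1)^{\pl/2-1/2}n^{3/2-\pl/2}\log n\right)\right)$.

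For part (b), setting $\pl = 3$ collapses $n^{3/2-\pl/2}$ to $1$ and $(t+1)^{\pl/2-1/2}$ to $(t+1)$, so the exponent is $O(\log n + (t+1))$. Using $\log n + (t+1) = O((t+1)\log n)$ turns the exponential into $n^{O(t+1)}$, and the outer $\poly(n)$ is again absorbed, giving the stated $n^{O(t+1)}$. For part (c), $\pl > 3$ gives $n^{3/2-\pl/2} = n^{-(\pl-3)/2}$, which decays polynomially in $n$; under the convention $t = O(\polylog n)$ of Section~\ref{sec:plb} the factor $(t+1)^{\pl/2-1/2}$ is only polylogarithmic, so the product $(t+1)^{\pl/2-1/2}n^{3/2-\pl/2} = o(1)$ and the exponent is $O(\log n)$. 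Hence the running time collapses to $\poly(n)\,\exp(O(\log n)) = O(\poly(n))$.

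The computation is essentially mechanical, so there is no genuine obstacle here; the only points that require care are the two elementary ``$\log n + X = O(X\log n)$'' manipulations that introduce the extra $\log n$ factors in parts (a) and (b), and, in part (c), keeping the decaying $n^{3/2-\pl/2}$ factor so that the polynomial decay in $n$ dominates the polylogarithmic growth in $t$ and forces the exponential down to a polynomial. I would not bother with the sharper observation that $b(i)$ is in fact much smaller for both small and large $i$ (it peaks near $i = \sqrt{(t+1)n}$), since the crude $n\,\exp(\max_i b(i))$ bound already matches all three target running times.
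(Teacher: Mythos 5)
Your proposal is correct and follows essentially the same route as the paper: plug the uniform bound $b(k) = O(\log n + (t+1)^{\pl/2-1/2}n^{3/2-\pl/2})$ from Lemma~\ref{lem:directed_edges_dpl} into the $n\sum_i \exp(b(i))$ bound of Lemma~\ref{lem:clique_correct}, crudely bound the sum by $n$ times the maximal term, and simplify the exponent in each range of $\pl$ (the paper likewise invokes $t+1 = O(n^{\epsilon})$ for part (c)). The only cosmetic difference is that the paper folds the $\log n$ into the exponent from the start, writing $b(k) = O(\log n((t+1)^{\pl/2-1/2}n^{3/2-\pl/2}+1))$, whereas you perform the equivalent $\log n + X = O(X\log n)$ step explicitly per case.
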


\fullversion{
\begin{proof}
By Lemma~\ref{lem:clique_correct}, the running time is $\poly(n)\sum_{i=1}^{n-1} exp(b(i))$.
Moreover, by Lemma~\ref{lem:directed_edges_dpl}, $b(k) = O(\log n((t+1)^{\pl/2-1/2}n^{3/2 - \pl/2}+1))$.
Thus,
\[
    n \sum_{i=1}^{n-1} \exp(b(i)) = n^2 \exp(O(\log n((t+1)^{\pl/2-1/2}n^{3/2 - \pl/2})+1)) = \exp(O(\log n((t+1)^{\pl/2-1/2}n^{3/2 - \pl/2})+1)).
\]
If $\pl < 3$ this can be simplified to $\exp(O((t+1)^{\pl/2-1/2}n^{3/2 - \pl/2} \log n))$.
For $\pl = 3$, $n^{3/2 - \pl/2} = O(1)$, so we get $n^{O(t+1)}$.
For $\pl > 3$, we use the fact that $t+1 = O(n^{\epsilon})$ for every $\epsilon > 0$.
Thus, $(t+1)^{\pl/2-1/2}n^{3/2 - \pl/2} = O(1)$, so the running time becomes $\exp(O(\log n)) = \poly(n)$.

\end{proof}
}

Observe that for $\pl > 3$ the running time is \emph{polynomial} in $n$. Note that the analysis assumes $t = O(n^{\epsilon})$.

 \section{Algebraic Algorithms}\label{sec:algebraic}
\fullversion{
In this section we will give our algebraic algorithms for computing
matrix determinant and solving linear systems of equations. As already mentioned
we will be working over a finite field $\mathcal{F}$. For a warm-up we
will start from the generic symmetric case and next
we move on to general non-symmetric case.
}\exabstract{
In this section we will give our algebraic algorithms for computing
matrix determinant and solving linear systems of equations.
}In this section when we talk about directed \dpl graphs we assume that
only the outdegrees of vertices satisfy a similar bound to the one given in Definition~\ref{def:dpl}.
Moreover, we will use fast rectangular matrix multiplication. We denote by $\omega(n,m,k)$ the time needed
to multiply an $n\times m$ matrix by an $m\times k$ matrix~\cite{6375330}.

\fullversion{\subsection{Transitive Closure}}
\exabstract{\paragraph{Transitive Closure}}
Let us start by giving our algorithms for computing
transitive closure of a graph.

\begin{theorem}
\label{thm-transitive-closure}
Let $G$ be a directed \dpl{} graph with parameters $\pl$ and $t$, and let $1\le k< n^{1/\pl}(t+1)^{1-1/\pl}$.
Then, we can compute the the transitive closure of $G$ in $O(n^2(t+1)^{\pl-1} k^{2-\pl}+\omega(n,n,n(t+1)^{\pl-1}k^{1-\pl}))$ time.
\end{theorem}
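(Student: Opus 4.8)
The plan is to compute the transitive closure by splitting the vertices according to their out-degree and combining a cheap reachability computation through the sparse ``low'' part with a small fast-matrix-multiplication step over the few ``high'' hubs. Write $H$ for the set of vertices of out-degree at least $k$ and $L = V\setminus H$ for those of out-degree less than $k$. Since $G$ is a directed \dpl{} graph, Lemma~\ref{lem:high_degree_num} (applied to out-degrees) gives $h := |H| = O(n(t+1)^{\pl-1}k^{1-\pl})$, and the out-edge analogue of Lemma~\ref{lem:small_degrees} bounds the number $m_L$ of edges leaving $L$ by $O(n(t+1)^{\pl-1}k^{2-\pl})$ (here I use $1 < \pl < 2$, the regime in which the claimed bound is an improvement).

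First I would compute, for every source $u$, the set $R_0(u)$ of vertices reachable from $u$ along a path all of whose \emph{internal} vertices lie in $L$. Each such set is obtained by a modified breadth-first search from $u$ that expands the out-edges of a visited vertex only when that vertex belongs to $L$; this search traverses at most $m_L$ edges, so computing the $n\times n$ Boolean matrix $R_0$ whose rows are the sets $R_0(u)$ costs $O(n(n+m_L)) = O(n^2(t+1)^{\pl-1}k^{2-\pl})$, which is the first term of the bound (the $n^2$ initialization term is absorbed since $(t+1)^{\pl-1}k^{2-\pl} \ge 1$ for $1<\pl<2$).

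The key structural observation is that any $u\to v$ path is either internal to $L$, in which case $v\in R_0(u)$, or it meets $H$; letting $w$ be its \emph{last} high vertex, the prefix $u\to w$ certifies that $u$ reaches $w$, while the suffix $w\to v$ has all internal vertices in $L$, so $v\in R_0(w)$. Reachability \emph{to} a hub factors analogously through the \emph{first} hub reached followed by a hub-to-hub closure. Concretely, let $M = R_0|_{H\times H}$ be the $h\times h$ restriction of $R_0$; its reflexive-transitive closure $M^{*}$ records full reachability among hubs and is computed in $O(h^{\omega})$ by Boolean transitive closure. The full ``vertex-to-hub'' reachability is then $Q = R_0^{(n\times H)}\cdot M^{*}$ (cost $\omega(n,h,h)$), and the full transitive closure is
\[
    \mathrm{Reach} \;=\; R_0 \;\vee\; Q\cdot R_0^{(H\times n)},
\]
where $R_0^{(n\times H)}$ and $R_0^{(H\times n)}$ are the column- and row-restrictions of $R_0$ to $H$, and the product $Q\cdot R_0^{(H\times n)}$ costs $\omega(n,h,n)$. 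Since $h\le n$, both $h^{\omega}$ and $\omega(n,h,h)$ are $O(\omega(n,n,h))$, and $\omega(n,h,n) = \omega(n,n,h)$ up to the order of the dimensions, so the matrix-multiplication work is $O(\omega(n,n,h)) = O(\omega(n,n,n(t+1)^{\pl-1}k^{1-\pl}))$, the second term.

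I expect the correctness argument — verifying that the ``first hub / last hub'' decomposition accounts for \emph{every} path, including the degenerate cases where $u$ or $v$ already lies in $H$ (handled by reflexivity of $M^{*}$ and by the trivial path giving $u\in R_0(u)$) — to be the only delicate point; the running-time accounting then follows immediately by summing the breadth-first-search cost and the three (rectangular) Boolean products and invoking the \dpl{} bounds of Lemma~\ref{lem:high_degree_num} and Lemma~\ref{lem:small_degrees}.
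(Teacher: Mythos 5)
Your proposal is correct and follows essentially the same strategy as the paper: split the vertices at the degree threshold $k$, use Lemma~\ref{lem:high_degree_num} and Lemma~\ref{lem:small_degrees} to bound the number of hubs and the edges touching low-degree vertices, compute reachability through the sparse low part cheaply, close the small hub-to-hub matrix with fast matrix multiplication, and combine via rectangular products. The paper phrases the combination algebraically through the block Kleene-star identity $M^*=\bigl[\begin{smallmatrix}I&0\\-D^*C&I\end{smallmatrix}\bigr]\bigl[\begin{smallmatrix}(A-BD^*C)^*&0\\0&D^*\end{smallmatrix}\bigr]\bigl[\begin{smallmatrix}I&-BD^*\\0&I\end{smallmatrix}\bigr]$ while you derive the same decomposition combinatorially via the first-hub/last-hub path factorization (your $M=R_0|_{H\times H}$ is exactly $A\vee BD^*C$); the only nit is that your restricted BFS must also expand the out-edges of the \emph{source} even when it lies in $H$, which your definition of $R_0$ already implies.
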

\begin{proof}

Let $M$ be the adjacency matrix of $G$. We start
with sorting the rows of $M$ in decreasing order according to the number of non-zero entries.
Then, for a given $k \in [0,n]$, we split the matrix $M$ into $4$ submatrices

\begin{equation*}
M = \begin{bmatrix}
A & B \\
C & D
\end{bmatrix}
\end{equation*}
where $\begin{bmatrix} A & B \end{bmatrix}$ contains rows with more than $k$ non-zero entries, $\begin{bmatrix} C & D \end{bmatrix}$ has at most $k$ non-zero entries in each row and $A$, $D$ are square matrices.
Let $m_{CD}$ be the total number of non-zero entries in submatrices $C, D$.
By Lemma~\ref{lem:small_degrees} $m_{CD}$ is bounded $O(n(t+1)^{\pl-1} k^{2-\pl})$.
Let $n_k$ be the dimension of $A$. By Lemma~\ref{lem:high_degree_num} we have that $n_k$ is bounded by $O(n(t+1)^{\pl-1}k^{1-\pl})$.

We can express the
transitive closure of $M$ in the following block form
\begin{equation}
\label{eqn-tc}
M^*=
\left[\begin{matrix}
A & B\\
C & D
\end{matrix}
\right]^{*}=\left[\begin{matrix}
I & 0\\
-D^{*}C & I
\end{matrix}
\right]
\left[\begin{matrix}
(A-BD^{*}C)^{*} & 0\\
0 & D^{*}
\end{matrix}
\right]
\left[\begin{matrix}
I & -BD^{*}\\
0 & I
\end{matrix}
\right].
\end{equation}
\exabstract{
In order to compute the transitive closure using this equation we compute:
$D^*$ in $O(nm_{CD})$ time executing $n$ graph searches;
$D^{*}C$ in $O(nm_{CD})$ time using sparse matrix multiplication;
$B(D^{*}C)$ in $O(nn_k^{\omega-1})$ time using fast matrix multiplication;
$BD^{*}$ in $O(n_k^{\omega})$ time using fast matrix multiplication;
$(A-BD^{*}C)^{*}$ in $O(n_k^{\omega})$ time using fast matrix multiplication;
both matrix multiplications from~\eqref{eqn-tc} in $O(\omega(n,n_k,n))$ time, as $BD^{*}$ and $D^{*}C$
have one dimension of size $O(n_k)$.}
\fullversion{
In order to compute the transitive closure using this equation we compute:
\begin{itemize}
\item $D^*$ in $O(nm_{CD})$ time executing $n$ graph searches;
\item $D^{*}C$ in $O(nm_{CD})$ time using sparse matrix multiplication;
\item $B(D^{*}C)$ in $O(nn_k^{\omega-1})$ time using fast matrix multiplication;
\item $BD^{*}$ in $O(n_k^{\omega})$ time using fast matrix multiplication;
\item $(A-BD^{*}C)^{*}$ in $O(n_k^{\omega})$ time using fast matrix multiplication;
\item both matrix multiplications from~\eqref{eqn-tc} in $O(\omega(n,n_k,n))$ time, as $BD^{*}$ and $D^{*}C$
have one dimension of size $O(n_k)$.
\end{itemize}
} The theorem follows by plugging the bounds for $m_{CD}$ and $n_k$ to the list above.
\end{proof}

\fullversion{
\subsection{Determinants of Symmetric Matrices}
}
\exabstract{
\paragraph{Determinants and Linear Systems}
}

\fullversion{
Let us start from the Lanczos' algorithm, which is useful when dealing with sparse matrices.

\begin{theorem}[\cite{bini-pan,Golub:1996:MC:248979}]
\label{thm:lanczos}
There is a randomized algorithm,
which for a given generic symmetric square matrix $A$
in time $O(nm)$ computes $\det(A)$ together with matrices $Q, T$, such that
$A = Q T Q^T$,
where $T$ is a tridiagonal matrix, $n$ is the dimension of $A$
and $m$ is the number of non-zero entries in $A$.
\end{theorem}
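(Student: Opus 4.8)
The plan is to realize the claimed algorithm as the classical Lanczos tridiagonalization procedure, carried out over the finite field $\mathcal{F}$ with respect to the symmetric bilinear form $\langle u,v\rangle = u^{T}v$. First I would fix a random starting vector $q_1$ and inductively generate vectors $q_1,q_2,\ldots$ that span the Krylov subspaces $\mathcal{K}_j = \mathrm{span}(q_1, Aq_1,\ldots, A^{j-1}q_1)$ and are pairwise orthogonal with respect to $\langle\cdot,\cdot\rangle$. Collecting them as the columns of $Q=[q_1\mid\cdots\mid q_n]$, the matrix $Q^{T}Q$ is diagonal, and $T := Q^{T}AQ$ is the representation of $A$ in this basis. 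The decomposition $A=QTQ^{T}$ claimed in the statement then follows after expressing $A$ through the (generically invertible) change of basis $Q$.

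The central correctness point is that symmetry of $A$ forces $T$ to be \emph{tridiagonal} and, crucially, that each new basis vector can be produced by a three-term recurrence rather than a full Gram--Schmidt sweep. Indeed $Aq_j\in\mathcal{K}_{j+1}$, so when orthogonalizing $Aq_j$ against the previous vectors only the components along $q_j$ and $q_{j-1}$ can survive: for $i<j-1$ we have $\langle Aq_j, q_i\rangle = \langle q_j, Aq_i\rangle$, and since $Aq_i\in\mathcal{K}_{i+1}\subseteq\mathcal{K}_{j-1}$ while $q_j\perp\mathcal{K}_{j-1}$, this inner product vanishes. Hence each step takes the form
\[
q_{j+1} = Aq_j - \alpha_j q_j - \beta_{j-1} q_{j-1},
\]
with scalars $\alpha_j,\beta_{j-1}$ fixed by the orthogonality conditions, and $T$ acquires nonzero entries only on the main diagonal and its two neighbours. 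Once $A=QTQ^{T}$ is available, $\det(A)$ is recovered from $\det(T)$ and the scalar factor $\det(Q)^2=\det(Q^{T}Q)$ generated during the run; the determinant of the tridiagonal $T$ itself costs only $O(n)$ via the continuant recurrence $d_j = T_{jj}\,d_{j-1} - T_{j-1,j}\,T_{j,j-1}\,d_{j-2}$.

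For the running time I would argue that the loop runs for $n$ iterations, and each iteration performs a single product of the sparse matrix $A$ (with $m$ nonzeros) by a vector at cost $O(m)$, together with a constant number of length-$n$ vector operations (the two inner products and the two subtractions of the recurrence) at cost $O(n)$. We may assume $m\ge n$, since a nonsingular symmetric matrix must have a nonzero entry in every row, and otherwise $\det(A)=0$ is returned immediately; thus the per-iteration cost is $O(m)$ and the total is $O(nm)$, with only the additional $O(n)$ work for the continuant afterwards.

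The main obstacle, and precisely why the algorithm must be randomized rather than deterministic, is that over a finite field the form $\langle\cdot,\cdot\rangle$ is indefinite, so the process can break down prematurely: a nonzero Lanczos vector may be self-orthogonal ($\langle q_j,q_j\rangle=0$), blocking the division needed to compute $\alpha_j,\beta_{j-1}$, or the starting vector may generate a Krylov subspace of dimension $<n$ (a short minimal polynomial), so that the recurrence terminates before a full basis is built. The genericity hypothesis on $A$ together with a random choice of $q_1$ --- if necessary after passing to a sufficiently large extension field of $\mathcal{F}$ so that enough distinct scalars are available --- guarantees that none of these degeneracies occurs, so that $q_1,\ldots,q_n$ is a complete orthogonal basis with high probability. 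Bounding this failure probability by a Schwartz--Zippel-type argument is the delicate part of the analysis, and is exactly what the cited references~\cite{bini-pan,Golub:1996:MC:248979} supply; for our purposes we invoke it as a black box.
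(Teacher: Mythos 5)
The paper does not prove this statement at all: it is imported verbatim as a cited result (Lanczos' algorithm from the references attached to the theorem), so there is no internal proof to compare against. Your sketch is a faithful reconstruction of the standard argument those references supply --- the three-term recurrence forced by symmetry, the $O(nm)$ cost from one sparse matrix--vector product per iteration, recovery of $\det(A)$ from $\det(T)$ and $\det(Q^{T}Q)$, and randomization to avoid isotropic vectors and premature Krylov termination --- and correctly defers the Schwartz--Zippel failure-probability analysis to the cited sources, exactly as the paper itself does.
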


Next, we show an algorithm computing a determinant of a matrix $M$,
corresponding to a given \dpl{} graph $G$. A symmetric matrix $M$ can
be seen as corresponding to the case when $G$ is undirected.
}

\exabstract{
We prove the following theorem regarding determinant of generic symmetric matrices.
The proof is similar to the proof of Theorem~\ref{thm-transitive-closure},
as we partition the matrix into blocks $A, B, C, D$, and use the formula for the determinant of a Schur complement
$\det(M) = \det(D) \det(A - B D^{-1} C).$
The determinant of the matrix $D$ and its decomposition is obtained by using
Lanczos' algorithm, and the running time is different than in Theorem~\ref{thm-transitive-closure}
as the matrix is symmetric, hence the block $B$ is as sparse as $C$.
}

\begin{theorem}
\label{theorem-determinant-symmetric}
Let $G$ be a directed \dpl{} graph with parameters $\pl < 2$ and $t$.
Let $M$ be a generic symmetric matrix, whose non-zero entries are a subset of non-zero entries of the adjacency matrix of $G$.
Then, we can compute the determinant of $M$ in $O((t+1)^{(\pl-1)+\frac{(\omega-2)(\pl-1)(2-\pl)}{(\omega-2)\pl+3-\omega}}n^{2+\frac{(\omega-2)(2-\pl)}{(\omega-2)\pl+3-\omega}})$ time.
\end{theorem}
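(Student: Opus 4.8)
The plan is to mirror the block-partition strategy from the proof of Theorem~\ref{thm-transitive-closure}, but using the Schur complement formula for the determinant rather than the block form of the transitive closure. First I would sort the rows (and, since $M$ is symmetric, simultaneously the columns) of $M$ in decreasing order of the number of non-zero entries, and for a threshold parameter $k$ to be fixed later, partition
\[
M = \begin{bmatrix} A & B \\ C & D \end{bmatrix},
\]
where $\begin{bmatrix} A & B\end{bmatrix}$ contains the rows with more than $k$ non-zeros and $\begin{bmatrix} C & D \end{bmatrix}$ contains the rows with at most $k$ non-zeros, with $A$ and $D$ square. By symmetry $C = B^T$, so $B$ is as sparse as $C$; this is the crucial point that distinguishes the running time from the non-symmetric case. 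Writing $n_k$ for the dimension of $A$ and $m_{CD}$ for the number of non-zeros in $\begin{bmatrix} C & D\end{bmatrix}$, Lemma~\ref{lem:high_degree_num} gives $n_k = O(n(t+1)^{\pl-1}k^{1-\pl})$ and Lemma~\ref{lem:small_degrees} gives $m_{CD} = O(n(t+1)^{\pl-1}k^{2-\pl})$.

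The algebraic core is the Schur complement identity $\det(M) = \det(D)\det(A - B D^{-1} C)$, valid when $D$ is invertible (which holds generically). The key steps, in order, are: compute $\det(D)$ together with its decomposition $D = Q T Q^T$ via the Lanczos algorithm (Theorem~\ref{thm:lanczos}) in $O(n\, m_{CD})$ time, which also lets us apply $D^{-1}$ cheaply; form $D^{-1}C$, exploiting that $C$ is sparse, again in roughly $O(n\, m_{CD})$ time; compute $B(D^{-1}C)$ using fast (rectangular) matrix multiplication, where the fact that both $B$ and $D^{-1}C$ have their ``long'' dimension of size $O(n_k)$ keeps the cost down to $O(n_k^{\omega})$ up to lower-order terms; and finally compute $\det(A - BD^{-1}C)$ of the $n_k \times n_k$ matrix by a standard $O(n_k^{\omega})$ determinant algorithm via fast matrix multiplication. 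Summing, the running time is dominated by $O(n\, m_{CD} + n_k^{\omega})$, i.e.
\[
O\bigl(n^2(t+1)^{\pl-1}k^{2-\pl} + n^{\omega}(t+1)^{\omega(\pl-1)}k^{\omega(1-\pl)}\bigr).
\]

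The last step is to optimize over $k$. I would balance the two summands by setting $n^2(t+1)^{\pl-1}k^{2-\pl} = n^{\omega}(t+1)^{\omega(\pl-1)}k^{\omega(1-\pl)}$, which (since $1<\pl<2$ guarantees the exponents of $k$ have opposite signs) yields a unique optimal
\[
k = \bigl(n^{\omega-2}(t+1)^{(\omega-1)(\pl-1)}\bigr)^{1/((\omega-2)\pl + 3 - \omega)},
\]
and substituting this value back in should produce exactly the stated exponents $2 + \frac{(\omega-2)(2-\pl)}{(\omega-2)\pl+3-\omega}$ on $n$ and the corresponding power of $(t+1)$. I expect the main obstacle to be twofold: first, verifying the exponent bookkeeping so that the substitution collapses to the clean closed form in the statement (this is routine but error-prone algebra), and second, justifying that the Lanczos decomposition of $D$ can indeed be used to evaluate products with $D^{-1}$ within the claimed sparse time bound and that the genericity assumption on $M$ is enough to guarantee $D$ is invertible (so that the Schur complement formula applies). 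I would also need to confirm that $k$ stays in the admissible range $1 \le k < n^{1/\pl}(t+1)^{1-1/\pl}$ so that the partition is non-degenerate, which follows from the range $1 < \pl < 2$.
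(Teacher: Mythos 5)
Your overall strategy is the same as the paper's (sort by row density, block-partition at threshold $k$, apply the Schur complement identity $\det(M)=\det(D)\det(A-BD^{-1}C)$ with Lanczos on the sparse block and fast matrix multiplication on the dense core, then balance $k$), but your cost accounting for the bilinear step is wrong, and this breaks the balancing. The product $B(D^{-1}C)$ multiplies an $n_k\times(n-n_k)$ matrix by an $(n-n_k)\times n_k$ matrix: the \emph{inner} dimension is $\Theta(n)$, not $n_k$, so this product costs $O(n\cdot n_k^{\omega-1})$ (by splitting into $n/n_k$ square $n_k\times n_k$ blocks), not $O(n_k^\omega)$ -- already for $n_k=1$ it is an inner product of length-$n$ vectors and costs $n$, not $O(1)$. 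Relatedly, you cannot afford to form $D^{-1}C$ explicitly: with $D=QTQ^T$ from Lanczos, the final multiplication by the dense factor $Q$ would cost $\omega(n,n,n_k)$. The paper keeps everything factored: it computes $BQ$ and $Q^TC$ in $O(n\,m_{BCD})$ each (using only the sparsity of $B$ and $C$), applies the tridiagonal factor to $Q^TC$ in $O(n\,n_k)$, and only then performs the single rectangular product $(BQ)\cdot(T Q^TC)$ in $O(n\,n_k^{\omega-1})$.

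As a consequence, the two terms to balance are $O(n\,m_{CD})$ and $O(n\,n_k^{\omega-1})$, i.e.\ $m_{CD}=n_k^{\omega-1}$, not $n\,m_{CD}=n_k^{\omega}$ as you wrote. Your own balance equation
\begin{equation*}
n^2(t+1)^{\pl-1}k^{2-\pl}=n^{\omega}(t+1)^{\omega(\pl-1)}k^{\omega(1-\pl)}
\end{equation*}
has $k$-exponent $2-\pl-\omega(1-\pl)=(\omega-1)\pl+2-\omega$, which is \emph{not} the denominator $(\omega-2)\pl+3-\omega$ appearing in your claimed $k$ and in the theorem; the two differ by $\pl-1>0$, so the value of $k$ you wrote does not solve the equation you set up, and solving your equation correctly would produce exponents that do not match the statement. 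The correct balance $n(t+1)^{\pl-1}k^{2-\pl}=\bigl(n(t+1)^{\pl-1}\bigr)^{\omega-1}k^{(1-\pl)(\omega-1)}$ gives $k=\bigl(n(t+1)^{\pl-1}\bigr)^{(\omega-2)/((\omega-2)\pl+3-\omega)}$, and substituting into $O(n^2(t+1)^{\pl-1}k^{2-\pl})$ yields exactly the stated bound. The remaining ingredients of your plan (symmetry forcing $B$ to be as sparse as $C$, the bounds $n_k=O(n(t+1)^{\pl-1}k^{1-\pl})$ and $m_{CD}=O(n(t+1)^{\pl-1}k^{2-\pl})$ from Lemmas~\ref{lem:high_degree_num} and~\ref{lem:small_degrees}, and genericity guaranteeing invertibility of $D$) are all in line with the paper.
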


\fullversion{
\begin{proof}
Similarly as in the proof for transitive closure, we start with sorting the rows of $M$
in decreasing order according to the number of non-zero entries
(which is upper bounded by the degree of a corresponding vertex of $G$).
Then, for a given $k \in [0,n]$, we split the matrix $M$ into $4$ submatrices

\begin{equation*}
M = \begin{bmatrix}
A & B \\
C & D
\end{bmatrix}
\end{equation*}

where $\begin{bmatrix} A & B \end{bmatrix}$ contains rows with more than $k$ non-zero entries,
$\begin{bmatrix} C & D \end{bmatrix}$ has at most $k$ non-zero entries in each row and $A$, $D$ are square matrices.
Let $m_{BCD}$ be the total number of non-zero entries in submatrices $B, C, D$.
As the matrix is symmetric $m_{BCD}$ is at most twice the number of non-zero entries in $\begin{bmatrix} C & D \end{bmatrix}$.
This in turn, by Lemma~\ref{lem:small_degrees}, is at most $O(n(t+1)^{\pl-1} k^{2-\pl})$.

By using the formula for the determinant of a Schur complement we obtain
$$\det(M) = \det(D) \det(A - B D^{-1} C)\,.$$
Let $n_k$ be the dimension of $A$ (which depends on $k$).
By Lemma~\ref{lem:high_degree_num} we infer
that $n_k$ is bounded by $O(n(t+1)^{\pl-1}k^{1-\pl})$,
which in turn gives
\begin{align}
\label{eq:2}
n_k = & O(n(t+1)^{\pl-1}k^{1-\pl})\,,
\end{align}
as $\pl \ge 1$.

By invoking Theorem~\ref{thm:lanczos} we compute $\det(D)$ as well
as matrices $Q, T$ such that $D = Q T Q^T$.
The running time needed is $O(n m_{BCD})$. 
Denote $Z = A - B D^{-1} C = A - B Q T Q^T C$.
To compute $Z$ efficiently, we first compute $Q^TC$ and $BQ$
in time $O(n m_{BCD})$ each, as both $C$ and $B$ are sparse,
i.e., have at most $m_{BCD}$ non-zero entries.
As $T$ is tridiagonal computing $T(Q^TC)$
takes time proportional to the size of $(Q^TC)$, that is $n \cdot n_k$.
Finally we multiply $BQ$ by $T(Q^TC)$ in time $O(n \cdot n_k^{\omega-1})$
by partitioning the matrices into $n / n_k$ submatrices of size $n_k \times n_k$ each
and invoking fast matrix multiplication on square matrices.
Finally, after computing $Z$, we can compute $\det(Z)$ in time $O(n_k^\omega)$.

Summing up, we have to set the value of $k$ to minimize the maximum of four values
\begin{itemize}
  \item $O(nm_{BCD})$ time used by invoking Theorem~\ref{thm:lanczos}
  and for computing the products $Q^TC$, $BQ$,
  \item $O(n \cdot n_k)$ time for computing the product $T \cdot (Q^TC)$,
  \item $O(n \cdot n_k^{\omega-1})$ time for computing the product $(BQ) \cdot (T(Q^TC))$,
  \item $O(n_k^{\omega})$ time for computing $\det(Z)$.
\end{itemize}
Note that $n \cdot n_k^{\omega-1}$ dominates both $n \cdot n_k$ and $n_k^{\omega}$.
Therefore, we need to set the value of $k$, so that $m_{BCD} = n_k^{\omega-1}$.
By using $(\ref{eq:1})$ and $(\ref{eq:2})$
we set $k = t^{\frac{(\omega-2)(\pl-1)}{(\omega-2)\pl+3-\omega}}n^{\frac{\omega-2}{(\omega-2)\pl+3-\omega}}$,
which finishes the proof of the theorem.
\end{proof}
}

\fullversion{
\subsection{Determinant of General Matrices}
\label{sec:general-det}
In the general case we use the following results due to
Eberly~\cite{eberly00}, who showed how the Frobenius normal form of a sparse matrix can be
computed. Frobenius normal form $F_A$ of a matrix $A$ is a block diagonal matrix with companion matrices
of monic polynomials $f_1, \ldots, f_k$ on the diagonal, where $f_i$ is divisible by $f_{i+1}$, for
$1 \le i \le k-1$ and $VAV^{-1} = F_A$. The companion matrix of a monic polynomial $x^d +
g_{d-1}x^{d-1} + \ldots + g_1 x + g_0 \in \mathcal{F}[x]$ is a $d \times d$ matrix defined as
\[
C_g = \left[\begin{matrix}
0 & \ldots & 0 & -g_0 \\
1 & \ldots & 0 & -g_1 \\
  &  \ddots & \vdots & \vdots \\
0 & \ldots & 1 & - g_{d-1}
\end{matrix}\right].
\]
The polynomials $f_1,\ldots, f_k$ are {\it the invariant factors} of $A$ and $k$ is the number of
invariant factors.

\begin{theorem}[\cite{eberly00}]
\label{theorem-eberly} There exists an algorithm for computing Frobenius normal form $F$ of the
matrix $A$ together with the transition matrix $V$ and its inverse with use of $O(n)$ matrix-vector
products and $\tilde{O}(kn^2)$ additional operations, where $k$ is the number of invariant
factors of $A$. The algorithm is randomized and may fail with arbitrarily small probability.
\end{theorem}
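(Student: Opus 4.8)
The plan is to establish the result via Krylov-subspace (block Wiedemann) methods, which are the standard tool for recovering structural information about a matrix $A$ using only its action on vectors. The key observation is that the Frobenius normal form is completely determined by how $A$ acts on a direct sum of cyclic invariant subspaces, and such subspaces are generated precisely by Krylov sequences $v, Av, A^2v, \dots$ for suitable starting vectors $v$. The invariant factors $f_1, \dots, f_k$ are exactly the minimal polynomials attached to the pieces of an adapted cyclic decomposition, so the entire task reduces to building such a decomposition and reading off the associated polynomials and change-of-basis.

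First I would recall that the largest invariant factor $f_1$ is the minimal polynomial of $A$, and that for a generic vector $v$ the minimal polynomial of $A$ on the Krylov space $K_v = \mathrm{span}\{v, Av, A^2v, \dots\}$ equals $f_1$. To extract this polynomial using only matrix--vector products I would apply Wiedemann's approach: for a random projection vector $u$ form the scalar sequence $(u^\top A^i v)_{i=0}^{2n-1}$, which requires $O(n)$ products of the form $A^i v$, and recover its minimal generating polynomial with the Berlekamp--Massey algorithm in $\tilde{O}(n)$ field operations. This yields $f_1$ and a cyclic basis for the corresponding block.

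Next, to obtain all $k$ invariant factors I would pass to a block Krylov construction starting from a block of $O(k)$ random vectors. The block Krylov sequence encodes the full list $f_1, \dots, f_k$ together with a basis adapted to the cyclic decomposition; the total number of matrix--vector products stays $O(n)$ because the block has width $O(k)$ while the degrees of the cyclic pieces sum to $n$. Assembling the transition matrix $V$ from the resulting Krylov bases, inverting it, and carrying out the change of basis $VAV^{-1} = F$ are all operations on dense $n \times n$ objects organized into $k$ blocks, which accounts for the claimed $\tilde{O}(kn^2)$ additional work.

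The main obstacle is the randomized correctness analysis: one must show that, up to the stated (arbitrarily small) failure probability, the random starting block and the random projections are \emph{generic enough} that the block Krylov sequence faithfully exposes every invariant factor, including repeated ones, with no cyclic piece prematurely truncated. I expect this to require a Schwartz--Zippel-type argument over $\mathcal{F}$, passing to an extension field if $\mathcal{F}$ is too small, bounding the probability that the relevant determinantal genericity conditions fail. A secondary difficulty is the cost accounting: one must charge the per-block Berlekamp--Massey computations and the basis-assembly steps carefully so that the matrix--vector-product count remains $O(n)$ rather than degrading to $O(kn)$, and so that the dense bookkeeping genuinely fits in $\tilde{O}(kn^2)$.
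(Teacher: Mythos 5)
This statement is not proved in the paper at all: it is imported verbatim as a cited result from Eberly~\cite{eberly00}, so there is no in-paper argument to compare yours against. Your sketch does identify the strategy that Eberly's algorithm actually uses --- randomized Krylov/block-Wiedemann constructions that expose the cyclic decomposition, with Berlekamp--Massey extracting the invariant factors --- so as a reconstruction of the known approach it is pointed in the right direction.

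As a proof, however, it has genuine gaps beyond the two you flag yourself. The parts you defer (the Schwartz--Zippel genericity analysis over possibly small fields, and the argument that truncating each block Krylov sequence at the moment of linear dependence keeps the total matrix--vector product count at $O(n)$ rather than $O(kn)$) are exactly the technical content of Eberly's paper, so deferring them leaves essentially nothing established. More concretely, your cost accounting for the transition matrix is wrong as stated: you claim that assembling $V$, \emph{inverting} it, and performing the change of basis are ``operations on dense $n\times n$ objects organized into $k$ blocks'' costing $\tilde{O}(kn^2)$, but generic dense inversion of an $n\times n$ matrix costs $\Omega(n^\omega)$, which already exceeds $\tilde{O}(kn^2)$ when $k$ is small (e.g.\ $k=1$ after the preconditioning used elsewhere in this paper). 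Achieving $\tilde{O}(kn^2)$ requires exploiting the special (Krylov/companion) structure of $V$ so that applying $V^{-1}$ reduces to structured solves; without that ingredient the claimed bound does not follow from your argument.
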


We will use the following preconditioning
due to~\cite{Kaltofen:1991} to make sure that there is just one invariant factor with high
probability. An $n\times n$ Hankel matrix $H$ is constructed from
$2n-1$ elements $h_0,\ldots,h_{2n-2}$ in the following way
\[
H = \left[\begin{matrix}
h_0& h_1&\ldots &h_{n-2} & h_{n-1}\\
h_1& h_2&\ldots &h_{n-1} & h_{n}\\
\vdots& & \iddots & & \vdots\\
h_{n-1}& h_n&\ldots &h_{2n-3} & h_{2n-2}\\
\end{matrix}
\right].
\]
We note that multiplication of a matrix by Hankel matrix $H$ takes
$\tilde{O}(n^2)$ time~\cite{bini-pan}. Similarly, computing $H^{-1}$ or multiplying
a matrix by $H^{-1}$ takes $\tilde{O}(n^2)$.

\begin{theorem}[Theorem 2 from \cite{Kaltofen:1991}]
\label{theorem-preconditioning}
Let $A$ be the non-singular square matrix $A$, let $H$ be a Hankel matrix with elements selected randomly and
uniformly from $\mathcal{F}$, then all leading (or trailing) principal submatrices of $\hat{A}=AH$ are non-singular with high probability.
\end{theorem}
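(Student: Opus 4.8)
Since this is Theorem~2 of~\cite{Kaltofen:1991}, the plan is to reconstruct a self-contained argument based on the Schwartz--Zippel lemma. First I would treat the $2n-1$ Hankel parameters $h_0,\dots,h_{2n-2}$ as independent indeterminates, form the symbolic Hankel matrix $H$ and the symbolic product $\hat{A}=AH$, and for each $i\in\{1,\dots,n\}$ let $D_i$ be the determinant of the leading $i\times i$ principal submatrix of $\hat{A}$. Each $D_i$ is a polynomial in the $h_j$ of degree exactly $i$ (it is an $i\times i$ determinant whose entries are single Hankel parameters), so $D:=\prod_{i=1}^n D_i$ has degree $\binom{n+1}{2}=O(n^2)$. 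Once I show $D\not\equiv 0$, the Schwartz--Zippel lemma gives that a uniformly random assignment of the $h_j$ (over $\mathcal{F}$, or over a sufficiently large extension of $\mathcal{F}$ if the base field is too small to guarantee a small failure probability) makes all $D_i$ nonzero simultaneously, which is exactly the claim that every leading principal submatrix of $\hat{A}$ is non-singular with high probability.

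The crux is therefore to prove that each $D_i$ is a nonzero polynomial. Writing $A_{[i]}$ for the first $i$ rows of $A$ and $H^{[i]}$ for the first $i$ columns of $H$, the leading submatrix equals $A_{[i]}H^{[i]}$, so the Cauchy--Binet formula gives
\begin{equation*}
D_i=\sum_{|S|=i}\det(A_{[i],S})\,\det(H_{S,[i]}),
\end{equation*}
where $S$ ranges over $i$-element subsets of the columns, $A_{[i],S}$ is the $i\times i$ submatrix of $A$ on rows $1,\dots,i$ and columns $S$, and $H_{S,[i]}$ is the $i\times i$ submatrix of $H$ on rows $S$ and columns $1,\dots,i$. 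Because $A$ is non-singular its first $i$ rows have rank $i$, so at least one coefficient $\det(A_{[i],S})$ is nonzero. It then suffices to show that the Hankel minors $\{\det(H_{S,[i]}):|S|=i\}$ are linearly independent over $\mathcal{F}$, since then $D_i$ is a nontrivial linear combination of linearly independent polynomials and cannot vanish identically.

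To establish this linear independence I would use a leading-monomial argument. The $(p,q)$ entry of $H_{S,[i]}$ (with $S=\{s_1<\dots<s_i\}$) is $h_{s_p+q-2}$, so the identity permutation contributes the squarefree monomial $m_S=\prod_{p=1}^i h_{s_p+p-2}$, whose index set $\{s_p+p-2\}$ is strictly increasing and, crucially, determines $S$ uniquely (from the sorted indices $t_p=s_p+p-2$ one recovers $s_p=t_p-p+2$); hence the $m_S$ are pairwise distinct across $S$. Assigning the convex weight $w(h_k)=k^2$, the weight of the term of $\det(H_{S,[i]})$ coming from a permutation $\sigma$ is $\sum_p(s_p+\sigma(p)-2)^2$, which by the rearrangement inequality for the strictly convex function $x\mapsto x^2$ is uniquely maximized by $\sigma=\mathrm{id}$; thus $m_S$ survives (is not cancelled) and is the unique highest-weight term. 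Distinct leading monomials across $S$ force linear independence, completing the key lemma, and hence $D_i\not\equiv 0$. The trailing-submatrix case follows by applying the same argument after reversing the row and column order: conjugating by the reversal permutation $J$ sends $\hat{A}=AH$ to $(JAJ)(JHJ)$, where $JAJ$ is again non-singular and $JHJ$ is again a Hankel matrix (its $(r,s)$ entry depends only on $r+s$), so the leading principal submatrices of the conjugate, i.e.\ the trailing ones of $\hat{A}$, are covered identically. I expect the main obstacle to be precisely this linear-independence step: the Hankel (Toeplitz-type) structure makes many permutation terms collapse onto the same monomial, so one must commit to the weight/term-order device above to certify that the diagonal monomial is not cancelled; the degree bookkeeping and the extension-field enlargement needed to run Schwartz--Zippel over small $\mathcal{F}$ are then routine.
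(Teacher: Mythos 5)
The paper does not actually prove this statement: it is imported verbatim as Theorem~2 of \cite{Kaltofen:1991} and used as a black box, so there is no in-paper argument to compare yours against. Your blind reconstruction is, however, essentially correct and self-contained. The skeleton --- reduce to showing each $D_i\not\equiv 0$, apply Cauchy--Binet to write $D_i$ as an $\mathcal{F}$-linear combination of the maximal minors $\det(H_{S,[i]})$ with at least one nonzero coefficient supplied by the non-singularity of $A$, prove those minors linearly independent, and finish with Schwartz--Zippel over a sufficiently large (extension) field --- is sound, and the weighted leading-term device correctly certifies that the diagonal monomial $m_S=\prod_p h_{s_p+p-2}$ survives: writing $\sum_p(s_p+\sigma(p)-2)^2$ as a $\sigma$-independent part plus $2\sum_p(s_p-2)\sigma(p)$, the strict rearrangement inequality shows every non-identity permutation term has strictly smaller weight, so no term can share the monomial $m_S$ and its coefficient is exactly $+1$. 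The one place your write-up is looser than it should be is the closing step ``distinct leading monomials across $S$ force linear independence'': the weight $w(h_k)=k^2$ only induces a preorder on monomials, and distinct $S,S'$ may well have $w(m_S)=w(m_{S'})$, so you must refine it to a genuine total monomial order (weight first, ties broken, say, lexicographically); since $m_S$ is the \emph{unique} maximum-weight monomial of $\det(H_{S,[i]})$, it remains the leading monomial under any such refinement, and pairwise distinct leading monomials with respect to a total order do yield linear independence over $\mathcal{F}$. With that one-line repair --- and the reversal conjugation for trailing submatrices, which you handle correctly since $JHJ$ is again Hankel in the reindexed, still uniform, parameters --- the proof is complete.
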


\begin{theorem}[Equation (1) from \cite{Kaltofen:1991}]
\label{theorem-preconditioning-2}
Let $\hat{A}$ be matrix such that all its leading principal submatrices are non-singular,
let $J$ be a diagonal matrix with elements selected randomly and uniformly from $\mathcal{F}$, then $\tilde{A}=\hat{A}J$ has one invariant
factor with high probability.

\end{theorem}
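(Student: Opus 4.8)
Having a single invariant factor means that $\tilde{A}=\hat{A}J$ is \emph{nonderogatory}: its minimal polynomial equals its characteristic polynomial, so its rational canonical form is a single companion block. My plan is to treat the diagonal entries $j_1,\dots,j_n$ of $J$ as indeterminates over $\mathcal{F}$ and to produce a single nonzero polynomial $P(j_1,\dots,j_n)$ whose zero set contains every ``bad'' choice of $J$ (those for which $\tilde{A}$ is derogatory). Once such a $P$ is in hand, a Schwartz--Zippel argument bounds the probability that a uniformly random $J$ satisfies $P(J)=0$ by $\deg P/|\mathcal{F}|$; when $\mathcal{F}$ is too small the $j_i$ are instead drawn from a polynomially large extension, which yields the claimed ``with high probability'' statement.

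To build $P$, fix the test vector $w=e_1$ and form the Krylov matrix
\[
    K(J)=\bigl[\, w \mid \tilde{A}w \mid \tilde{A}^{2} w \mid \cdots \mid \tilde{A}^{\,n-1}w \,\bigr].
\]
If $\det K(J)\neq 0$ then $w$ is a cyclic vector and $\tilde{A}$ is nonderogatory; contrapositively, every derogatory $J$ lies in the zero set of $P(j_1,\dots,j_n):=\det K(J)$, which is a polynomial in the diagonal entries. Thus the whole theorem reduces to the single claim $P\not\equiv 0$. (Equivalently one may use the Hankel ``moment'' determinant $\det[\,u^{T}\tilde{A}^{\,i+j}w\,]_{0\le i,j<n}$, whose nonvanishing certifies that the minimal polynomial has degree $n$; this is the quantity actually exploited by Wiedemann's method, and either object will do.)

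The crux, and the step I expect to be the main obstacle, is exhibiting one diagonal witness at which $P$ is nonzero using \emph{only} the hypothesis that every leading principal submatrix of $\hat{A}$ is nonsingular. The plan is to specialize $j_i\mapsto\tau^{q_i}$ for a single variable $\tau$ and a rapidly increasing exponent sequence $q_1\ll q_2\ll\cdots\ll q_n$, and then to read off the coefficient of the top power of $\tau$ in $P(\tau^{q_1},\dots,\tau^{q_n})$. With sufficiently separated exponents each iterate $\tilde{A}^{s}w$ is dominated by a single monomial in the $j_i$, so the highest-order $\tau$-term of the determinant isolates one term of its Leibniz expansion; tracking which rows and columns survive should identify this coefficient, up to sign and a nonzero scalar, with a product of leading principal minors of $\hat{A}$. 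Since those minors are nonzero by hypothesis, the leading coefficient is nonzero, hence $P(\tau^{q_1},\dots,\tau^{q_n})$ is a nonzero polynomial in $\tau$ and therefore $P\not\equiv 0$. The delicate work is the combinatorial bookkeeping that (i) certifies the exponent gaps really do single out a unique dominant monomial and (ii) matches the surviving term with the product of leading principal minors; with that identification established, the generic nonderogatoriness of $\tilde{A}$ together with Schwartz--Zippel completes the argument.
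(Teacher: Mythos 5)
The paper offers no proof of this statement at all: it is imported verbatim as Equation (1) of Kaltofen's 1991 paper and used as a black box, so there is no internal argument to compare against and your attempt has to stand on its own.

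On its own it breaks at exactly the step you identify as the crux, and in a way that no amount of combinatorial bookkeeping will repair. Your certificate for nonderogatoriness is $P(j_1,\dots,j_n)=\det K(J)$ with the \emph{fixed} starting vector $w=e_1$, and the entire argument reduces to $P\not\equiv 0$. But that reduction is to a false statement for matrices satisfying the hypothesis: take $\hat{A}=I$, all of whose leading principal submatrices are nonsingular. Then $\tilde{A}=\hat{A}J=J$ is diagonal, $e_1$ is an eigenvector of $\tilde{A}$ for every $J$, the Krylov space of $e_1$ is one-dimensional, and $\det K(J)\equiv 0$ identically once $n\ge 2$ --- even though the theorem holds here, since $J$ with distinct diagonal entries has $n$ distinct eigenvalues and hence a single invariant factor. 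The same collapse kills the Hankel moment variant with fixed $u,w$. The root cause is that ``one invariant factor'' means \emph{some} vector is cyclic, not that a prespecified one is, so the nonvanishing of a fixed-vector Krylov determinant is sufficient but not necessary for nonderogatoriness; when it is identically zero, Schwartz--Zippel applied to it yields no information. The standard and workable route is to certify the stronger property that $\hat{A}J$ has $n$ distinct eigenvalues: one shows that the discriminant of $\det(\lambda I-\hat{A}J)$, viewed as a polynomial in $j_1,\dots,j_n$, is not identically zero, exhibiting a witness by induction on $n$ that crucially uses the nonvanishing of the leading principal minors (this is where the hypothesis actually enters, and it is genuinely the hard part of Kaltofen's argument). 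Your Schwartz--Zippel outer shell is fine, but it must be wrapped around that discriminant, not around a fixed-vector Krylov determinant.
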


}

\exabstract{
In the general case we use the result due to Eberly~\cite{eberly00},
who showed how the Frobenius normal form of a sparse matrix can be
computed. Together with other algebraic tools such as preconditioning from~\cite{Kaltofen:1991}
by using Hankel matrices we obtain the following theorem.
}

\begin{theorem}
\label{thm-general-determinant}
Let $G$ be a directed \dpl{} graph with parameters $\pl$ and $t$, and let $M$ be a matrix, whose non-zero entries are a subset of non-zero entries
of the adjacency matrix of $G$.
Moreover, let $1\le k< n^{1/\pl}(t+1)^{1-1/\pl}$.
Then, one can compute the determinant of $M$ in $O(n^2(t+1)^{\pl-1} k^{2-\pl}+\omega(n,n,n(t+1)^{\pl-1}k^{1-\pl}))$ randomized time.
\end{theorem}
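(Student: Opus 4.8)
The plan is to mirror the block-elimination approach used for Theorem~\ref{thm-transitive-closure} and Theorem~\ref{theorem-determinant-symmetric}, but to replace the symmetric sparse-matrix machinery (Lanczos) by a black-box treatment of a \emph{non-symmetric} sparse block. First I would sort the rows of $M$ by their number of non-zero entries (equivalently, by the out-degrees of the vertices of $G$) and, for the chosen threshold $k$, split
\[
M = \begin{bmatrix} A & B \\ C & D \end{bmatrix},
\]
where $[A\ B]$ collects the rows with more than $k$ non-zeros and $[C\ D]$ the rows with at most $k$ non-zeros, with $A$ and $D$ square. By Lemma~\ref{lem:high_degree_num} the dimension $n_k$ of $A$ is $O(n(t+1)^{\pl-1}k^{1-\pl})$, and by Lemma~\ref{lem:small_degrees} the number $m_{CD}$ of non-zeros in $[C\ D]$ is $O(n(t+1)^{\pl-1}k^{2-\pl})$; note that, in contrast to the symmetric case, $B$ need not be sparse. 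I would then read off the determinant from the Schur complement,
\[
\det(M) = \det(D)\,\det(A - B D^{-1} C),
\]
reducing the task to (i) computing $\det(D)$, (ii) forming the single dense product $W = D^{-1}C$, and (iii) evaluating the $n_k \times n_k$ determinant $\det(A - BW)$.

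To handle the non-symmetric sparse block $D$ I would apply Eberly's algorithm (Theorem~\ref{theorem-eberly}) to compute the Frobenius normal form $V D' V^{-1} = F$ together with $V$ and $V^{-1}$, where $D' = DHJ$ is $D$ after the preconditioning of Theorem~\ref{theorem-preconditioning} and Theorem~\ref{theorem-preconditioning-2}. The preconditioning plays a double role: the random Hankel multiplier $H$ makes the trailing principal block nonsingular (so that the Schur identity is valid and $D$ is invertible with high probability), while the random diagonal multiplier $J$ forces a single invariant factor with high probability, so that $F$ is a \emph{single} companion matrix and Eberly's additional-operations term is only $\tilde{O}(n^2)$. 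The key is to apply all preconditioners as black-box operators: a matrix--vector product with $D'$ costs $O(m_{CD})$ for the sparse $D$, plus $\tilde{O}(n)$ for the Hankel multiply and $O(n)$ for the diagonal, so each of the $O(n)$ products required by Eberly's algorithm costs $\tilde{O}(m_{CD})$ (using $m_{CD} \ge n$). This produces $F, V, V^{-1}$ and hence $\det(D)$ in $\tilde{O}(n\,m_{CD}) = \tilde{O}(n^2(t+1)^{\pl-1}k^{2-\pl})$ time, matching the first term.

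Next I would compute $W = D^{-1}C$ through the factorisation $D^{-1} = H J V^{-1} F^{-1} V$, applied right to left so that the only dense operation is pushed onto the narrow block of width $n_k$. Explicitly: $VC$ costs $O(n\,m_{CD})$ because $C$ is sparse; $F^{-1}(VC)$ costs $O(n\cdot n_k)$ since a companion-matrix solve is a linear recurrence of cost $O(n)$ per column; the dense multiply $V^{-1}\bigl(F^{-1}VC\bigr)$ is a product of an $n\times n$ matrix by an $n\times n_k$ matrix, of cost $\omega(n,n,n_k)$; and the remaining Hankel and diagonal multiplies add only $\tilde{O}(n\,n_k)$. Then $BW$ is an $n_k\times n$ by $n\times n_k$ product, again of cost $\omega(n,n,n_k)$, and $\det(A - BW)$ costs $O(n_k^\omega)$, both dominated by $\omega(n,n,n_k)$. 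Summing all contributions gives $O\bigl(n^2(t+1)^{\pl-1}k^{2-\pl} + \omega(n,n,n(t+1)^{\pl-1}k^{1-\pl})\bigr)$, as claimed.

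The main obstacle, relative to the symmetric proof, is precisely that $B$ is dense, so the symmetric trick of writing $B D^{-1} C = (BQ)\,T\,(Q^TC)$ as two \emph{sparse} multiplications is unavailable; one is forced to materialise $W = D^{-1}C$ and to pay a rectangular matrix multiplication. The delicate point is therefore to confine every non-sparse operation to the block of width $n_k$, and this is exactly what the single-invariant-factor preconditioning buys: it lets $D^{-1}$ act through the structured companion matrix $F$ (cheap, $O(n)$ per column) and a \emph{single} dense $V^{-1}$-multiply charged to $\omega(n,n,n_k)$, instead of through $n_k$ independent black-box solves, which would cost an extra factor of $n_k$. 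Verifying that the preconditioners can be applied without destroying the sparsity of $D$, and that nonsingularity of $D$ together with the single-invariant-factor property hold with high probability, is the part of the argument that needs the most care.
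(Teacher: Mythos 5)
Your overall architecture matches the paper's proof: sort by row sparsity, split into blocks, use the Schur complement $\det(M)=\det(D)\det(A-BD^{-1}C)$, run Eberly's algorithm as a black box on the preconditioned sparse block (with matrix--vector products costing $\tilde{O}(m_{CD})$ each), exploit the $O(n)$-sparsity of $F^{-1}$ for a companion matrix, and confine the dense work to one $\omega(n,n,n_k)$ rectangular product plus terms it dominates. The complexity accounting and the final balancing are correct and essentially identical to the paper's.

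However, there is a genuine gap in how you precondition. You set $D' = DHJ$, i.e., you right-multiply the block $D$ \emph{alone} by an invertible Hankel matrix and an invertible diagonal matrix, and claim this ``makes the trailing principal block nonsingular.'' It cannot: $H$ and $J$ are invertible, so $DHJ$ is singular exactly when $D$ is singular, and there is no reason for $D$ (the submatrix indexed by low-degree rows and columns) to be invertible --- if it is not, $D^{-1}$ does not exist and the Schur identity you rely on breaks down entirely. Moreover, Theorem~\ref{theorem-preconditioning} itself is stated for a \emph{non-singular} input matrix, and Theorem~\ref{theorem-preconditioning-2} requires the leading principal submatrices to already be non-singular, so both hypotheses you need would have to be assumed rather than produced by your preconditioner. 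The paper resolves this by preconditioning the \emph{whole} matrix from the right with the block matrix $\left[\begin{smallmatrix} I & H_B\\ 0 & H_D\end{smallmatrix}\right]$, so that the new trailing block becomes $CH_B + DH_D = (MH)_D$, a trailing principal submatrix of $MH$; by Theorem~\ref{theorem-preconditioning} this is non-singular with high probability whenever $M$ is, and only then is $J_D$ applied to force a single invariant factor. This column-mixing step (which also replaces $B$ by $AH_B+BH_D$ and requires dividing out $\det(H_D J_D)$ at the end) is the idea your proposal is missing; without it the algorithm fails on any instance where $D$ happens to be singular. The rest of your argument would go through once the preconditioner is corrected in this way.
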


\fullversion{
\begin{proof}
Similarly as in the symmetric case, we start with sorting the rows of $M$
in decreasing order according to the number of non-zero entries
Then, for a given $k \in [0,n]$, we split the matrix $M$ into $4$ submatrices

\begin{equation*}
M = \begin{bmatrix}
A & B \\
C & D
\end{bmatrix}
\end{equation*}

where $\begin{bmatrix} A & B \end{bmatrix}$ contains rows with more than $k$ non-zero entries, $\begin{bmatrix} C & D \end{bmatrix}$ has at most $k$ non-zero entries in each row and $A$, $D$ are square matrices.
Let $m_{CD}$ be the total number of non-zero entries in submatrices $C, D$. By Lemma~\ref{lem:small_degrees} $m_{CD}$ is bounded $O(n(t+1)^{\pl-1} k^{2-\pl})$.

Let $X$ be and arbitrary $n\times n$ matrix. A submatrices of $X$ obtained by performing similar split as for $M$ are denoted by
\begin{equation*}
X = \begin{bmatrix}
X_A & X_B \\
X_C & X_D
\end{bmatrix}.
\end{equation*}

Let $H$ and $J$ be random  matrices as given in Theorem~\ref{theorem-preconditioning}.
We cannot afford to precondition the whole matrix, so we  precondition
only the essential part that is needed for the Schur complement to work.

\begin{equation*}
\hat{M} =
\begin{bmatrix}
A & B \\
C & D
\end{bmatrix}
\begin{bmatrix}
I & H_B\\
0 & H_D
\end{bmatrix}
 =
\begin{bmatrix}
A & AH_B+BH_D \\
C & CH_B+DH_D
\end{bmatrix}.
\end{equation*}
You may observe that $AH_B+BH_D = (MH)_B$ and $CH_B+DH_D =(MH)_D$. Hence,
by Theorem~\ref{theorem-preconditioning} all trailing principal submatrices of $CH_B+DH_D$ are non-singular. Now
we apply the second part of the preconditioning.
\begin{equation*}
\tilde{M} =
\hat{M}
\begin{bmatrix}
I & 0 \\
0 & J_D
\end{bmatrix}
=
\begin{bmatrix}
A & (AH_B+BH_D)J_D \\
C & (CH_B+DH_D)J_D
\end{bmatrix}.
\end{equation*}
By Theorem~\ref{theorem-preconditioning-2} the matrix $\tilde{M}_D=(CH_B+DH_D)J_D$ has one invariant factor.
By using the formula for the determinant of a Schur complement we obtain
$$\det(\tilde{M}) = \det(\tilde{M}) \det(A - \tilde{M}_B (\tilde{M}_D)^{-1} C)\,.$$
Let $n_k$ be the dimension of $A$. By Lemma~\ref{lem:high_degree_num} we have that $n_k$
is bounded by $O(n(t+1)^{\pl-1}k^{1-\pl})$.

By invoking Theorem~\ref{theorem-eberly} we compute
matrices $V, F$ such that $\tilde{M}_D = V^{-1} F V$, as well as $\det(\tilde{M}_D)=\det(F)$ in time $\tilde{O}(n m_{CD})$.

Denote $\tilde{Z} = A - \tilde{M}_B (\tilde{M}_D)^{-1} C= A -  \tilde{M}_B V^{-1}F^{-1} V C$.
To compute $\tilde{Z}$ efficiently, we first compute $\tilde{M}_B=(AH_B+BH_D)J_D=(A,B)(H_B,H_D)^TJ_D$ what
requires $O(n^2)$ time. Then computing $\tilde{M}_B V^{-1}$ requires $O(\omega(n,n_k,n))$ time.
Next, we compute $VC$ in $O(n m_{CD})$ time because $C$ has at most $m_{CD}$ non-zero entries.

Due to special structure of $F$ we know that $F^{-1}$ has $O(n)$ non-zero entries and
computing it takes $O(n)$ time using the following form for each companion matrix
\[
C_g^{-1} =\frac{1}{g_0} \left[\begin{matrix}
-g_1 & 1 & 0 &\ldots & 0 \\
-g_2 & 0 & 1 & \ldots & 0 \\
\vdots  &  \vdots & \ddots &  \vdots \\
-g_{d-1} & 0 & 0& \ldots & 1\\
-1 &0 &0&  \ldots & 0
\end{matrix}\right].
\]

As $F^{-1}$ has $O(n)$ non-zero entries so computing $F^{-1} V C$
takes time proportional to the size of $VC$, that is $n \cdot n_k$.
To obtain $\tilde{Z}$ we multiply $\tilde{M}_B V^{-1}$ by $F^{-1}VC$ in time $O(n / n_k \cdot n_k^{\omega})$.
After computing $\tilde{Z}$, we can compute $\det(\tilde{Z})$ in time $O(n_k^{\omega})$.
Finally, by our preconditioning $\det(M)=\frac{\det(\tilde{Z})\det(D)}{\det(H_DJ_D)}$, where we
need $\tilde{O}(n)$ time to compute $\det(H_DJ_D)$.

Summing up, we have to set the value of $\param$ to minimize the maximum of the following
\begin{itemize}
\item $\tilde{O}(n^2)$ for computing $BHJ$, $DHJ$ and $HJ V^{-1}$,
\item $\tilde{O}(nm_{CD})$ time used by invoking Theorem~\ref{theorem-eberly}
and for computing the product $VC$,
\item $O(\omega(n,n,n_k))$ time for computing $B(HJ V^{-1})$,
  \item $O(nn_k)$ time for computing the product $F^{-1}\cdot VC$,
  \item $O(nn_k^{\omega-1})$ time for computing the product $BHJ V^{-1} \cdot F^{-1}VC$,
  \item $O(n_k^{\omega})$ time for computing $\det(\tilde{Z})$,
  \item $\tilde{O}(n)$ time for computing $\det(HJ)$ and $F^{-1}$.
\end{itemize}
Note that $O(\omega(n,n,n_k))$ dominates $O(nn_k^{\omega-1})$ , $nn_k$ and $n_k^{\omega}$,
whereas $\tilde{O}(nm_{CD})$ dominates $\tilde{O}(n^2)$. Therefore we need to set the value of $\param$, so
that $nm_{CD} = \omega(n,n,n_k)$.
\end{proof}
}

The above theorem gives a general statement that in the parameter range $1<\pl<2$ it is possible to compute determinant
faster than by using algorithms for general graphs. However, the statement of the theorem contains tangled equation, so in order to simplify it
we assume that $t=O(\polylog{} n)$.  Let $\omega(\beta)$ be defined such that $n^{\omega(\beta)}=\omega(n,n,n^\beta)$.

\begin{corollary}
\label{cor-general-determinant}
Let $G$ be a directed \dpl{} graph with parameters $\pl$ and $t$, and $M$ be a matrix, whose non-zero entries are a subset of non-zero entries of the adjacency matrix of $G$.
Let $0\le \beta< 1$ be such that $2+\beta(2-\pl)= \omega(1+\beta(1-\pl))$, $\beta<1/\pl$. Moreover, assume $t=O(\polylog{} n)$. Then,
we can compute the determinant of $M$ in $\tilde{O}(n^{2+\beta(2-\pl)})$ randomized time.
\end{corollary}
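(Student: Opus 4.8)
The plan is to derive the corollary from Theorem~\ref{thm-general-determinant} by the single substitution $k = n^\beta$ and then simplify using the polylogarithmic bound on $t$. First I would recall that the running time guaranteed by Theorem~\ref{thm-general-determinant} is
\[
O(n^2(t+1)^{\pl-1} k^{2-\pl}+\omega(n,n,n(t+1)^{\pl-1}k^{1-\pl})),
\]
valid for every $1 \le k < n^{1/\pl}(t+1)^{1-1/\pl}$. Since we assume $t = O(\polylog n)$ and $\pl = O(1)$, any factor of the form $(t+1)^{c}$ with $c = O(1)$ is polylogarithmic in $n$ and can be absorbed into the $\tilde{O}(\cdot)$ notation. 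Hence the running time becomes $\tilde{O}(n^2 k^{2-\pl} + \omega(n,n,n k^{1-\pl}))$.

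Next I would set $k = n^\beta$, where $\beta$ is the value posited in the hypothesis. The first summand becomes $n^2 (n^\beta)^{2-\pl} = n^{2+\beta(2-\pl)}$. For the second summand, the last dimension of the rectangular product is $n \cdot (n^\beta)^{1-\pl} = n^{1+\beta(1-\pl)}$, so by the definition $n^{\omega(\gamma)} = \omega(n,n,n^\gamma)$ this summand equals $n^{\omega(1+\beta(1-\pl))}$. The assumed identity $2+\beta(2-\pl) = \omega(1+\beta(1-\pl))$ is exactly the statement that the two summands coincide as powers of $n$; both therefore equal $n^{2+\beta(2-\pl)}$, their sum is $O(n^{2+\beta(2-\pl)})$, and restoring the absorbed polylog factors yields the claimed $\tilde{O}(n^{2+\beta(2-\pl)})$ bound.

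Finally I would verify that $k = n^\beta$ lies in the admissible range of Theorem~\ref{thm-general-determinant}. The condition $\beta \ge 0$ gives $k = n^\beta \ge 1$, while $\beta < 1/\pl$ together with $(t+1)^{1-1/\pl} \ge 1$ (which holds since $t \ge 0$ and $\pl > 1$) gives $k = n^\beta < n^{1/\pl} \le n^{1/\pl}(t+1)^{1-1/\pl}$, as required.

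This argument is essentially a routine substitution, so there is no genuine obstacle; the only point deserving care is confirming that the polylogarithmic $(t+1)$ factors truly vanish into $\tilde{O}$ without interfering with the balancing of the two terms. This is immediate here because the balance equation is stated purely in terms of powers of $n$, and the $(t+1)$ factors contribute only multiplicative polylog overhead to each summand. The existence of a suitable $\beta$ satisfying both $2+\beta(2-\pl) = \omega(1+\beta(1-\pl))$ and $\beta < 1/\pl$ is assumed in the hypothesis, so no feasibility argument is needed.
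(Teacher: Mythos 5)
Your proposal is correct and matches the paper's (implicit) derivation: the paper presents this corollary as a direct simplification of Theorem~\ref{thm-general-determinant} obtained by setting $k=n^{\beta}$, absorbing the $(t+1)^{O(1)}$ factors into $\tilde{O}(\cdot)$ via the assumption $t=O(\polylog n)$, and using the balance equation to equate the two summands, exactly as you do. Your additional check that $k=n^{\beta}$ lies in the admissible range $1\le k< n^{1/\pl}(t+1)^{1-1/\pl}$ is a welcome detail the paper leaves unstated.
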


\fullversion{
We observe that when $M$ is symmetric then $B$ is sparse as in the proof of Theorem~\ref{theorem-determinant-symmetric}. In
such a case computing $B(HJ V^{-1})$ takes $O(nm)$ time instead of $\omega(n,n,n_k)$ time and we obtain similar bounds as in Theorem~\ref{theorem-determinant-symmetric}.
}

\exabstract{
By inspecting the proof of Theorem~\ref{theorem-determinant-symmetric}
we obtain the following corollaries.
}
\begin{corollary}
\label{cor-symmetric-determinant}
It is possible to drop the generic assumption from Theorem~\ref{theorem-determinant-symmetric} by
increasing the running time by polylogarithmic factors.
\end{corollary}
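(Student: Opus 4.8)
The plan is to observe that the only place where the generic assumption enters the proof of Theorem~\ref{theorem-determinant-symmetric} is the invocation of the Lanczos-type algorithm of Theorem~\ref{thm:lanczos} on the sparse block $D$: genericity guarantees both that $D$ is nonsingular (so that the Schur complement $A - BD^{-1}C$ is defined) and that the Lanczos process does not break down and returns a full decomposition $D = QTQ^{T}$. Everything else in the proof is purely algebraic and makes no genericity assumption. Hence it suffices to replace this single step by a preconditioned variant that succeeds with high probability for an arbitrary symmetric $M$, while keeping the cost of each matrix--vector product with the block essentially unchanged.

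First I would precondition only the block $D$ by a symmetry-preserving congruence $\tilde{D} = U D U^{T}$, where $U$ is a random structured matrix (e.g. a unit lower-triangular Toeplitz or butterfly matrix, as in the standard black-box preconditioner constructions). Two properties are essential. (i)~$U$ admits superfast matrix--vector multiplication, so a product $\tilde{D} v = U\,(D\,(U^{T} v))$ costs $\tilde{O}(n) + O(m_{BCD})$; since Lanczos needs only $O(n)$ such products, the total cost of the Lanczos step is $\tilde{O}(n\,m_{BCD})$, matching the bound in Theorem~\ref{theorem-determinant-symmetric} up to polylogarithmic factors. (ii)~Choosing $U$ unit-triangular gives $\det U = 1$, so $\det D = \det \tilde{D}$, and the decomposition $\tilde{D} = QTQ^{T}$ yields a factorization of $D$ through $U$; consequently $D^{-1}$ can still be applied to the columns of $C$ within the same budget by composing the cheap applications of $U, U^{T}$ with the Lanczos factors. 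The remaining computation of $Z = A - BD^{-1}C$ and of $\det Z$ is then carried out verbatim as in the original proof.

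Next I would deal with the ways in which the process can fail for a non-generic $M$. The Lanczos breakdown conditions are polynomials in the random entries of $U$ that are not identically zero whenever the underlying block has the right rank profile; a Schwartz--Zippel argument then shows that $\tilde{D}$ is Lanczos-amenable with high probability, exactly as in the black-box preconditioning literature and in the spirit of Theorems~\ref{theorem-preconditioning} and~\ref{theorem-preconditioning-2}. When the base field $\mathcal{F}$ is too small to make this probability large, I would pass to an extension $\mathcal{F}' = \mathcal{F}[x]/(p(x))$ of degree $O(\log n)$; arithmetic in $\mathcal{F}'$ costs only a polylogarithmic factor more, the preconditioner then succeeds with high probability, and $\det(M)$, which lies in the base field $\mathcal{F}$, is read off from the computation over $\mathcal{F}'$. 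Combining these steps removes the genericity hypothesis at the cost of a $\tilde{O}(1)$ overhead, as claimed.

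The main obstacle I expect is the singular-pivot case: when $M$ is not generic the degree-based split may produce a \emph{singular} block $D$, and a symmetric congruence cannot repair this since $\det(U D U^{T}) = \det(U)^{2}\det(D)$. Unlike the non-symmetric Theorem~\ref{thm-general-determinant}, where a Hankel/diagonal preconditioner is applied one-sidedly to force the trailing block to be nonsingular, here I must restore nonsingularity of $D$ while simultaneously preserving both symmetry (so that the fast Lanczos bound applies) and the sparse black-box structure of $D$ (so that matrix--vector products stay cheap). The crux is therefore to exhibit a symmetry- and structure-preserving random preconditioner whose associated trailing-minor polynomials are provably nonvanishing whenever $M$ itself is nonsingular; establishing this nonvanishing, rather than the routine cost accounting, is the technical heart of the argument.
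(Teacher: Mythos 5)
Your proposal contains a genuine gap, and it is exactly the one you flag at the end without resolving. The whole difficulty of dropping genericity in Theorem~\ref{theorem-determinant-symmetric} is that for a non-generic symmetric $M$ the trailing block $D$ produced by the degree-based split may be singular even though $M$ is nonsingular, in which case the Schur-complement identity $\det(M)=\det(D)\det(A-BD^{-1}C)$ is unusable; and, as you correctly observe, a symmetric congruence $\tilde{D}=UDU^{T}$ satisfies $\det(\tilde D)=\det(U)^{2}\det(D)$ and therefore cannot repair this. Your plan hinges on ``a symmetry- and structure-preserving random preconditioner whose associated trailing-minor polynomials are provably nonvanishing whenever $M$ itself is nonsingular,'' but no such object is exhibited, and no Schwartz--Zippel argument can apply to a polynomial that is identically zero on the relevant stratum (which is the case when $D$ is forced singular by the sparsity pattern). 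So the ``technical heart'' you defer is not a routine verification: it is the entire content of the corollary, and the Lanczos-plus-symmetric-congruence route does not supply it.

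The paper avoids this by giving up symmetry of the preconditioned system rather than trying to preserve it. It simply reruns the machinery of Theorem~\ref{thm-general-determinant}: a one-sided Hankel preconditioner makes all trailing principal submatrices of $MH$ nonsingular with high probability (Theorem~\ref{theorem-preconditioning}), a random diagonal factor forces a single invariant factor (Theorem~\ref{theorem-preconditioning-2}), and Eberly's Frobenius-normal-form algorithm (Theorem~\ref{theorem-eberly}) replaces Lanczos on the now non-symmetric block $\tilde M_D$. The symmetric running time is then recovered not by preserving symmetry of the algorithm but by a single observation about the input: when $M$ is symmetric the block $B$ has only $O(m_{BCD})$ nonzero entries, so the one step of the general algorithm that cost $\omega(n,n,n_k)$ --- the multiplication $B(HJV^{-1})$ --- drops to $O(nm)$, matching Theorem~\ref{theorem-determinant-symmetric} up to the polylogarithmic factors hidden in Eberly's $\tilde O$. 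If you want to salvage your write-up, the shortest fix is to abandon the Lanczos step altogether and make this reduction to the general non-symmetric algorithm explicit.
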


\fullversion{
\subsection{Linear System Solution and Matrix Inverse}
In order to solve linear system with matrix $M$ we will extend the idea of the algorithm from the previous section,
i.e., we first run the above algorithm to compute the determinant of $M$ and store all intermediate results of this
computation. Let $v$ be an $n$ length vector, then to find a vector $x$ such that $Mx=v$ we compute
\begin{equation}
x=M^{-1}v =
\begin{bmatrix}
I & H_B\\
0 & H_D
\end{bmatrix}
\begin{bmatrix}
I & 0 \\
0 & J_D
\end{bmatrix}
\tilde{M}^{-1}
v.
\label{eqn-linear-system}
\end{equation}
Now we express inverse of a $\tilde{M}$  in the block form
\begin{equation}
\label{equ:block_inverse}
\tilde{M}^{-1}=
\begin{bmatrix}
I & 0 \\
- (\tilde{M}_D)^{-1}C & I
\end{bmatrix}
\begin{bmatrix}
\tilde{Z}^{-1} & 0 \\
C & (\tilde{M}_D)^{-1}
\end{bmatrix}
\begin{bmatrix}
I & - \tilde{M}_B(\tilde{M}_D)^{-1} \\
0 & I
\end{bmatrix}.
\end{equation}
Now we plug in the equation $(\tilde{M}_D)^{-1} = V^{-1}F^{-1}V$ to obtain
\begin{equation*}
\tilde{M}^{-1}v=
\begin{bmatrix}
I & 0 \\
- V^{-1}F^{-1}VC & I
\end{bmatrix}
\begin{bmatrix}
\tilde{Z}^{-1} & 0 \\
C & V^{-1}F^{-1}V
\end{bmatrix}
\begin{bmatrix}
I & - \tilde{M}_BV^{-1}F^{-1}V \\
0 & I
\end{bmatrix}v.
\end{equation*}
Observe that all matrices in the above have been computed during the computation of the determinant, so
computing $\tilde{M}^{-1}v$ takes $O(n^2)$ time. Then we compute $M^{-1}v$ using \eqref{eqn-linear-system} in $\tilde{O}(n^2)$ time.
}

\begin{corollary}
Theorem~\ref{theorem-determinant-symmetric}, Theorem~\ref{thm-general-determinant} and Corollary~\ref{cor-symmetric-determinant}
can be extended to compute a solution to linear system at the cost of using $\tilde{O}(n^2)$ additional time.
\end{corollary}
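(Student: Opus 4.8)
The plan is to turn the factorization that the determinant algorithms already produce into a fast operator for applying $M^{-1}$ to a single vector. First I would run the appropriate determinant procedure --- Theorem~\ref{thm-general-determinant} in the general case, or Theorem~\ref{theorem-determinant-symmetric} / Corollary~\ref{cor-symmetric-determinant} in the symmetric case --- but instead of discarding the intermediate objects I would \emph{store} all of them: the preconditioners $H$ and $J$, the Frobenius data $V$, $V^{-1}$, $F$ of $\tilde{M}_D$ (equivalently the Lanczos factors $Q$ and the tridiagonal $T$ in the symmetric setting), the sparse blocks $C$ and $D$, the precomputed dense block $\tilde{M}_B$, and the $n_k \times n_k$ Schur complement $\tilde{Z}$. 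Since the determinant computation already spends $O(n_k^{\omega})$ time to evaluate $\det(\tilde{Z})$, I would additionally form $\tilde{Z}^{-1}$ explicitly; this stays inside the same time budget and so costs nothing extra asymptotically.

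Next I would compute $x = M^{-1}v$ through the identity \eqref{eqn-linear-system}, whose inner factor $\tilde{M}^{-1}v$ is expanded using the block-triangular decomposition \eqref{equ:block_inverse} after substituting $(\tilde{M}_D)^{-1} = V^{-1}F^{-1}V$. The key observation is that every matrix appearing in this expansion is already in memory, so the evaluation is just a fixed-length sequence of matrix--vector products. Multiplying by $F^{-1}$ costs $O(n)$ because its explicit companion-block form has only $O(n)$ nonzero entries; multiplying by the sparse $C$ costs $O(m_{CD})$, and $m_{CD} = O(n^2)$; applying the $n_k \times n_k$ operator $\tilde{Z}^{-1}$ costs $O(n_k^2) \le O(n^2)$; each dense multiplication by $V$, $V^{-1}$ or $\tilde{M}_B$ costs $O(n^2)$; and applying the Hankel block inside $H$ costs $\tilde{O}(n)$ while the diagonal $J$ costs $O(n)$. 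As there are only $O(1)$ such factors, the total work beyond the determinant computation is $\tilde{O}(n^2)$, as claimed. In the symmetric case the same chain goes through verbatim, with $Q$, $Q^T$ and a tridiagonal solve against $T$ (an $O(n)$ per-apply operation) playing the roles of $V$, $V^{-1}$ and $F^{-1}$.

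The one genuinely non-routine point --- and the step I expect to require the most care --- is \emph{correctness of this inverse representation}, i.e.\ verifying that every inverse used in \eqref{equ:block_inverse} actually exists. This is exactly what the randomized preconditioning buys us: assuming $M$ is non-singular, Theorem~\ref{theorem-preconditioning} makes all trailing principal submatrices of $\tilde{M}_D$ non-singular, so $(\tilde{M}_D)^{-1} = V^{-1}F^{-1}V$ is well defined (equivalently $F$, hence each of its companion blocks, is invertible), and the Schur complement $\tilde{Z} = A - \tilde{M}_B(\tilde{M}_D)^{-1}C$ is non-singular whenever $\tilde{M}$ is. Since the block preconditioners built from $H$ and $J$ are invertible with high probability, $\tilde{M}$ is non-singular together with $M$, so all of these hold with high probability. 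Finally I would note that storing the factorization does not disturb the determinant output, which is still recovered from the stored factors exactly as in the proof of Theorem~\ref{thm-general-determinant}, so the procedure solves $Mx = v$ while simultaneously returning $\det(M)$.
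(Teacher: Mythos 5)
Your proposal matches the paper's proof essentially verbatim: both run the determinant algorithm while retaining all intermediate factors, then evaluate $M^{-1}v$ via equation~\eqref{eqn-linear-system} with $\tilde{M}^{-1}$ expanded in the block Schur form~\eqref{equ:block_inverse} and $(\tilde{M}_D)^{-1}=V^{-1}F^{-1}V$ substituted, so that only $O(1)$ matrix--vector products, each costing $\tilde{O}(n^2)$, remain. Your added remark that $\tilde{Z}^{-1}$ must be formed explicitly within the existing $O(n_k^{\omega})$ budget is a small but welcome clarification of a point the paper glosses over.
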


\fullversion{
Finally, we observe that using \eqref{equ:block_inverse} we can compute the inverse matrix explicitly. In the case
of generic matrices it takes the same time as needed for transitive closure using \eqref{eqn-tc}, whereas
in the symmetric case the most expensive multiplication takes $O(nm)$ time instead of $O(\omega(n,n,n_k))$ time, so we
obtain the following corollary.}
\exabstract{
Finally, we observe that using block matrix decomposition, similar to the one used for transitive closure,
we can compute the inverse matrix explicitly, so we
obtain the following corollary.
}

\begin{corollary}
Theorem~\ref{theorem-determinant-symmetric}, Theorem~\ref{thm-general-determinant} and Corollary~\ref{cor-symmetric-determinant}
can be extended to compute a inverse matrix in the same asymptotic time.
\end{corollary}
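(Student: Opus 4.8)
The plan is to obtain $M^{-1}$ directly from the block-inverse identity~\eqref{equ:block_inverse}, reusing every matrix that the determinant algorithms of Theorems~\ref{theorem-determinant-symmetric} and~\ref{thm-general-determinant} and Corollary~\ref{cor-symmetric-determinant} have already produced. While computing $\det(M)$ we have at hand the sparse blocks $B$ and $C$, the preconditioners $H$ and $J$, the decomposition $\tilde{M}_D = V^{-1}FV$ together with the sparse inverse $F^{-1}$ (so that $\tilde{M}_D^{-1}=V^{-1}F^{-1}V$ is available in factored form), and the Schur complement $\tilde{Z}=A-\tilde{M}_B \tilde{M}_D^{-1}C$ with its inverse $\tilde{Z}^{-1}$. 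Hence the only work left is to carry out the remaining products in~\eqref{equ:block_inverse} and to undo the preconditioning exactly as in~\eqref{eqn-linear-system}.

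Concretely, I would first form the two ``wings'' $\tilde{M}_D^{-1}C$ and $\tilde{M}_B \tilde{M}_D^{-1}$, always applying the sparse factors $C$, $\tilde{M}_B$, $F^{-1}$ before the dense $V,V^{-1}$, so that the single genuinely dense multiplication in each wing has inner dimension bounded by $n_k$ and costs $O(\omega(n,n,n_k))$. Contracting each wing with the $n_k\times n_k$ matrix $\tilde{Z}^{-1}$ then yields the off-diagonal blocks $-\tilde{Z}^{-1}\tilde{M}_B\tilde{M}_D^{-1}$ and $-\tilde{M}_D^{-1}C\tilde{Z}^{-1}$ of $\tilde{M}^{-1}$, again within $O(\omega(n,n,n_k))$ because one dimension equals $n_k$; the correction term $\tilde{M}_D^{-1}C\tilde{Z}^{-1}\tilde{M}_B\tilde{M}_D^{-1}$ of the bottom-right block is obtained from the already computed pieces as a product whose inner dimension is $n_k$, so it too costs $O(\omega(n,n,n_k))$. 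Finally, undoing the Hankel and diagonal preconditioners as in~\eqref{eqn-linear-system} only multiplies by the structured factors $H,J$ and is absorbed into a $\tilde{O}(n^2)$ term.

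It then remains to check that the totals match the claimed bounds. For generic matrices the accounting is term-by-term identical to the transitive-closure computation of Theorem~\ref{thm-transitive-closure} via~\eqref{eqn-tc}, under the dictionary $D^{*}\leftrightarrow\tilde{M}_D^{-1}$ and $(A-BD^{*}C)^{*}\leftrightarrow\tilde{Z}^{-1}$, which yields the same $O(n^2(t+1)^{\pl-1}k^{2-\pl}+\omega(n,n,n(t+1)^{\pl-1}k^{1-\pl}))$ bound. In the symmetric case the Lanczos decomposition $D=QTQ^{T}$ replaces the Frobenius form, and because symmetry makes $B$ as sparse as $C$, the heaviest product drops from $O(\omega(n,n,n_k))$ to $O(nm)$, matching Theorem~\ref{theorem-determinant-symmetric}; Corollary~\ref{cor-symmetric-determinant} then transfers up to polylogarithmic factors.

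The step I expect to be the main obstacle is the standalone dense block $\tilde{M}_D^{-1}$ occupying, unmultiplied, the bottom-right corner of $\tilde{M}^{-1}$. For transitive closure the analogous object $D^{*}$ is produced almost for free, by $n$ reachability searches in $O(nm_{CD})$ time, so its presence costs nothing beyond the $O(n^2)$ needed to write it down; a field inverse admits no such combinatorial shortcut, and forming $V^{-1}F^{-1}V$ as an explicit dense matrix naively costs $O((n-n_k)^{\omega})$, which can exceed the determinant budget. Thus, unlike the other blocks, this one is not an immediate translation of the transitive-closure algorithm, and the heart of the proof is to show it can still be assembled within $O(nm_{CD}+\omega(n,n,n_k))$ by exploiting the factored form (the sparsity of $F^{-1}$, and in the symmetric case the tridiagonal $T$, which lets $\tilde{M}_D^{-1}$ act through already-sparse factors). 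I would check this block against the determinant budget in every admissible range of $k$ before concluding, since it is the one place where the reduction from the determinant algorithm is genuinely delicate.
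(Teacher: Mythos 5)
Your proposal takes the same route as the paper: the paper's entire argument for this corollary is the short remark immediately preceding it, namely that \eqref{equ:block_inverse} can be evaluated explicitly with the same accounting as the transitive-closure computation via \eqref{eqn-tc} (the symmetric case replacing the $O(\omega(n,n,n_k))$ product by $O(nm)$ because $B$ is then as sparse as $C$). This is exactly your dictionary $D^{*}\leftrightarrow\tilde{M}_D^{-1}$, $(A-BD^{*}C)^{*}\leftrightarrow\tilde{Z}^{-1}$, and your handling of the two wings, their contraction with $\tilde{Z}^{-1}$, and the undoing of the Hankel and diagonal preconditioners matches what the paper intends.

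The obstacle you single out is, however, precisely the point on which the paper is silent, and it is a genuine gap rather than a mere delicacy. In \eqref{eqn-tc} the block $D^{*}$ is materialized by $n$ graph searches in $O(nm_{CD})$ time, a combinatorial step with no algebraic counterpart; the paper simply asserts ``the same time as needed for transitive closure'' and never explains how the explicit dense block $\tilde{M}_D^{-1}=V^{-1}F^{-1}V$ is written out. The sparse factor $F^{-1}$ lets one form $F^{-1}V$ in $O(n^2)$ time, but the remaining product $V^{-1}\cdot(F^{-1}V)$ is a dense $(n-n_k)\times(n-n_k)$ by $(n-n_k)\times(n-n_k)$ multiplication costing $O((n-n_k)^{\omega})$, which exceeds the budget $O(nm_{CD}+\omega(n,n,n_k))$ whenever $n_k=o(n)$; solving the $n-n_k$ systems $\tilde{M}_D x=e_i$ one at a time is no better. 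The same difficulty appears in the symmetric case with the Lanczos factorization, since $T^{-1}$ is dense. So your proposal is exactly as complete as the paper's own argument, and the step you flag for verification is the one a full proof would actually have to supply -- either by exhibiting additional structure in $V$ that makes the product cheap, or by weakening the claim to an implicit (factored) representation of the inverse, which does suffice for the linear-system corollary but not, as stated, for outputting $M^{-1}$ explicitly.
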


\fullversion{\subsection{Perfect Matching}}
\exabstract{\paragraph{Perfect Matching}}
As observed by Lovasz~\cite{Lovasz79} in 1979 checking whether a graph contains a perfect matching can be done in $O(n^{\omega})$ time
using one determinant computation for an appropriately defined skew-symmetric matrix. However, an algorithm for finding such perfect matching was shown 25 years latter~\cite{ms04}.
Here, we reuse this idea in the case of \dpl{} graphs to check whether a graph contains a perfect matching and to find one. The running
time of the resulting algorithms is the same as the running time of determinant computation for symmetric matrices.

\fullversion{
Let us define for a graph $G$ a skew-symmetric adjacency matrix $\tilde{M}$ in the following way
\[
\tilde{M}_{i,j} =
\left\{
  \begin{array}{ll}
    z_{i,j}  & \textrm{if } ij \in E \textrm{ and } i < j \\
    -z_{j,i} & \textrm{if } ij \in E \textrm{ and } i > j \\
    0 & \textrm{ otherwise}
  \end{array}
\right. ,
\]
where for each edge $ij\in E$ the variables $z_{i,j}$ are distinct.

\begin{theorem}[\cite{Lovasz79}]
\label{theorem-tutte}
Let $M$ be a matrix obtained from $M$ by substituting uniformly at random elements from $\mathcal{F}$ for variables.
If $G$ has a perfect matching then $\det(M)\neq 0$ with high probability, whereas when $G$ has no perfect matching then $\det(M)=0$.
\end{theorem}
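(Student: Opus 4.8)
The plan is to separate the statement into a symbolic algebraic fact and a probabilistic one. First I would work over the field of rational functions $\mathcal F(z_{i,j})$ and prove Tutte's theorem: the polynomial $\det(\tilde M)$ is not identically zero if and only if $G$ has a perfect matching. Then I would invoke the Schwartz--Zippel lemma to transfer this to a uniformly random numerical substitution $M$.

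For the algebraic core I would expand $\det(\tilde M)=\sum_{\sigma\in S_n}\operatorname{sgn}(\sigma)\prod_{i=1}^n \tilde M_{i,\sigma(i)}$ and immediately drop every $\sigma$ with a fixed point, since the diagonal of $\tilde M$ vanishes; each surviving $\sigma$ is then a product of cycles of length at least two whose every step $i\to\sigma(i)$ is an edge of $G$. The crucial step is to cancel in pairs all permutations containing an odd cycle: reversing the odd cycle through the smallest vertex lying on one gives a fixed-point-free involution $\sigma\mapsto\sigma'$ that leaves $\operatorname{sgn}(\sigma)$ unchanged (the cycle lengths are preserved) while multiplying the monomial by $(-1)^{\ell}=-1$, since reversing a cycle of odd length $\ell$ turns $\ell$ entries $\tilde M_{i,j}$ into $\tilde M_{j,i}=-\tilde M_{i,j}$. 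Hence only permutations all of whose cycles are even contribute.

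It then remains to read off both directions from the surviving terms. A product of matching transpositions $(i\,j)$ with $ij\in E$ contributes the monomial $\prod_{ij}(-z_{i,j}^2)$, a perfect square; distinct perfect matchings of $G$ use distinct edge sets and hence yield distinct monomials, which cannot cancel one another nor be cancelled by the longer even cycles (whose monomials are of degree one in some variable, so are never perfect squares). Thus a perfect matching forces $\det(\tilde M)\not\equiv 0$. Conversely, every surviving permutation splits each even cycle $i_1\to\cdots\to i_{2k}\to i_1$ into the matching $\{i_1i_2,i_3i_4,\dots\}$ of $G$-edges, so the existence of any surviving term already yields a perfect matching of $G$; absent such a matching, $\det(\tilde M)\equiv 0$ and therefore $\det(M)=0$ for every substitution. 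Alternatively one can quote $\det(\tilde M)=\operatorname{Pf}(\tilde M)^2$ and observe that $\operatorname{Pf}(\tilde M)$ is the signed generating polynomial of the perfect matchings of $G$, which packages all the sign bookkeeping at once.

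Finally, $\det(\tilde M)$ has degree $n$ in the indeterminates, so when $G$ has a perfect matching and we substitute field elements chosen independently and uniformly at random, Schwartz--Zippel gives $\prob{\det(M)=0}\le n/|\mathcal F|$; choosing $\mathcal F$ (or a suitable extension) of size $\poly(n)$ makes this $o(1)$ and yields $\det(M)\neq 0$ with high probability. I expect the sign accounting in the odd-cycle cancellation---confirming that exactly one overall sign flips while $\operatorname{sgn}(\sigma)$ is untouched---to be the only delicate point; the Pfaffian identity removes it at the price of importing that identity, and the Schwartz--Zippel step is routine once the field is taken large enough, which is the sole content hidden in ``with high probability.''
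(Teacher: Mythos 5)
The paper states this result purely as a citation to Lov\'asz (1979) and supplies no proof of its own, so there is nothing to compare against; your argument is precisely the standard one --- drop fixed-point permutations, cancel those containing an odd cycle via the sign-preserving, monomial-negating cycle-reversal involution (equivalently, invoke $\det(\tilde M)=\operatorname{Pf}(\tilde M)^2$), observe that surviving all-even-cycle permutations exist iff $G$ has a perfect matching and that the squared matching monomials cannot cancel, then apply Schwartz--Zippel --- and it is correct. The only point worth making explicit is the one you already flag: ``with high probability'' presupposes $|\mathcal{F}|=\omega(n)$ (or passing to a suitable extension field), a hypothesis the paper leaves implicit.
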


We can observe that in our derivation of Corollary~\ref{cor-symmetric-determinant} we have used only the fact
that non-zero structure of the matrix is symmetric, so the same time bounds hold for computation of a determinant of a skew-symmetric matrix $M$.
Hence, assuming that $t=O(\polylog{} n)$, to test whether a \dpl graph contains a perfect matching we need
$O(n^{2+\frac{(\omega-2)(2-\pl)}{(\omega-2)\pl+3-\omega}})$ randomized time. This is faster then Micali-Vazirani algorithm~\cite{mv}, that
works in $O(n^{1/2 + 2/\pl})$ time, when $\pl\le \frac{\sqrt{-7 w^2+26 w-15}+3 w-9}{2w-4} < 1.09042$.
However, what is left is to construct the perfect matching when we know that the graph contains one. In order to do it
we need to compute $(M^{-1})_A = Z^{-1} = (A - B D^{-1} C)^{-1}$. By our preconditioning we have
\begin{equation*}
\label{eqn-linear-system-2}
M^{-1} =
\begin{bmatrix}
I & H_B\\
0 & H_D
\end{bmatrix}
\begin{bmatrix}
I & 0 \\
0 & J_D
\end{bmatrix}
\tilde{M}^{-1}.
\end{equation*}
Using this equation and the equality $(\tilde{M}^{-1})_A = \tilde{Z}^{-1}$ we obtain
\[
M^{-1}_A = Z^{-1} = \tilde{Z}^{-1} - H_B J_DV^{-1}F^{-1}VC.
\]
We need $O(nm_{BCD})$ time to compute $VC$, $O(nn_k)$ time for $(H_B(J_DV^{-1}))F^{-1}$ and
finally $O(n \cdot n_k^{\omega-1})$ time to compute $(H_B J_DV^{-1}F^{-1})(VC)$. These running
times do not increase the running times stated in Theorem~\ref{theorem-determinant-symmetric} and Corollary~\ref{cor-symmetric-determinant}.
Let $G_A$ be the subgraph of $G$ represented by $\tilde{M}_A$, i.e., the subgraph induced only
 by vertices of degree at least $k$. Now we can apply the following observation that is the core idea of~\cite{Harvey06}.

\begin{theorem}[Procedure DeleteEdgesWithin from~\cite{Harvey06}]
Given a Schur complement of $M_A$ (i.e., $Z^{-1}$) one can find in $O(n_k^{\omega})$ time a set of edges $P_A\subseteq E(G_A)$
such that $P_A$ can be extended to a perfect matching $P$ of the whole graph $G$.
\end{theorem}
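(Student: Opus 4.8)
The plan is to recover this as an instance of Harvey's algebraic matching algorithm~\cite{Harvey06}, whose engine is exactly the inverse of the randomly evaluated skew-symmetric matrix. First I would record the structural identity that makes $Z^{-1}$ the correct input. With $M$ split into blocks $A,B,C,D$, where $A$ is the $n_k\times n_k$ block on the high-degree vertices, the standard Schur-complement formula for a block inverse gives that the top-left block of $M^{-1}$ equals the inverse of the Schur complement, namely $(M^{-1})_A = (A-BD^{-1}C)^{-1} = Z^{-1}$. Consequently, for any two vertices $i,j\in V(G_A)$ the entry $(M^{-1})_{ij}$ can be read directly from $Z^{-1}$, without ever forming the $D$-block of the inverse.

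Next I would invoke the Rabin--Vazirani characterization of allowed edges: for a skew-symmetric $M$ with $\det(M)\neq 0$, an edge $ij\in E$ lies in some perfect matching of $G$ if and only if $(M^{-1})_{ij}\neq 0$. Combined with the identity above, this shows that the edges of $G_A$ that are usable in a perfect matching are precisely those $ij$ with $(Z^{-1})_{ij}\neq 0$. Thus $Z^{-1}$ already carries all the information needed to start extracting matching edges inside $G_A$.

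I would then run Harvey's recursive \emph{DeleteEdgesWithin} on the vertex set $V(G_A)$. The procedure splits $V(G_A)$ into halves $S_1,S_2$, recursively fixes matching edges inside $S_1$, batch-updates the restricted inverse on $S_2$ with a single Woodbury-type matrix multiplication accounting for all edges just fixed, recursively fixes matching edges inside $S_2$, and finally resolves the crossing edges between $S_1$ and $S_2$. The reason only $Z^{-1}$ is required is Harvey's self-reducibility lemma: the restricted inverse $(M^{-1})_S$ behaves like the inverse of the Tutte matrix of the graph obtained by eliminating the vertices outside $S$, so nonzeroness of its entries continues to characterize allowed edges of the reduced instance, and each rank-$2$ edge-fixing update can be performed using entries of $(M^{-1})_S$ alone. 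Maintaining nonsingularity of this restricted matrix throughout ensures that the vertices of $G_A$ left uncovered by the fixed set $P_A$, together with $V(G_D)$, still admit a perfect matching; hence $P_A$ is the within-$G_A$ part of a genuine perfect matching $P$ of $G$, i.e.\ it extends to $P$.

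For the running time, each recursive call on a set of size $s$ spends $O(s^\omega)$ on the batched matrix-multiplication updates and on resolving crossing edges, so the cost satisfies $T(s)=2\,T(s/2)+O(s^\omega)$, which solves to $T(s)=O(s^\omega)$ since $\omega\geq 2$; taking $s=n_k$ yields the claimed $O(n_k^\omega)$. I expect the main obstacle to be exactly this batching: a direct implementation applying a rank-$2$ Sherman--Morrison correction after each single matching edge costs $\Omega(n_k^2)$ per edge and $\Omega(n_k^3)$ in total, so the real content is proving that all single-edge updates within one recursive half may be deferred and then applied correctly as one matrix product. This consistency argument is the technical heart of Harvey's procedure, and it is what allows the restricted inverse alone --- here $Z^{-1}$ --- to drive the entire computation within the $O(n_k^\omega)$ budget.
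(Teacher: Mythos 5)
The paper offers no proof of this statement at all --- it is imported as a black box from Harvey's work~\cite{Harvey06} --- so there is nothing internal to compare against; your reconstruction is a faithful and essentially correct account of Harvey's actual argument. The identity $(M^{-1})_A=(A-BD^{-1}C)^{-1}=Z^{-1}$, the Rabin--Vazirani criterion that an edge $ij$ is allowed iff $(M^{-1})_{ij}\neq 0$, the divide-and-conquer with deferred, batched rank-$2$ updates, and the recurrence $T(s)=2T(s/2)+O(s^{\omega})=O(s^{\omega})$ are precisely the ingredients of \emph{DeleteEdgesWithin}. The only caveats worth recording are that the nonzero-entry criterion holds only with high probability after the random substitution into the skew-symmetric matrix (consistent with the randomized guarantee of Theorem~\ref{theorem-matchings}), and that the set $P_A$ is obtained as the surviving, now-essential edges inside $G_A$ after the deletion process, which is why it extends to a perfect matching of all of $G$.
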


Hence, we first invoke the above theorem to find $P_A$ and then we are left to find matching in a graph $G-E(G_A)$, where
all edges between high degree vertices have been removed. This way the degree of a vertex in $G-E(G_A)$ is bounded by $k$ and
so the total number of edges is $O(m_{BCD})$. Using the algorithm by Micali and Vazirani we need $O(\sqrt{n}m_{BCD})=O(nm_{BCD})$ time to extend $P_A$ to a
perfect matching on $G$.  This way we obtain the following theorem.
}

\begin{theorem}
\label{theorem-matchings}
Let $G$ be a \dpl{} graph with parameters $\pl < 2$ and $t$.
Then, we can find a perfect matching in $G$ in $O((t+1)^{(\pl-1)+\frac{(\omega-2)(\pl-1)(2-\pl)}{(\omega-2)\pl+3-\omega}}n^{2+\frac{(\omega-2)(2-\pl)}{(\omega-2)\pl+3-\omega}})$ time with high probability.
\end{theorem}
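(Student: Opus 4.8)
The plan is to reduce the construction of a perfect matching to the symmetric determinant computation of Theorem~\ref{theorem-determinant-symmetric} and its matrix-inverse extension, via the classical algebraic characterization of matchings. First I would associate to $G$ the Tutte (skew-symmetric) matrix $\tilde M$, placing a fresh variable $z_{ij}$ in entry $(i,j)$ and $-z_{ij}$ in entry $(j,i)$ for each edge $ij$ with $i<j$. By the theorem of Lovász, after substituting uniformly random field elements for the variables, $G$ has a perfect matching if and only if $\det(\tilde M)\neq 0$, with high probability. The structural point that makes this efficient is that the nonzero pattern of $\tilde M$ is exactly the symmetric nonzero pattern of the adjacency matrix of $G$; the savings in Corollary~\ref{cor-symmetric-determinant} stem solely from this symmetric sparsity, so the same bound applies to $\det(\tilde M)$ and settles the decision version within the claimed time.

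The harder part is to output an explicit perfect matching within the \emph{same} time budget, not merely to decide existence. Here I would reuse the block split into $A,B,C,D$ from the determinant algorithm, where $A$ collects the high-degree vertices (degree above $k$) and $D$ is the sparse low-degree block. The extension of Theorem~\ref{theorem-determinant-symmetric} and Corollary~\ref{cor-symmetric-determinant} that computes the matrix inverse in the same asymptotic time recovers the top-left block $(\tilde M^{-1})_A = Z^{-1}$, where $Z = A - B D^{-1} C$ is the Schur complement. This inverse block carries exactly the information needed to identify matching edges confined to the dense high-degree subgraph $G_A$ induced on vertices of degree at least $k$.

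Given $Z^{-1}$, I would invoke Harvey's edge-deletion procedure to extract, in $O(n_k^{\omega})$ time, a set $P_A \subseteq E(G_A)$ of edges among high-degree vertices that provably extends to a perfect matching of all of $G$. After committing to $P_A$ and deleting every edge of $G_A$, each surviving vertex has degree at most $k$, so the residual graph is sparse with only $O(m_{BCD})$ edges, where $m_{BCD} = O(n(t+1)^{\pl-1}k^{2-\pl})$ by Lemma~\ref{lem:small_degrees}. A combinatorial matcher such as Micali--Vazirani then completes $P_A$ to a full perfect matching in $O(\sqrt{n}\, m_{BCD})$ time. Since both $O(n_k^{\omega})$ and this sparse-matching term are dominated by the determinant cost at the balancing value of $k$ chosen in Theorem~\ref{theorem-determinant-symmetric}, the entire construction stays within the stated bound, and every randomized ingredient (the Tutte substitution, the determinant preconditioning, and the edge-deletion routine) succeeds with high probability.

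The main obstacle I anticipate is precisely this conversion of an algebraic existence certificate into an explicit matching without overshooting the determinant running time. A purely combinatorial augmenting-path search would be too slow on the dense high-degree core, while naively re-inverting dense submatrices to peel off edges one at a time would also blow up the budget. The resolution is the two-phase split: dispatch the dense high-degree portion algebraically through the already-computed Schur-complement inverse and the edge-deletion routine, and only then fall back on a fast sparse combinatorial matcher once all degrees are bounded by $k$. The remaining work is bookkeeping --- verifying that each intermediate product arising in the Schur-complement evaluation respects the $O(\omega(n,n,n_k))$ and $O(n\,m_{BCD})$ budgets --- which is what lets the running time match Theorem~\ref{theorem-determinant-symmetric} exactly.
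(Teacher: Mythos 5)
Your proposal is correct and follows essentially the same route as the paper's proof: the Tutte/Lov\'asz skew-symmetric matrix handled by the symmetric determinant algorithm (which only uses the symmetric nonzero pattern), the Schur-complement inverse block $Z^{-1}$ on the high-degree part, Harvey's edge-deletion procedure to extract $P_A$ in $O(n_k^{\omega})$ time, and a Micali--Vazirani pass on the residual graph of maximum degree $k$ with $O(\sqrt{n}\,m_{BCD})$ cost. No substantive differences to report.
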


\fullversion{\subsection{Complexity of PageRank}}
\exabstract{\paragraph{Complexity of PageRank}}
Let us now discuss the arithmetic complexity of exact PageRank computation.\footnote{We note that the approximate iterative methods used in practice
have worse theoretical running time bounds.} Computing PageRank is
a simple version of the eigenproblem, where we are asked to find eigenvector for the eigenvalue which
is equal to $1$. Eigenproblems, in comparison with the determinant problem, is usually more
challenging, because we cannot easily  use preconditioning, as it can change both eigenvalues and eigenvectors.
The complexity of this simple problem is either $O(n^{\omega})$ using~\cite{959911} or $O(k nm)$ using~\cite{eberly00} (where $k$ is number
of invariant factors). However, we can show that the problem is easier on a directed \dpl graph.

\fullversion{
We assume that we are given a graph $G$ where out degrees satisfy power law with exponent $\pl$. The PageRank vector $\pi$ is
the eigenvector of the following $n \times n$ matrix
\begin{equation}
M = cP + (1-c) 1/n E
\label{eq:prdef}
\end{equation}
where $c$ is a damping factor which can be set between $0$ and $1$ (typically $0.85$), matrix $P$ is an adjacency matrix of $G$ defined with rules:
$P_{ij} = 0$ if there are no edges from $i$ to $j$ and $P_{ij} = 1/\outdeg(j)$ otherwise and $E$ is a matrix whose all entries equal one.
In other words, having defined $M$, the PageRank vector is an eigenvector $\pi$ of $M$ corresponding to eigenvalue $\mu = 1$, i.e.,
the PageRank vector satisfies the following equations
\begin{align}
\pi^T &= \pi^T M \\
\pi^T e &= 1
\label{eq:pagerank}
\end{align}
where $e$ is a vector of size $n$ whose all entries equal one.

As usual we start with sorting the rows of $P$ in decreasing order according to the number of non-zero entries, i.e., out-degrees of
corresponding vertices. Let us consider the following matrix decomposition
\begin{equation*}
P = \begin{bmatrix}
A & B \\
C & D
\end{bmatrix},
\end{equation*}
where $[A,B]$ contains rows with more than $k$ non-zero entries for $1\le k \le n$. As previously the size of $A$ is denoted by $n_k$, whereas
the number of non-zero entries in $D$ is denoted by $m_D$.
We define as well the corresponding decomposition of $\pi$
\begin{equation*}
 \pi=(\pi_A, \pi_D)
\end{equation*}

Using this block form we can rewrite the equation~\ref{eq:prdef} as follows
\begin{equation*}
M = \begin{bmatrix}
cA + (1-c)/n E_A & cB + (1-c)/n E_B \\
cC + (1-c)/n E_C & cD + (1-c)/n E_D
\end{bmatrix}
\end{equation*}
where $E_A$, $E_B$, $E_C$ and $E_D$ are matrices of the appropriate size whose all entries equal one.
In turn, the equation~\eqref{eq:pagerank} can be rewritten as
\begin{equation*}
\pi^T (I-M)=
\pi^T\begin{bmatrix}
I-M_A & M_B \\
M_C & I-M_D
\end{bmatrix} = 0.
\end{equation*}
We observe that $I-M$ is an irreducible $M$-matrix.\footnote{A matrix $X$ is said to be {\em $M$-matrix} when it can be written as $X=aI-Y$
where all entries of $Y$ are nonnegative and $a$ is greater or equal then the spectral radius of $B$. The spectral radius of
$M$ is $1$ as it is a stochastic matrix.} This implies by Theorem 4.16 from~\cite{berman} that every leading or trailing principal and proper
submatrix of $I-M$ is nonsingular. In particular $I-M_D$ is nonsingular. Let $H_D$ be a random Hankel matrix and let $J_D$ be a random
diagonal matrix. Then by the preconditioning from Section~\ref{sec:general-det} $(I-M_D)H_DJ_D$ has one invariant factor.
We observe that multiplication of the matrix
$I-M_D = I-cD + (1-c)/n E_D$  by a vector takes $O(m_D)$ arithmetic operations, so by Theorem~\ref{theorem-eberly} we can compute Frobenius normal $F$ and
the transition matrix $V$ of $(I-M_D)H_DJ_D$ in $\tilde{O}(nm_D)$ arithmetic operations.

The stochastic complement of $M_A$ in $I-M$ is the following matrix
\begin{equation*}
\label{eqn-lazy-form}
 S_{A} = M_{A} + M_{B} (I- M_D)^{-1} M_{C} = M_{A} + M_{B} J_DH_D V^{-1}F^{-1} V M_{C}. 
\end{equation*}
Computing $S_A$ requires $O(nm_D)$ arithmetic operations to compute $VM_{C}$, $\tilde{O}(nn_k)$ arithmetic operations to compute $F^{-1}(V M_C)$ and $M_BJ_DH_D$,
$O(\omega(n,n,n_k))$ arithmetic operations to compute $(M_B J_DH_D)V^{-1}$, and finally $O(n n_k^{\omega-})$ arithmetic operations to compute
$(M_{B} J_DH_D V^{-1})(F^{-1} V M_{C})$.

Now using the equations from~\cite{Meyer89stochasticcomplementation} we obtain
\begin{equation*}
\pi_A^T S_{A} = \pi_A^T,
\end{equation*}
\begin{equation*}
\pi_D^T = \pi_A^T M_B (I- M_D)^{-1} = \pi_A^T M_B J_DH_D V^{-1}F^{-1} V,
\end{equation*}
which means that $\pi_A$ is a stationary distribution for the smaller matrix $S_{A}$ and can be computed in
$O(n_k^{\omega})$ arithmetic operations~\cite{KellerGehrig1985309}. Then in order to compute $\pi_D$ we need $O(n^2)$ arithmetic operations.
We note that $O(\omega(n,n,n_k))$ dominates $O(nn_k^{\omega-1})$ , $O(nn_k)$ and $O(n_k^{\omega})$,
whereas $\tilde{O}(nm_D)$ dominates $\tilde{O}(n^2)$. Therefore we obtain the following theorem.
}

\begin{theorem}
\label{thm-page-rank}
Let $G$ be a directed \dpl{} graph with parameters $\pl$ and $t$, and let $1\le k< n^{1/\pl}(t+1)^{1-1/\pl}$. Then, we can compute PageRank of $G$ with $O(n^2(t+1)^{\pl-1} k^{2-\pl}+\omega(n,n,n(t+1)^{\pl-1}k^{1-\pl}))$ arithmetic operations with high probability.
\end{theorem}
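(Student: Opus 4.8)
The plan is to mimic the block-decomposition strategy used for transitive closure (Theorem~\ref{thm-transitive-closure}) and for the general determinant (Theorem~\ref{thm-general-determinant}), but to adapt it to the fact that PageRank is an eigenvector problem rather than a linear system. Recall that the PageRank vector $\pi$ is the left stationary vector of the Google matrix $M = cP + (1-c)\tfrac{1}{n}E$, i.e.\ $\pi^T M = \pi^T$ and $\pi^T e = 1$, which is the same as finding the left nullvector of $I - M$. First I would sort the rows of $P$ by decreasing out-degree and split $M$ into the usual block form $\left[\begin{smallmatrix} M_A & M_B \\ M_C & M_D \end{smallmatrix}\right]$, where the $A$-block collects the $n_k$ rows of out-degree exceeding $k$ and the $D$-block collects the remaining rows, each with at most $k$ nonzeros. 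By Lemma~\ref{lem:high_degree_num} the dimension satisfies $n_k = O(n(t+1)^{\pl-1}k^{1-\pl})$, and by Lemma~\ref{lem:small_degrees} the number of nonzeros $m_D$ in the $D$-rows is $O(n(t+1)^{\pl-1}k^{2-\pl})$; these are precisely the two quantities that will appear in the final bound.

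The key structural tool I would invoke is the \emph{stochastic complement}. Since $M$ is an irreducible stochastic matrix with spectral radius $1$, the matrix $I - M$ is an irreducible singular $M$-matrix, and classical $M$-matrix theory guarantees that every proper principal submatrix of $I-M$ is nonsingular; in particular $I - M_D$ is invertible. Writing $\pi = (\pi_A, \pi_D)$ and eliminating $\pi_D$ from $\pi^T(I-M)=0$, one obtains that $\pi_A$ is the stationary vector of the much smaller $n_k\times n_k$ stochastic complement $S_A = M_A + M_B (I-M_D)^{-1} M_C$, after which $\pi_D^T = \pi_A^T M_B (I-M_D)^{-1}$ recovers the rest. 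Thus the whole computation reduces to (i) applying $(I-M_D)^{-1}$ efficiently to the sparse blocks, and (ii) finding the stationary vector of a dense $n_k\times n_k$ matrix, which costs $O(n_k^{\omega})$.

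For step (i) I would not solve with $I - M_D$ by a generic dense method but instead exploit its sparsity: a matrix-vector product with $I - M_D$ costs only $O(m_D)$, so I would compute a Frobenius normal form of $I-M_D$ via Eberly's algorithm (Theorem~\ref{theorem-eberly}) in $\tilde{O}(n m_D)$ time, after which its inverse can be applied cheaply because the inverse of each companion block is sparse. Eberly's bound is only good when the matrix has few invariant factors, so I would precondition $I - M_D$ by a random Hankel matrix $H_D$ and random diagonal $J_D$ (Theorems~\ref{theorem-preconditioning} and~\ref{theorem-preconditioning-2}) to force a single invariant factor with high probability, computing $V,F$ with $(I-M_D)H_DJ_D = V^{-1}FV$. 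Forming $S_A$ then splits into sparse products ($VM_C$, $M_B J_D H_D$, and the application of $F^{-1}$), costing $O(n m_D)$ and $\tilde{O}(n n_k)$, and one rectangular dense product $(M_B J_D H_D)V^{-1}$ costing $O(\omega(n,n,n_k))$ that dominates the remaining $O(n n_k^{\omega-1})$ and $O(n_k^{\omega})$ work. Summing the dominant sparse cost $\tilde{O}(n m_D)$ and the dominant dense cost $O(\omega(n,n,n_k))$, and substituting the bounds on $m_D$ and $n_k$, yields the claimed running time parameterized by $k$.

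The main obstacle I anticipate is exactly the point where PageRank differs from the determinant: preconditioning changes both eigenvalues and eigenvectors, so I cannot precondition $M$ globally the way it was done for the determinant. The resolution is that preconditioning is needed only to \emph{solve systems with the sparse block} $I - M_D$, and this can be carried out on the $D$-block alone without perturbing the stochastic structure that the complement $S_A$ relies on; the nonsingularity of $I-M_D$ (hence the legitimacy of the stochastic complement) must be argued separately, and this is where the irreducible $M$-matrix property is essential. A secondary subtlety is the bookkeeping needed to undo the $H_DJ_D$ preconditioning when recovering $\pi_D$ through $(I-M_D)^{-1}$, but this only adds $O(n^2)$ time and does not affect the leading terms. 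The high-probability guarantee then comes from the randomized Hankel/diagonal preconditioning together with the randomized Eberly step.
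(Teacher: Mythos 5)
Your proposal follows essentially the same route as the paper's proof: the same block split by out-degree, the same use of the irreducible $M$-matrix property to justify invertibility of $I-M_D$ and the stochastic complement $S_A = M_A + M_B(I-M_D)^{-1}M_C$, the same Hankel/diagonal preconditioning of the $D$-block only combined with Eberly's Frobenius-form algorithm, and the same accounting of the dominant costs $\tilde{O}(nm_D)$ and $O(\omega(n,n,n_k))$. The argument is correct as proposed.
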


We note that the above idea can be combined together with iterative methods. In such a case instead of using the stochastic complement of
$M_A$ we shall use the stochastic complement of $M_D$, i.e., $S_D= M_{D} + M_{C} (I- M_A)^{-1} M_{B}$. However, we explicitly compute only
the inverse $(I-M_A)^{-1}$ using Strassen's fast matrix inverse, but we not execute
other multiplications and keep $S_D$ in the lazy form as given by this equation. We can apply iterative
methods to compute the stationary distribution of $S_D$ using this lazy form. We have implemented this approach and on a single computer it can reduce
the time needed for PageRank computation by a factor of two on graphs that have approximately $100 000$ nodes, e.g., WikiTalk network.

\bibliographystyle{abbrv}
\bibliography{references}

\begin{table}
\centering
\scriptsize
\begin{tabular}{@{}l|c|c|c|c|c|c|c@{}}
Graph & $n$ & $m$ & $c_1$ & $c_2$ & $\alpha$ & $t$ & $\Delta / \sqrt{m}$\\
\hline
\hline
Amazon (directed, in-degree + out-degree) 	& $241761$ & $1131217$ &	$5$	& $0.615102$ &	$3.198$	&	$22.2994$ & $0.3996$	\\
\hline
AstroPh (directed, in-degree + out-degree) 	& $17903$ & $393944$ &	$1.2888$	& $0.208271$ &	$2.0189$	&	$21.0207$ & $1.606$	\\
\hline
Cities (directed, in-degree + out-degree)	& $3144$ & $34753$ & 	$0.9652$	& $-$		& $1.9126$	& $1.8661$ & $10.6425$\\
\hline
CondMatt (undirected) 	& $21363$ & $182572$ &	$4.7952$	& $2.15535$ &	$5.2849$	&	$26.1942$ & $0.65296$	\\
\hline
Dblp (undirected) 	& $718115$ & $5573812$ &	$2.6633$	& $5.80862$ &	$3.4134$	&	$9.589$ & $0.3808$	\\
\hline
Enron (undirected) 	& $33696$ & $361622$ &	$1.2549$	& $0.610801$ &	$2.2674$	&	$3.4682$ & $2.2998$	\\
\hline
Epinions (directed, in-degree + out-degree) 	& $32223$ &  $443506$ &	$1.2166$	& $-$ &	$1.8863$	&	$3.8008$ & $4.1894$	\\
\hline
EuAll (directed, in-degree + out-degree) 	& $34203$ & $151132$ &	$3.1966$	& $1.37106$ &	$2.4201$	&	$3.756$ & $3.8635$	\\
\hline
Facebook (undirected) 	& $59691$ & $1456818$ &	$0.8077$	& $-$ &	$1.6668$	&	$6.2728$ & $0.8409$	\\
\hline
HepPh (directed, in-degree + out-degree) 	& $12711$ & $139965$ &	$5$	& $3.00723$ &	$5.2231$	&	$70.1391$ & $1.0104$	\\
\hline
LiveJournal (directed, in-degree + out-degree)  & $3828682$ & $65349587$ &	$2.1985$	& $4.64595$ &	$2.5893$	&	$18.8438$ & $2.8287$	\\
\hline
NotreDame (directed, in-degree + out-degree) 	& $53968$ & $296228$ &	$2.2113$	& $1.55224$ &	$2.6274$	&	$9.5051$ & $14.0243$	\\
\hline
Slashdot (directed, in-degree + out-degree) 	& $71307$ & $841201$ &	$1.4678$	& $0.166008$ &	$2.0236$	&	$3.8451$ & $5.5191$	\\
\hline
WikiTalk (directed, in-degree + out-degree) 	& $111881$ & $1477893$ &	$1.5124$	& $0.177209$ &	$2.031$	&	$3.7847$ & $6.6613$	\\
\hline
WIW (undirected)	& $29406$ & $393797$ &	$0.5474$	&	$-$		& $1.2562$	& $0$ & $1.0135$	\\
\hline
YouTube (undirected) 	& $495957$ & $3873496$ &	$1.0395$	& $0.66258$ &	$2.2474$	&	$1.8672$ & $12.9103$	\\
\hline
\hline						
AstroPh (directed, out-degree) 	& $17903$ & $393944$ &	$3.1737$	& $ 4.6716$ &	$3.5199$	&	$32.734$ &  $-$	\\
\hline
Epinions (directed, out-degree) 	& $32223$ &  $443506$ &	$2.0569$	& $1.21022$ &	$2.4379$	&	$6.3751$ &  $-$	\\
\hline
EuAll (directed, out-degree) 	& $34203$ & $151132$ &	$2.4122$	& $0.401895$ &	$2.1407$	&	$0$ &  $-$	\\
\hline
HepPh (directed, out-degree) 	& $12711$ & $139965$ &	$4.3101$	& $1.10021$ &	$4.7202$	&	$25.2953$ &  $-$	\\
\hline
LiveJournal (directed, out-degree) 	& $3828682$ & $65349587$ &	$2.2261$	& $8.05048$ &	$2.7745$	&	$12.0126$ &  $-$	\\
\hline
NotreDame (directed, out-degree) 	& $53968$ & $296228$ &	$4.9269$	& $1.65396$ &	$2.6162$	&	$0.5484$ &  $-$	\\
\hline
Slashdot (directed, out-degree) 	& $71307$ & $841201$ &	$1.5542$	& $0.376638$ &	$2.165$	&	$3.3024$ &  $-$	\\
\hline
WikiTalk (directed, out-degree) 	& $111881$ & $1477893$ &	$1.1869$	& $-$ &	$1.9364$	&	$0.9833$ &  $-$	\\
\hline
\hline									
Amazon (directed, in-degree + out-degree)	& $241761$ & $1131217$ &	$5$	& $-$ &	$1.8072$	&	$0$ &  $0.3996$	\\
\hline
CondMatt (undirected)	& $21363$ & $182572$ &	$5$	& $0.420346$ &	$2.1699$	&	$0$ & $0.65296$	\\
\hline
\end{tabular}
\caption{Adjustment of \dpl{} universal constants.}
\label{table:plb_adjustment}
\end{table}
\begin{figure*}
\centering
\footnotesize
\subfigure[\bf{CondMatt (undirected)}]{\includegraphics[scale=0.32]{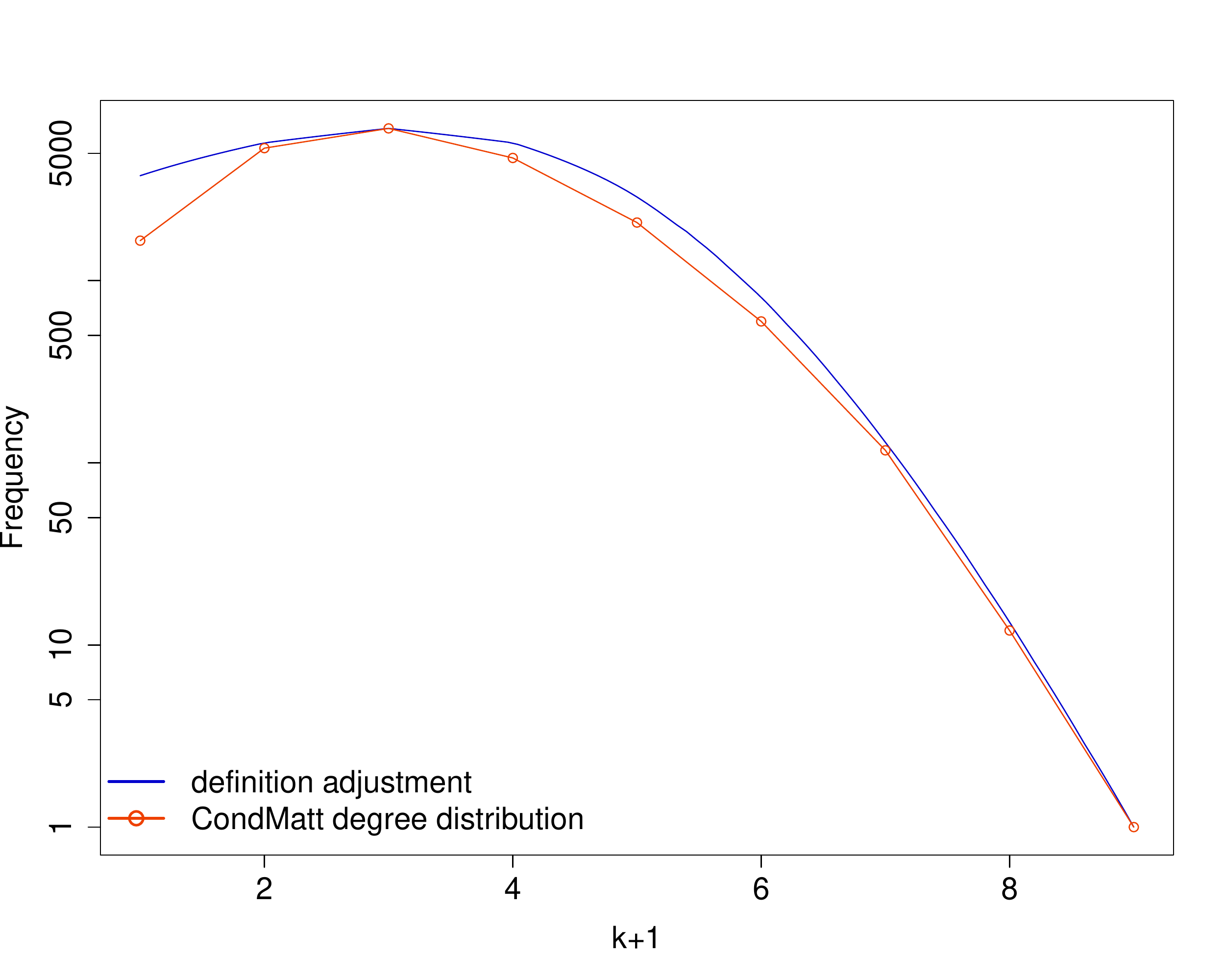}}
\subfigure[\bf{Epinions (directed, in-degree + out-degree)}]{\includegraphics[scale=0.32]{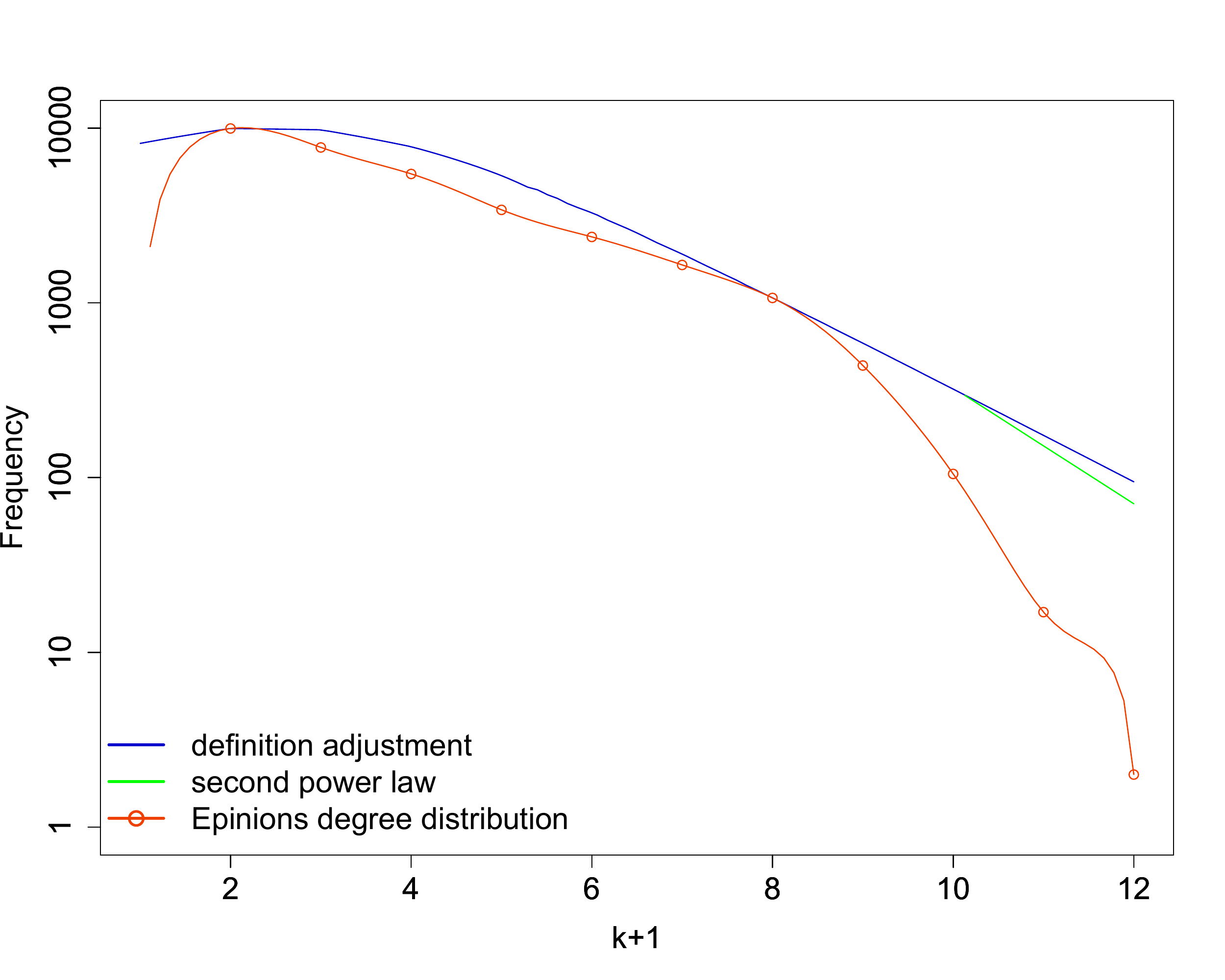}}
\subfigure[\bf{Facebook (undirected)}]{\includegraphics[scale=0.32]{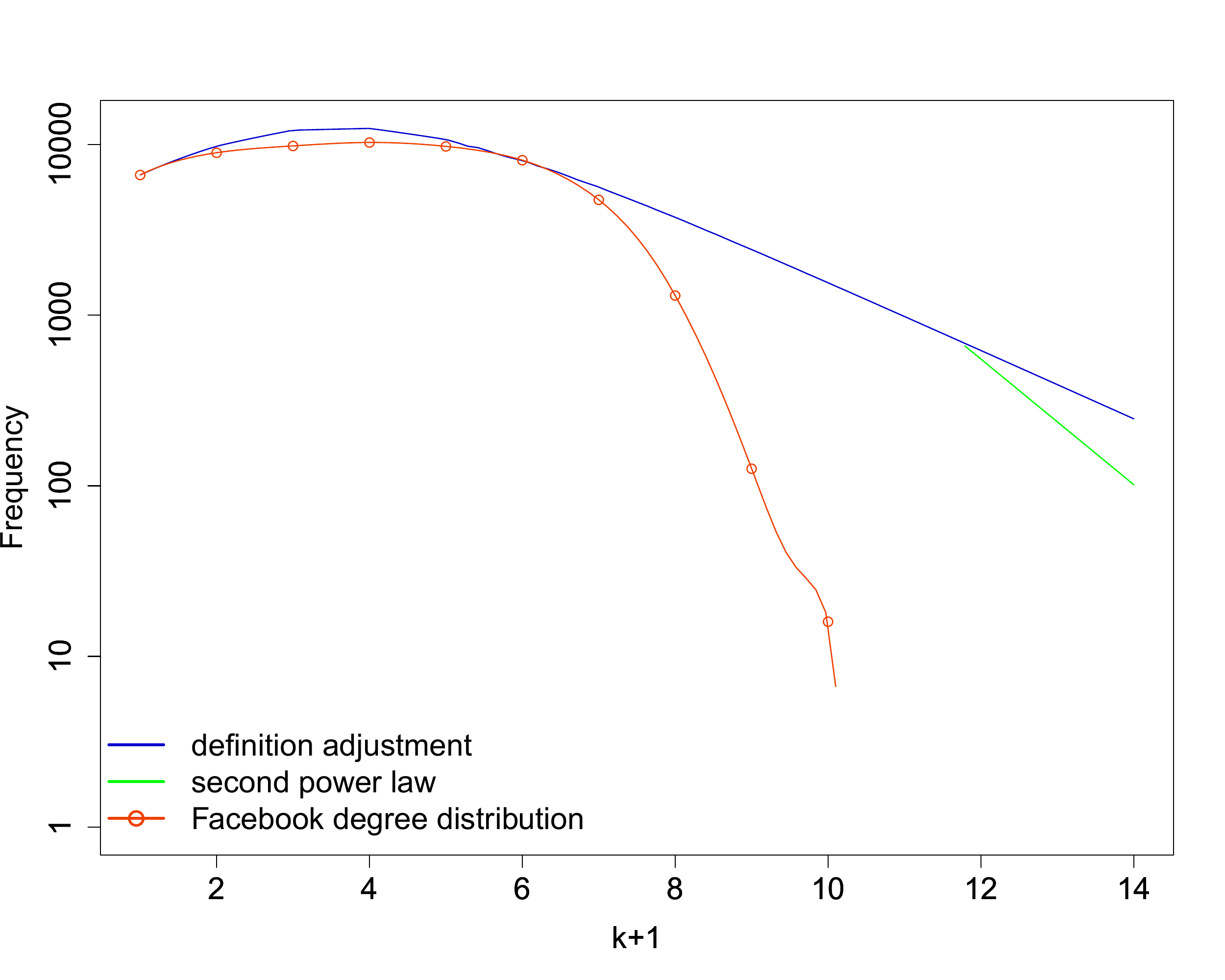}}
\subfigure[\bf{WikiTalk (directed, out-degree)}]{\includegraphics[scale=0.32]{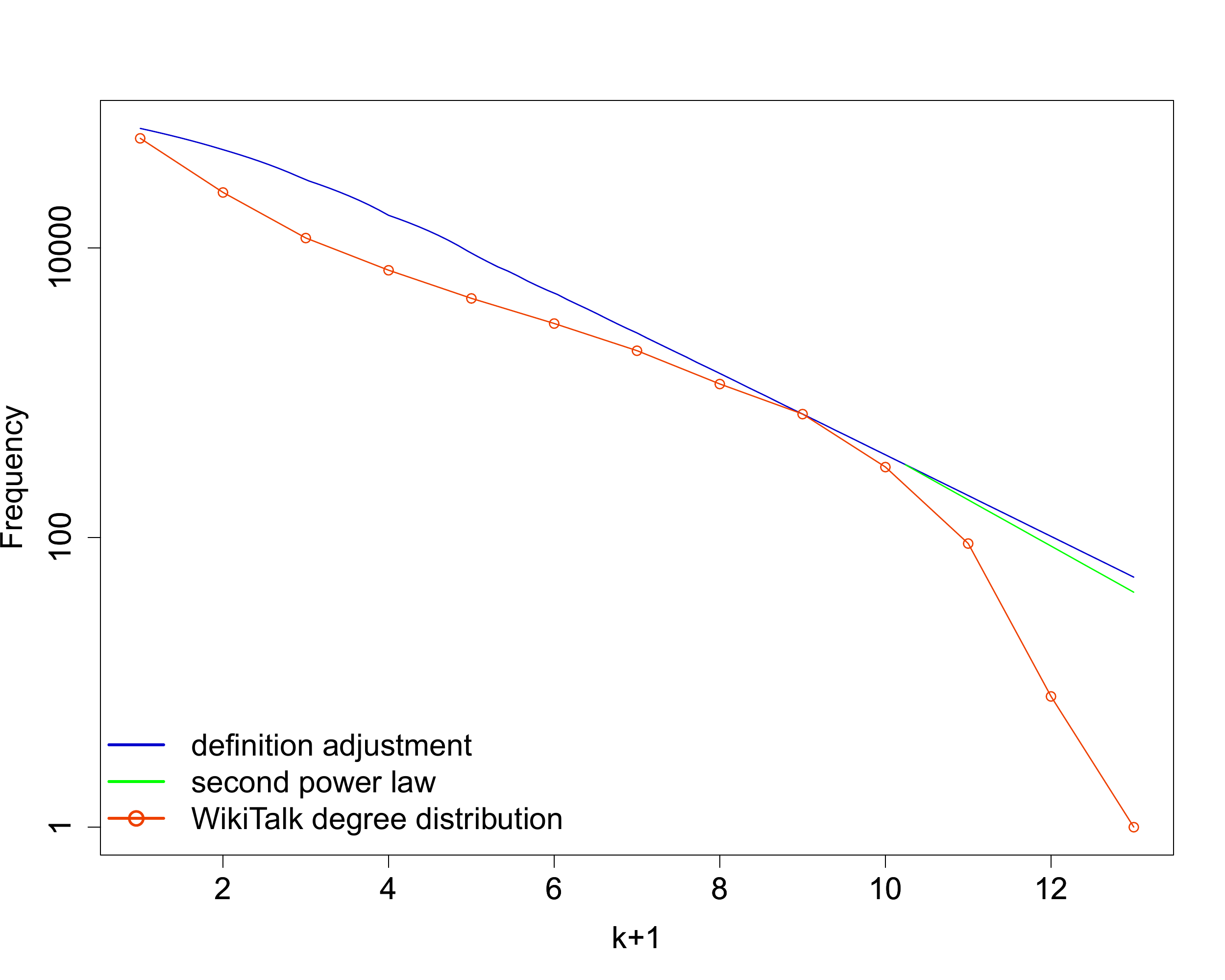}}
\subfigure[\bf{NotreDame (directed, out-degree)}]{\includegraphics[scale=0.32]{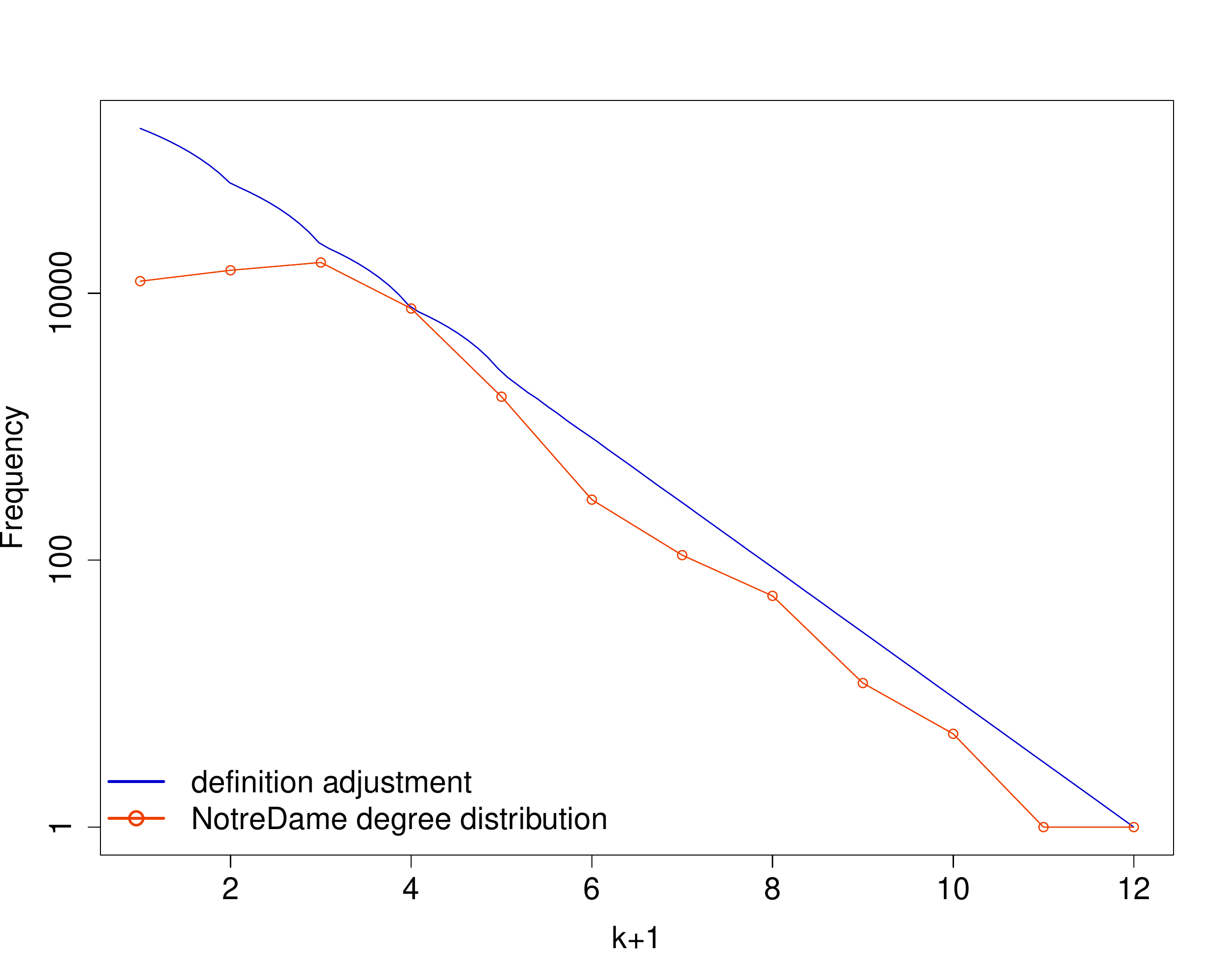}}
\subfigure[\bf{LiveJournal (directed, out-degree)}]{\includegraphics[scale=0.32]{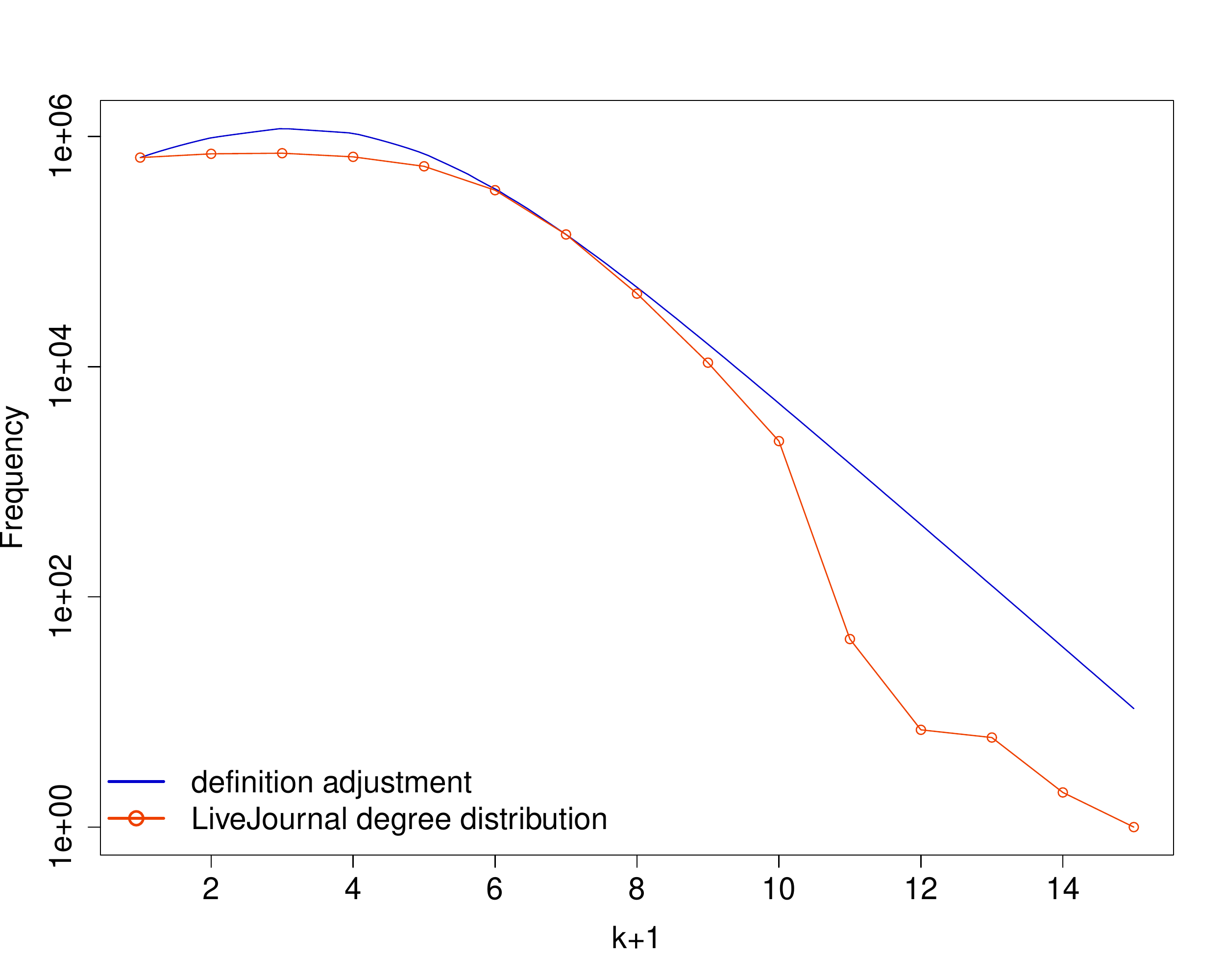}}
\caption{Real-World networks are \dpl{}: definition adjustment}
\label{fig:dpl_graphs}
\end{figure*}

\begin{figure*}
\centering
\footnotesize
\subfigure[\bf{Amazon (directed, in-degree + out-degree)}]{\includegraphics[scale=0.32]{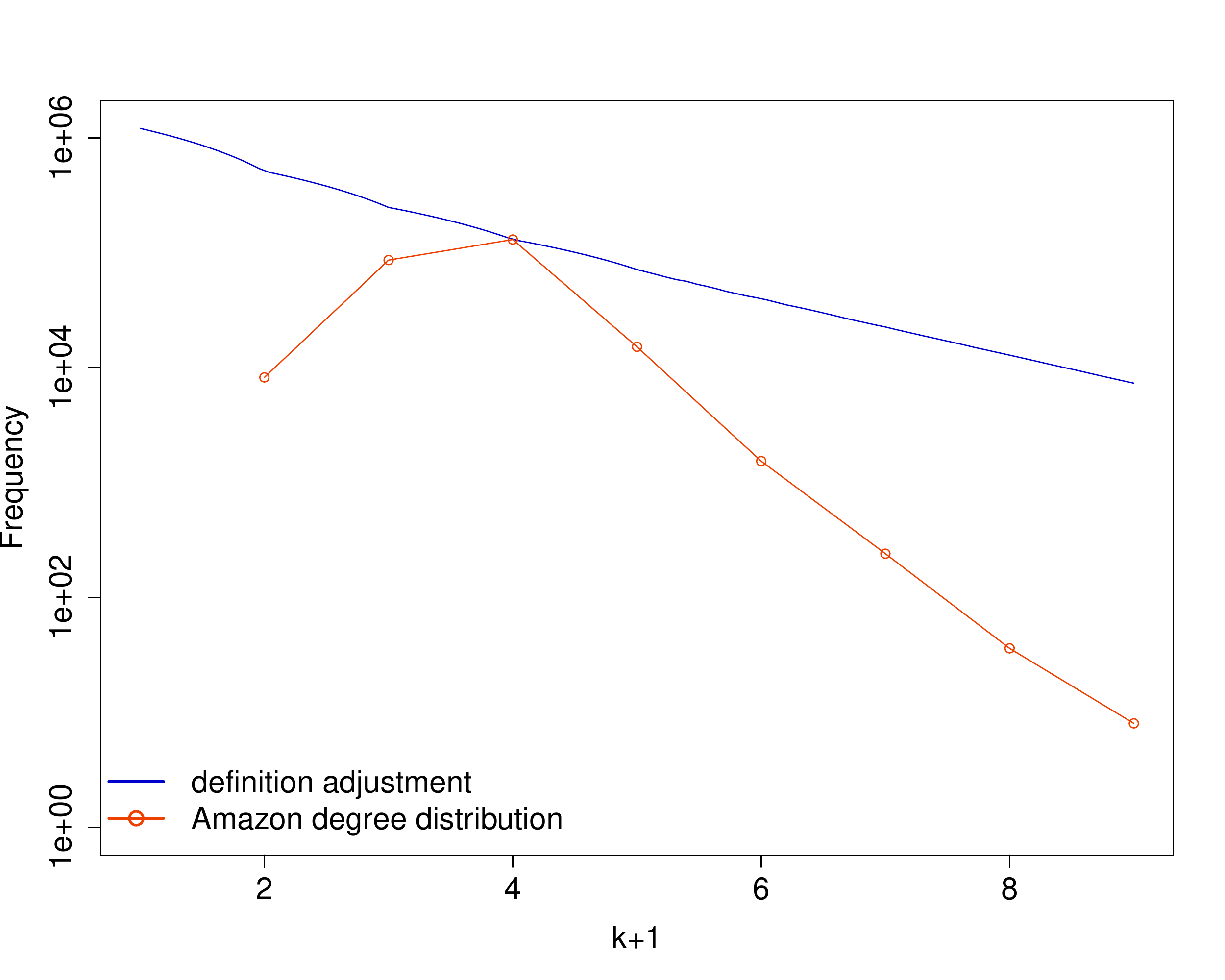}}
\subfigure[\bf{CondMatt (undirected)}]{\includegraphics[scale=0.32]{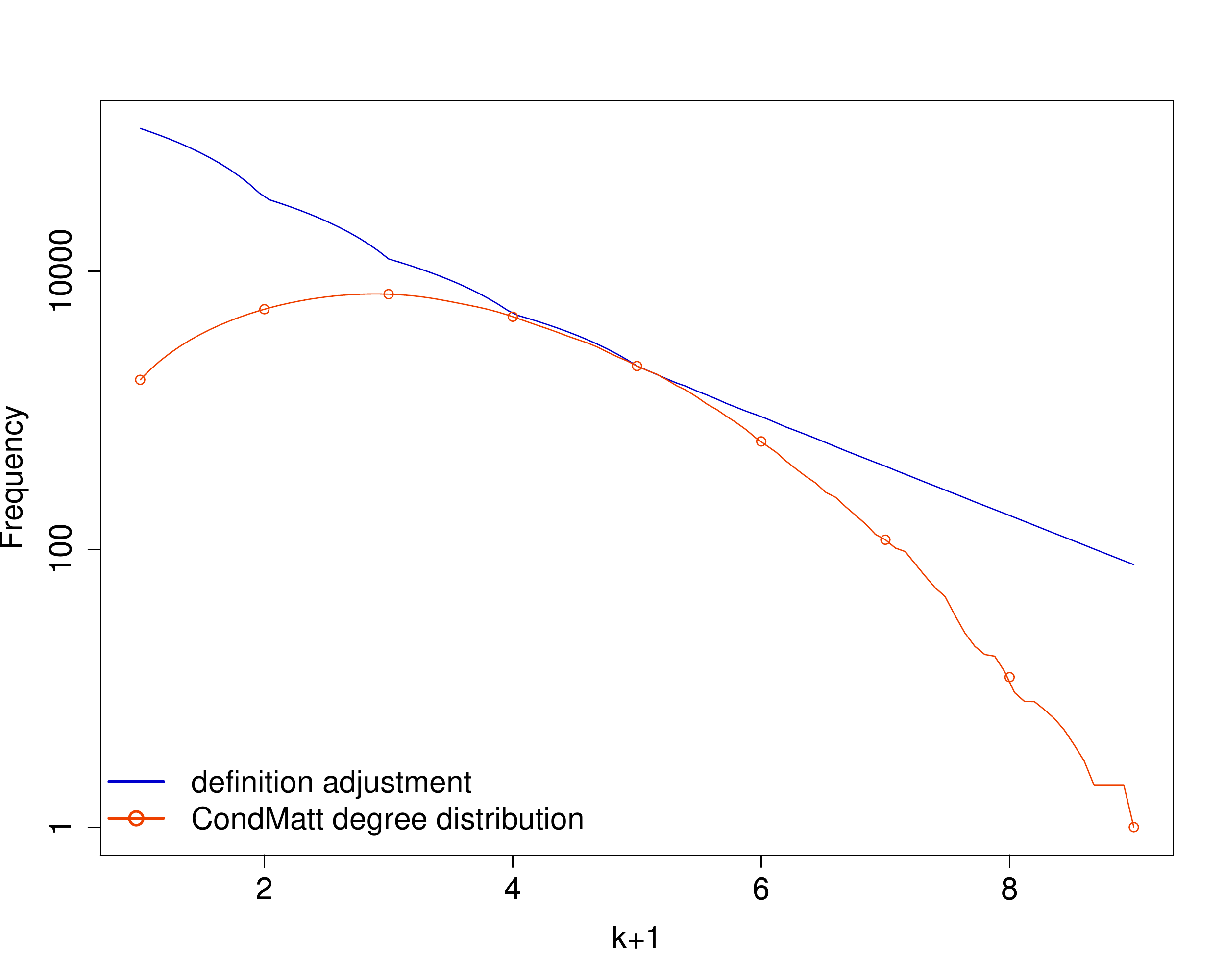}}
\caption{Real-World networks are \dpl{}: definition adjustment for t=0}
\label{fig:dpl_graphs_t0}
\end{figure*}

\begin{figure*}
\centering
\footnotesize
\includegraphics[scale=0.32]{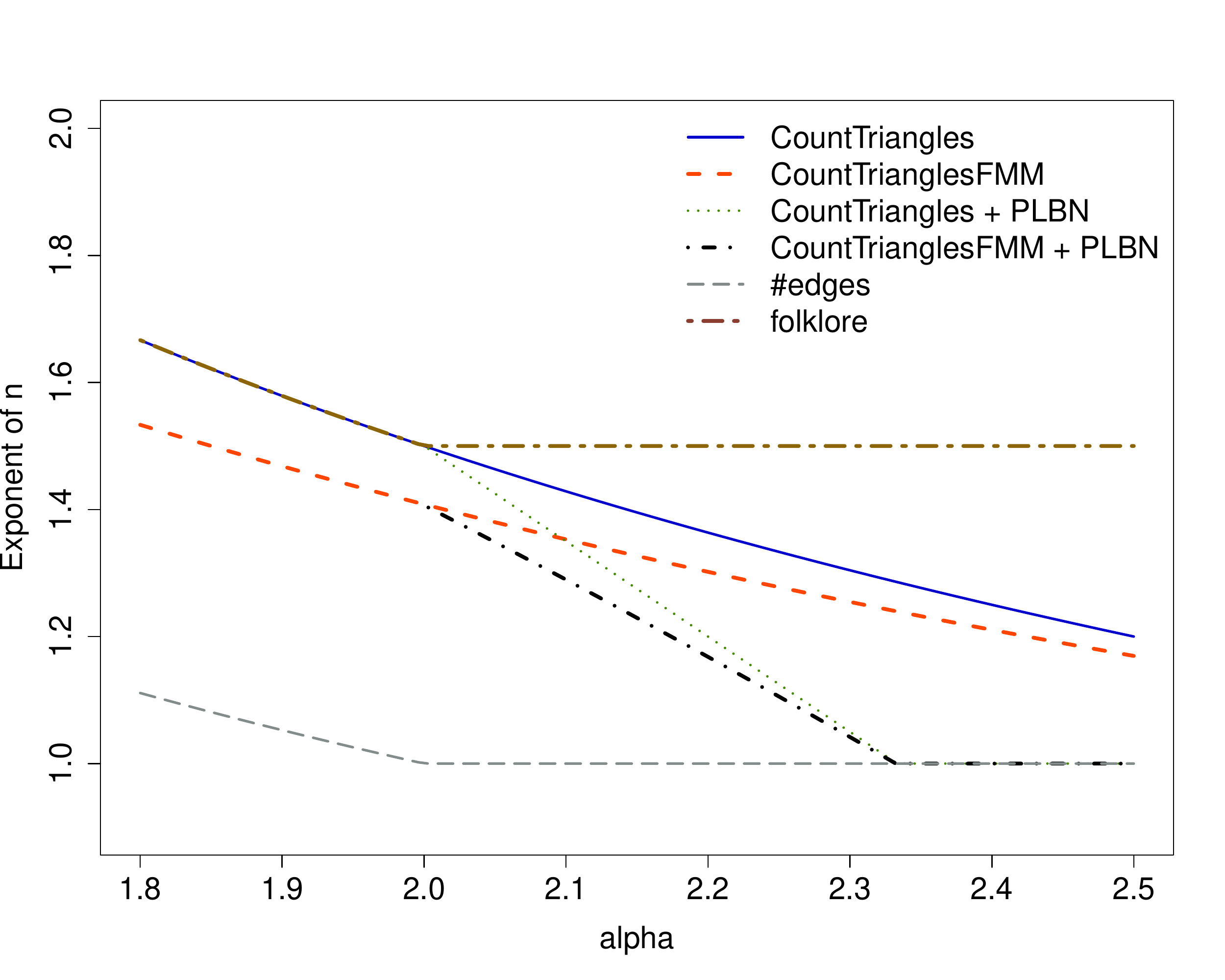}
\caption{The exponent of the running time of our algorithms for counting triangles.
Here PLBN stands for PLB neighborhoods.
\#edges is the number of edges in a graph, and folklore is \#edges multiplied by $3/2$, as the well-known algorithm for counting triangles runs in $O(m^{3/2})$ time.}
\label{fig:times}
\end{figure*}

\begin{figure*}
\centering
\footnotesize
\includegraphics[scale=0.32]{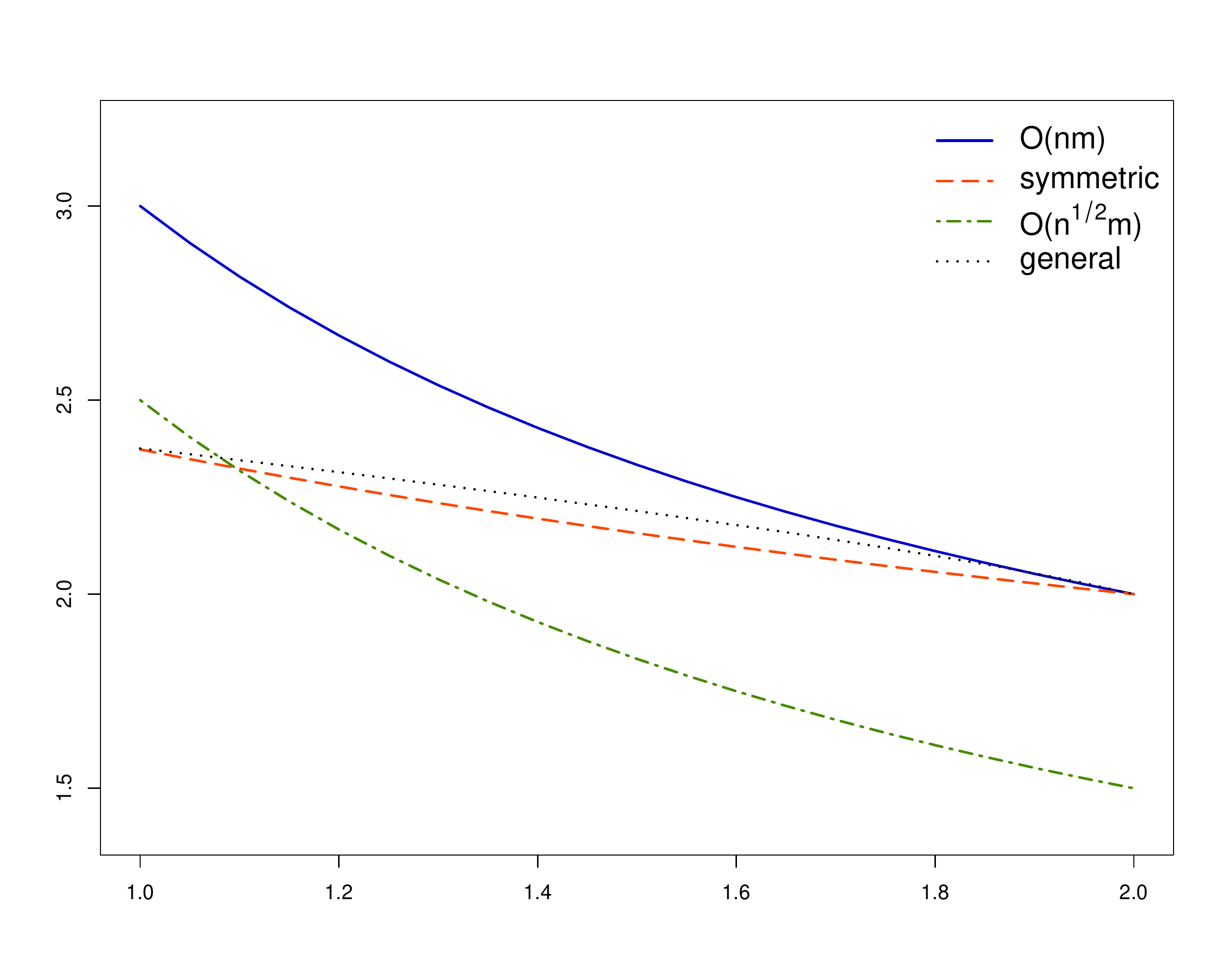}
\caption{The exponent of the running times of our algebraic algorithms for power law graphs and matrices, whose nonzero entries correspond to the edges of a power law graph.
Symmetric shows the complexity of determinant algorithm for symmetric matrices as well as perfect matching algorithm.
General depicts the complexity of algorithms for determinant, PageRank, matrix inverse, linear system solving and 
transitive closure in matrices that do not need to be symmetric.
The complexities are derived using the bound on $\omega(n,n,n_k)$ given in~\cite{6375330}.
Our results are compared to the running times of algorithms that work for arbitrary graphs and matrices.
Note that the bound of $O(\sqrt{n}m)$ is only known for the perfect matching algorithm.}
\label{fig:sank_times}
\end{figure*}

\end{document}